\newtheorem{definition}{Definition}
\newtheorem{proposition}{Proposition}
\newtheorem{example}{Example}
\newtheorem{theorem}{Theorem}
\newtheorem{lemma}{Lemma}
\newcommand{\LIA}{\mathtt{LIA}}
\newcommand{\LangSC}{\mathcal{L}_{sc}}
\newcommand{\LangSCLIA}{\mathcal{L}_{sc}^{\LIA}}
\newcommand{\LangSCExt}{\mathcal{L}_{sc}^{\mathtt{EXT}}}
\newcommand{\BAT}{\mathcal{D}}
\newcommand{\BATExt}{\BAT^{\mathtt{EXT}}}
\newcommand{\initKB}{\BAT_{S_0}}
\newcommand{\predSet}{\mathcal{P}}
\newcommand{\funcSet}{\mathcal{F}}
\newcommand{\predFuncSet}{\mathcal{Q}}
\newcommand{\codingSym}{\mu}
\newcommand{\actSet}{\mathcal{A}}
\newcommand{\intNum}{\mathbb{Z}}
\newcommand{\bool}{\mathbb{B}}
\newcommand{\act}{\alpha}
\newcommand{\true}{\top}
\newcommand{\false}{\bot}
\newcommand{\set}[1]{\{ #1 \}}
\newcommand{\nmodels}{\not \models}
\newcommand{\Poss}{Poss}
\newcommand{\Exec}{Exec}
\newcommand{\subHis}{\sqsubseteq}
\newcommand{\Trans}{Trans} 
\newcommand{\Final}{Final}
\newcommand{\Do}{Do}
\newcommand{\blank}{\hspace*{6mm}}
\renewcommand{\implies}{\supset}
\newcommand{\sitDom}{\Delta_S}
\newcommand{\highBAT}{\BAT^h}
\newcommand{\lowBAT}{\BAT^l}
\newcommand{\highModel}{M^h}
\newcommand{\lowModel}{M^l}
\newcommand{\forget}{\mathtt{forget}}
\newcommand{\theory}{\mathcal{T}}
\newcommand{\identical}{\underline{\leftrightarrow}}
\newcommand{\bisimilar}{\sim}
\newcommand{\countOper}{\#}
\title{A Syntactic Approach to Computing Complete and Sound Abstraction \\ in the Situation Calculus}
\author {
    Liangda Fang\textsuperscript{\rm 1, 4},
    Xiaoman Wang\textsuperscript{\rm 1},
    Zhang Chen\textsuperscript{\rm 1},
    Kailun Luo\textsuperscript{\rm 2},
    Zhenhe Cui\textsuperscript{\rm 3},
    Quanlong Guan\textsuperscript{\rm 1}\thanks{Corresponding author}
}
\begin{document}

\maketitle

\begin{abstract}
	\looseness=-1
	Abstraction is an important and useful concept in the field of artificial intelligence.
	To the best of our knowledge, there is no syntactic method to compute a sound and complete abstraction from a given low-level basic action theory and a refinement mapping.
	This paper aims to address this issue.
	To this end, we first present a variant of situation calculus, namely linear integer situation calculus, which serves as the formalization of high-level basic action theory.
	We then migrate \citeauthor{BanihashmemiGL2017}'s abstraction framework to one from linear integer situation calculus to extended situation calculus.
	Furthermore, we identify a class of Golog programs, namely guarded actions, so as to restrict low-level Golog programs, and impose some restrictions on refinement mappings. 
	Finally, we design a syntactic approach to computing a sound and complete abstraction from a low-level basic action theory and a restricted refinement mapping.
\end{abstract}
    \begin{links}
	\link{Extended version}{http://arxiv.org/abs/2412.11217}
    \end{links}

\section{Introduction}

\looseness=-1
Abstraction plays an important role in many fields of artificial intelligence, including planning, multi-agent systems and reasoning about actions.
The idea behind abstraction is to reduce a problem over large or even infinite state space to a problem over small and finite state space by aggregating similar concrete states into abstract states, and hence facilitate searching a solution for planning problems and checking some properties against multi-agent systems.

\looseness=-1
For classical planning, abstraction is often used as a heuristic function to guide the search of the solution \cite{HelmertHHN2014,SeippH2018}.
Generalized planning aims to find a uniform solution for possibly infinitely many problem instances \cite{Lev2005,SegJJ2018}. 
\citet{SrivastavaIZ2011} developed a method to generate a finite-state automata-based solution for generalized planning based on abstraction using $3$-valued semantics \cite{SagRW2002}.

\looseness=-1
In multi-agent systems, abstraction is used to accelerate verification of specifications formalized in computation-tree logic \cite{ShohamG2007}, alternating-time temporal logic \cite{LomuscioM2014,BelFF2023} and its epistemic extensions \cite{LomAM2016,BelFL2016}, strategy logic \cite{BelardinelliJM2023}, and $\mu$-calculus \cite{BalTK2006,GruOL2007}.

\looseness=-1
In the area of reasoning about actions, the situation calculus \cite{Reiter2001} is an expressive framework for modeling and reasoning about dynamic changes in the world.
Golog \cite{LevRLLS1997} and its concurrency extension ConGolog \cite{GiacomoLL2000} are agent programming languages based on the situation calculus that are popular means for the control of autonomous agents.
\citet{BanihashmemiGL2017} developed a general abstraction framework based on the situation calculus and ConGolog programming language.
This abstraction framework relates high-level basic action theory (BAT) to low-level BAT via the notion of refinement mapping.
A refinement mapping associates each high-level fluent to a low-level formula and each high-level action to a low-level ConGolog program.
Based on the concept of bisimulation between high-level models and low-level models \cite{Milner1971}, they also provide the definition of sound/complete abstractions of low-level BATs under refinement mappings.

\looseness=-1
Although a lot of efforts have made on abstractions, to the best of our knowledge, there is no syntactic approach to computing complete and sound abstractions given suitable refinement mappings.
This paper aims to address this issue.
To this end, we first present a variant of situation calculus, namely linear integer situation calculus, using linear integer arithmetic as the underlying logic.
We then adopt linear integer situation calculus as the formalization of high-level action theory and extend \citet{BanihashmemiGL2017}'s abstraction framework.
In addition, we identify a class of Golog programs, namely guarded actions, so as to restrict low-level Golog programs and impose some restrictions on refinement mappings.
Finally, given a low-level BAT and a restricted refinement mapping, we devise a syntactic approach to computing the initial knowledge base, the action precondition axioms and the successor state axioms of a complete and sound abstraction.

\section{Preliminaries}
\looseness=-1
In this section, we briefly introduce the concepts of linear integer arithmetic, the situation calculus, Golog programming language and forgetting.

\subsection{Linear Integer Arithmetic}
\looseness=-1
Let $\bool$ be the set of Boolean constants $\set{\true, \false}$ and $\intNum$ the set of integers.
The syntax of $\LIA$-definable terms and of formulas are defined as:
\begin{center}
	$e ::= c \mid x \mid e_1 + e_2 \mid  e_1 - e_2$ \\
	$\phi ::= \true \mid \false \mid p \mid e_1 = e_2 \mid e_1 < e_2 \mid e_1 \cong_c e_2 \mid \neg \phi \mid \phi_1 \land \phi_2 \mid \phi_1 \lor \phi_2$
\end{center}

\noindent
where $c \in \intNum$, $p$ is a Boolean variable and $x$ is a numeric variable.

\looseness=-1
The formula $e_1 \cong_{c} e_2$ means that $e_1$ and $e_2$ are congruence modulo $c$, that is, $e_1 - e_2$ is divisible by $c$.
The formula $\phi_1 \implies \phi_2$ is the shorthand for $\neg \phi_1 \lor \phi_2$, $e_1 \leq e_2$ for $e_1 = e_2 \lor e_1 < e_2$, and $e_1 \not \cong_c e_2$ for $\neg (e_1 \cong_c e_2)$.

\looseness=-1
We remark that the above syntax of $\LIA$ does not involve quantifier symbols.
This is because $\LIA$ admits quantifier elimination, that is, any $\LIA$-definable formula with quantifiers can be equivalently transformed into a quantifier-free one \cite{Coo1972,Mon2010}.

\subsection{Situation Calculus}

\looseness=-1
The standard situation calculus $\LangSC$ \cite{Reiter2001} is a first-order logic language with limited second-order features for representing and reasoning about dynamically changing worlds.
There are three disjoint sorts: \textit{action} for actions, \textit{situation}	for situations, and \textit{object} for everything else.
The constant $S_0$ denotes the initial situation before any action is performed; $do(a, s)$ denotes the successor situation resulting from performing action $a$ in situation $s$; and a binary predicate $s \subHis s'$ means that situation $s$ is a subhistory of situation $s'$.
The predicate $\Poss(a, s)$ describes the precondition of action $a$ in situation $s$.
Predicates and functions whose values may change from one situation to another are called \textit{fluent}, taking a situation term as their last argument.
For simplicity, we assume that there are no functional fluents in the standard situation calculus.

\looseness=-1
A formula $\phi$ with all situation arguments suppressed is called \textit{situation-suppressed}, and $\phi[s]$ denotes the uniform formula obtained from $\phi$ by restoring situation variable $s$ into all fluent names mentioned in $\phi$.
We say the formula $\phi[s]$ is an $\LIA$-definable formula uniform in $s$, if $\phi$ is $\LIA$-definable.
The notion of situation-suppressed and uniform terms can be similarly defined.

\looseness=-1
A basic action theory (BAT) $\BAT$ which describes how the world changes as the result of the available actions consists of the following disjoint sets of axioms: foundational axioms $\Sigma$, initial knowledge base (KB) $\BAT_{S_0}$\footnote{For simplicity, we assume that $\BAT_{S_0}$ is a formula uniform in $S_0$ rather than a set of formulas.}, action precondition axioms $\BAT_{ap}$, successor state axioms (SSAs) $\BAT_{ss}$, unique names axioms $\BAT_{una}$ for actions.

\looseness=-1
We introduce an abbreviation: $\Exec(s) \doteq \forall a, s^*. do(a, s^*) \sqsubseteq s \supset \Poss(a, s^*)$.
The notation $\Exec(s)$ means $s$ is an executable situation, that is, an action history in which it is possible to perform the actions one after the other.

\subsubsection{Extension of the Situation Calculus}
\looseness=-1
First-order logic is the underlying logic of the standard situation calculus.
However, it lacks the mechanism for unbounded iteration \cite{Gra1991} and has the limited counting ability \cite{KusS2017}.
It is necessary to represent unbounded iteration and counting ability in some domains.
To represent these domains in the situation calculus, \citet{CuiLL2021} use first-order logic with transitive closure and counting as the underlying logic.

\looseness=-1
Let $\phi(\vec{x}, \vec{y})$ be a formula with two $k$-tuples $\vec{x}$ and $\vec{y}$ of free variables, and $\vec{u}$ and $\vec{v}$ two $k$-tuples of terms.
$[\mathtt{TC}_{\vec{x}, \vec{y}}\phi(\vec{x}, \vec{y})](\vec{u}, \vec{v})$ is the formula which means the pair $(\vec{u}, \vec{v})$ is contained in the reflexive transitive closure of the binary relation on $k$-tuples that is defined by $\phi$.
The notation $\vec{x} = \vec{y}$ abbreviates for $\bigwedge_{i = 1}^k (x_i = y_i)$ where $x_i$ and $y_i$ are the $i$-th element of $\vec{x}$ and $\vec{y}$, respectively.
Similarly, $\vec{x} \neq \vec{y}$ abbreviates for $\bigvee_i^k (x_i \neq y_i)$.
For simplicity, we use $P^*(\vec{x}, \vec{y})$ for $[\mathtt{TC}_{\vec{x}, \vec{y}}P(\vec{x}, \vec{y})](\vec{x}, \vec{y})$ and $P^+(\vec{x}, \vec{y})$ for $P^*(\vec{x}, \vec{y}) \land \vec{x} \neq \vec{y}$ where $P(\vec{x}, \vec{y})$ is a $2k$-arity predicate.
Let $\phi(\vec{x})$ be a formula with a tuple $\vec{x}$ of free variables.
$\countOper \vec{x}. \phi(\vec{x})$ is a counting term denoting the number of tuples $\vec{x}$ satisfying the formula $\phi$.

\looseness=-1
To extend the situation calculus with counting, we introduce a sort \textit{integer} for integers. 
If $\phi(\vec{x})$ is a formula, then $\countOper\vec{x}. \phi(\vec{x})$ is a term of sort integer, with the same meaning in first-order logic with transitive closure and counting.
A term of sort integer is \textit{closed}, iff every variable is bounded by the existential quantifier $\exists$ or the counting operator $\countOper$.
In addition, we make the assumption that there are finitely many non-number objects.
To do this, we adopt \citet{LiL2020}'s approach.
They introduced an extra integer function symbol $\mu$ to represent a coding of objects into natural numbers.
The extended basic action theory $\BATExt$ of $\LangSCExt$ contains one additional axiom, namely \textit{finitely many objects axiom}, $\BAT_{fma}$, that is, the conjunction of the following sentences:
\begin{enumerate}
	\item $\forall x, y (\codingSym(x) = \codingSym(y) \implies x = y)$;	
	\item $\exists m, n \forall x (m \leq \codingSym(x) \land \codingSym(x) \leq n)$\footnote{The second conjunction in this paper is slightly different from \citet{LiL2020}'s. We introduce a sort \textit{integer} for integers in the situation calculus while they introduced a sort for natural numbers. Hence, our axiom needs to describe the existence of the smallest coding of objects.}.
\end{enumerate}
The first conjunction means that different objects have different codings while the second component says that there are the smallest coding and the largest one.

\looseness=-1
We explicitly provide the state constraints $\BAT_{con}$ that is a sentence of the form $\forall s. \Exec(s) \implies \phi[s]$.
We use the notation $\BAT^{-}_{con}$ for the situation-suppressed formula $\phi$ in $\BAT_{con}$.

\looseness=-1
We hereafter give an extended BAT $\BAT^{BW_l}$ for the blocks world \cite{SlaneyT2001,CookL2003}.
\begin{example} \label{exm:blocksworld}
	In blocks world, there is one table, one gripper and finitely many blocks.
	Each block is either on the table or on the other block.
	The gripper holds at most one block at one time.
	When the gripper is empty, it can pick up a block on which there is no block.
	The gripper can put down the block which it holds on the table or the other block.
	There is a special block $C$.
	Initially, the gripper is empty and at least one block is above $C$.
	The following is the formalization of blocks world.
	Throughout this paper, we assume that free variables are implicitly universally quantified.

	\textbf{Fluents:} 
	\begin{itemize}
		\item $holding(x, s)$: the gripper holds block $x$ in situation $s$;
		\item $on(x, y, s)$: block $x$ is on block $y$ in situation $s$;
	\end{itemize}
	
	\textbf{Actions:} 
	\begin{itemize}
		\item $pickup(x)$: pick up block $x$ from the table;
		
		\item $putdown(x)$: put down block $x$ onto the table;
		
		\item $unstack(x, y)$: pick up block $x$ from block $y$;
		
		\item $stack(x, y)$: put block $x$ onto block $y$;
	\end{itemize}
	
	\textbf{Initial knowledge base $\BAT^{BW_l}_{S_0}$:} \\
	\hspace*{4mm} $\neg \exists x. holding(x, S_0) \land \exists x. on^{+}(x, C, S_0) \land \BAT^{BW_l-}_{con}[S_0]$;
	
	\textbf{Action precondition axioms $\BAT^{BW_l}_{ap}$:}           
	\begin{itemize}
		\item $Poss(pickup(x), s) \equiv \\ \blank \neg (\exists y) on(y, x, s) \land \neg (\exists y) holding(y, s)$;
		
		\item $Poss(putdown(x), s) \equiv holding(x, s)$;
		
		\item $Poss(unstack(x, y), s) \equiv \\ \blank on(x, y, s) \land \neg (\exists z) on(z, x, s) \land \neg (\exists z) holding(z, s)$;
		
		\item $Poss(stack(x, y), s) \equiv \\ \blank holding(x, s) \land \neg \exists z. on(z, y, s)$;
	\end{itemize}
	
	\textbf{Successor state axioms $\BAT^{BW_l}_{ss}$:}
	\begin{itemize}
		\item $on(x, y, do(a, s)) \equiv [a = stack(x, y)] \lor \\ \blank [on(x, y, s) \land a \neq unstack(x, y)]$;
		
		\item $holding(x, do(a, s)) \equiv \\ \blank [a = pickup(x) \lor \exists y (a = unstack(x, y))] \lor  \\ \blank [holding(x, s) \land a \neq putdown(x) \land \\ \blank \hspace*{1mm} \forall y  (a \neq stack(x, y))]$;

	\end{itemize}
	
	\textbf{State constraints $\BAT^{BW_l-}_{con}$:} the conjunction of the following sentences

	\begin{itemize}
		\item $\neg on^+(x, x)$;
		\item $on^+(x, y) \land on^+(x, z) \implies [y = z \lor on^+(y, z) \lor on^+(z, y)]$;
		\item $on^+(y, x) \land on^+(z, x) \implies [y = z \lor on^+(y, z) \lor on^+(z, y)]$;
		\item $holding(x) \land holding(y) \implies x = y$;
		\item $holding(x) \implies \forall y [\neg on(x, y) \land \neg on(y, x)]$;
		\item $on(x, y) \implies \neg holding(x) \land \neg holding(y)$. \qed
	\end{itemize} 
\end{example}

\looseness=-1
The first three sentences of $\BAT^{BW_l-}_{con}$ together say that the predicate $on(x, y)$ defines a collection of linear orders.
The four sentence requires that $holding(x)$ is \textit{exclusive}, that is, at most one block is holding.
The last two sentences mean that the two predicates $holding(x)$ and $on(x, y)$ are \textit{mutex}, that is, (1) if a block $x$ is holding, then no block $y$ is on or under block $x$, and (2) if a block $x$ is on block $y$, then neither of them is holding.

\looseness=-1
Our state constraints are slightly different from those in \cite{CookL2003}.
First, they used a primitive predicate $above(x, y)$ to represent block $x$ is above block $y$ with the same meaning of the transitive closure $on^+(x, y)$ we use in this paper.
The two state constraints $above(x, y) \land above(y, z) \implies above(x, z)$ and $above(x, y) \implies [(\exists z) on(x, z) \lor (\exists w) on(w, y)]$, presented in \cite{CookL2003}, are redundant.
Second, the finitely many objects axiom implies that every tower has a bottom and a top block.
The two abbreviations $ontable \doteq \neg (\exists y) above(x, y)$ and $clear \doteq \neg (\exists y) above(y, x)$ and the two state constraints $ontable(x) \lor \exists y (above(x, y) \land ontable(y))$ and $clear(x) \lor \exists y (above(y, x) \land clear(y))$, defined in \cite{CookL2003}, are unnecessary.
Third, blocks world in this paper involves an additional predicate $holding(x)$.
We need three extra state constraints for $holding(x)$.

\subsection{Golog}
\looseness=-1
To represent and reason about complex actions, \citet{LevRLLS1997} introduced a programming language, namely Golog.
The syntax of Golog programs is defined as:
\begin{center}
	$\delta ::= nil \mid a \mid \phi? \mid \delta_{1};\delta_{2} \mid  \delta_{1} | \delta_{2} \mid \pi x.\delta \mid \delta^{*}$
\end{center}
where $nil$ is an empty program; $a$ is an action term, possibly with variables; $\phi?$ is a test action which tests whether $\phi$ holds in the current situation where $\phi$ is a situation-suppressed formula; $\delta_{1}; \delta_{2}$ is a sequential structure of $\delta_{1}$ and $\delta_{2}$; $\delta_{1} \mid \delta_{2}$ is the non-deterministic choice between $\delta_{1}$ and $\delta_{2}$; $\pi x. \delta$ is the program $\delta$ with some non-deterministic choice of a legal binding for variable $x$; $\delta^{*}$ means $\delta$ executes zero or more times.
A Golog program is \textit{closed}, iff every variable is bounded by the existential quantifier $\exists$, the counting operator $\countOper$, or the choice operator $\pi$.

\looseness=-1
The semantics of Golog programs is specified by the notation $\Do(\delta, s, s')$, which means that $s'$ is a legal terminating situation of an execution of $\delta$ starting in situation $s$.
There are two ways to define $\Do(\delta, s, s')$.
\citet{LevRLLS1997} treated Golog programs $\delta$ as an additional extralogical symbols and hence $\Do(\delta, s, s')$ is expanded into a formula by induction on the structure of $\delta$.
However, the above treatment only works for programs without concurrency and cannot qualify over programs.
\citet{GiacomoLL2000} encoded ConGolog programs, the concurrency extension to Golog, as terms in the situation calculus, and define $\Do(\delta, s, s')$ as an abbreviation: 
\begin{center}
	$\Do(\delta, s, s') \doteq \exists \delta'.\Trans^{*}(\delta, s, \delta', s') \land \Final(\delta', s')$
\end{center}
where $\Trans(\delta, s, \delta', s')$ means that a single step of program $\delta$ starting in situation $s$ leads to situation $s'$ with the remaining program $\delta'$ to be executed, $\Trans^{*}$ denotes the reflexive transitive closure of $\Trans$, and $\Final(\delta, s)$ means that program $\delta$ may legally terminate in situation $s$.

\looseness=-1
The semantics of both approaches are equivalent for Golog programs (Theorem 1 in \cite{GiacomoLL2000}).
In this paper, we adopt \citet{LevRLLS1997}'s approach since we use only Golog programs and do not need to qualify over programs.

\subsection{Forgetting}
\looseness=-1
The concept of forgetting dates back to \cite{Boole1854}, who first considered forgetting in propositional logic.
\citet{LinR1994} studied forgetting in first-order logic.

\begin{definition} \label{def:identical} \rm 
	Let $\predFuncSet$ be a set of predicate and functional symbols.
	We say two structures $M_1$ and $M_2$ are $\predFuncSet$-identical, written $M_1 \identical_{\predFuncSet} M_2$, if $M_1$ and $M_2$ agree on everything except possibly on the interpretation of every symbol of $\predFuncSet$.
\end{definition}

\begin{definition}  \label{def:forget} \rm 
	Let $\theory$ be a theory and $\predFuncSet$ a set of predicate and functional symbols. 
	A theory $\theory'$ is a result of forgetting $\predFuncSet$ in $\theory$, written $\forget(\theory, \predFuncSet) \equiv \theory'$, if for every structure $M'$, $M' \models \theory'$ iff there is a model $M$ of $\theory$ s.t. $M \identical_{\predFuncSet} M'$.
\end{definition}

\looseness=-1
Forgetting a set of symbols results in a weak theory that has the same set of logical consequences irrelevant to the forgotten symbols as the original theory.

\begin{proposition}  \label{prop:forget}
	Let $\theory$ be a theory and $\predFuncSet$ a set of predicate and functional symbols.
	Then, $\theory \models \forget(\theory, \predFuncSet)$, and for any sentence $\phi$ wherein $\predFuncSet$ does not appear, $\theory \models \phi$ iff $\forget(\theory, \predFuncSet) \models \phi$.
\end{proposition}

\looseness=-1
If two formulas $\phi$ and $\psi$ are equivalent under a background theory $\theory$, then the result of forgetting a set of symbols in the union of  $\theory$ and $\psi$ and that in the union $\theory$ and $\psi$ are equivalent.

\begin{proposition} \label{prop:forgetUnderTheory}
	Let $\theory$ be a theory, $\phi$ and $\psi$ two formulas, and $\predFuncSet$ a set of predicate and functional symbols s.t. $\theory \models \phi \equiv \psi$.
	Then, $\forget(\theory \cup \set{\phi}, \predFuncSet) \equiv \forget(\theory \cup \set{\psi}, \predFuncSet)$.
\end{proposition}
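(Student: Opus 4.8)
The plan is to reduce the claim to a purely semantic statement about model classes, and then to unfold the definition of forgetting. Since $\forget(\cdot, \predFuncSet)$ is characterized in Definition~\ref{def:forget} entirely through the models of its input theory, it suffices to show that $\theory \cup \set{\phi}$ and $\theory \cup \set{\psi}$ have exactly the same models; logical equivalence of the two forgetting results will then follow immediately, where ``equivalent'' is understood as having the same class of models (recall that the result of forgetting is only determined up to logical equivalence).

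First I would establish the model-class coincidence. Let $M$ be an arbitrary structure. If $M \models \theory \cup \set{\phi}$, then in particular $M \models \theory$, and since by hypothesis $\theory \models \phi \equiv \psi$, we obtain $M \models \phi \equiv \psi$; combined with $M \models \phi$ this yields $M \models \psi$, hence $M \models \theory \cup \set{\psi}$. The converse direction is symmetric, interchanging the roles of $\phi$ and $\psi$. Therefore $\theory \cup \set{\phi}$ and $\theory \cup \set{\psi}$ are satisfied by precisely the same structures.

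Next I would invoke Definition~\ref{def:forget} directly. For an arbitrary structure $M'$, we have $M' \models \forget(\theory \cup \set{\phi}, \predFuncSet)$ iff there exists a model $M$ of $\theory \cup \set{\phi}$ with $M \identical_{\predFuncSet} M'$. By the model-class coincidence just established, such an $M$ is a model of $\theory \cup \set{\phi}$ iff it is a model of $\theory \cup \set{\psi}$, so the existential condition is unchanged when $\phi$ is replaced by $\psi$. Hence $M' \models \forget(\theory \cup \set{\phi}, \predFuncSet)$ iff $M' \models \forget(\theory \cup \set{\psi}, \predFuncSet)$. As $M'$ was arbitrary, the two forgetting results have identical model classes and are thus equivalent, as required.

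I do not anticipate a genuine obstacle here: the argument is essentially a semantic unfolding, and the only point that warrants care is conceptual rather than technical, namely that the definition of forgetting references the input theory solely via its models (the existential ``there is a model $M$ of $\theory$''), so that replacing a theory by any other theory with the same models cannot affect the outcome. Everything reduces to the elementary observation that adding, under a fixed background $\theory$, either of two $\theory$-equivalent formulas carves out the same class of structures.
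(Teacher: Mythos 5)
Your proof is correct and follows essentially the same route as the paper's: establish that $\theory \cup \set{\phi}$ and $\theory \cup \set{\psi}$ have the same models, then conclude via the fact that forgetting depends only on the model class of its input. The paper's version is just a two-line sketch of exactly these two steps; yours fills in the details.
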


\section{Linear Integer Situation Calculus}
\looseness=-1
In this section, we propose a variant of the situation calculus based on $\LIA$, namely linear integer situation calculus, that serves as a formalization of high-level BATs.
The main idea is to impose a syntactic constraint of $\LIA$ on the situation calculus.
In the following, we use $\LangSCLIA$ to denote the language of linear integer situation calculus.

\looseness=-1
Linear integer situation calculus $\LangSCLIA$ involves only three sorts integers, actions and situations and allows functional fluents.
In addition, $\LangSCLIA$ obeys the following restrictions:
(1) every predicate and functional fluent contains exactly one argument of sort situation;
(2)	the domain of every functional fluent is the set of integers; 
(3)	every action function has no argument, and hence is a ground action.

\looseness=-1
The linear integer basic action theory (LIBAT) $\BAT^{\LIA} = \Sigma \cup \BAT_{S_0} \cup \BAT_{ap} \cup \BAT_{ss} \cup \BAT_{una}$ is 
\begin{itemize}
\item The foundational axioms for situations $\Sigma$ and the set of unique names axioms for actions $\BAT_{una}$ are the same as those in the standard BAT $\BAT$.

\item Initial knowledge base $\BAT_{S_0}$: a formula $\phi[S_0]$ that is an $\LIA$-definable formula uniform in $S_0$.

\item Action precondition axioms $\BAT_{ap}$: for each action $A$, there is an axiom of the form $\Poss(A, s) \equiv \Pi_A[s]$, where $\Pi_A[s]$ is an $\LIA$-definable formula uniform in $s$.

\item Successor state axioms $\BAT_{ss}$: for each predicate fluent $P$, there is one axiom of the form $P(do(a, s)) \equiv \Phi_P(a, s)$ where $\Phi_P(a, s)$ is defined as:
\begin{center}
    $[\bigvee \limits_{i = 1}^{m} (a = \act^+_i \land \gamma^+_{i}[s]) ] \lor [P(s) \land \neg (\bigvee \limits_{j = 1}^{n} (a = \act^-_j \land \gamma^-_{j}[s]))]$
\end{center}
where each $\gamma^+_{i}[s]$ and $\gamma^-_{j}[s]$ is an $\LIA$-definable formula uniform in $s$;

and for each functional fluent $f$, there is one axiom of the form $f(do(a, s)) = y \equiv \Phi_f(y, a, s)$ where $\Phi_f(y, a, s)$ is defined as:

\begin{center}
$[\bigvee\limits_{i = 1}^{m} (y = t_{i}[s] \land a = \act^+_i \land \gamma^+_{i}[s])] \lor$ \\ 
$[y = f(s) \land \neg (\bigvee\limits_{j = 1}^{n} (a = \act^-_j \land \gamma^-_{j}[s]))]$
\end{center}

where each $\gamma^+_{i}[s]$ and $\gamma^-_{j}[s]$ is an $\LIA$-definable formula uniform in $s$ and each $t_{i}[s]$ is an $\LIA$-definable term uniform in $s$.

\end{itemize}

\looseness=-1
Given an action $\alpha$, $\Phi_P(\alpha, s)$ can be simplified as a formula uniform in $s$, denoted by $\tilde{\Phi}_{P, \alpha}[s]$, via utilizing the unique names axioms for actions to remove the comparability relation between actions.
Similarly, we use $\tilde{\Phi}_{f, \alpha}(y)[s]$ to denote the formula uniform in $s$ obtained from $\Phi_f(y, \alpha, s)$ via simplification.

We close this section by providing a LIBAT $\BAT^{BW_h}$ of blocks world.
\begin{example} \label{exm:BATLIA}
\looseness=-1

\textbf{Fluents:}
\begin{itemize}
	\item $Holding(s)$: the gripper holds a block in situation $s$;
	
	\item $Num(s)$: the number of blocks that are above block $C$ in situation $s$;
\end{itemize}

\textbf{Actions:} 
\begin{itemize}
	\item $PickAboveC$: pick up the topmost block on the tower that contains block $C$;
	
	\item $Putdown$: put down the block which the gripper holds onto the table;
\end{itemize}

\textbf{Initial knowledge base $\BAT^{BW_h}_{S_0}$:} \\ 
\hspace*{4mm} $\neg Holding(s) \land Num(s) > 0$;

\textbf{Action precondition axioms $\BAT^{BW_h}_{ap}$:}        
\begin{itemize}
\item $Poss(PickAboveC, s) \equiv \\ \blank \neg Holding(s) \land Num(s) > 0$;		

\item $Poss(Putdown, s) \equiv Holding(s)$;
\end{itemize}

\textbf{Successor state axioms $\BAT^{BW_h}_{ss}$:}
\begin{itemize}
\item $Holding(do(a, s)) \equiv [a = PickAboveC] \lor \\ \blank [Holding(s) \land a \neq Putdown]$;

\item $Num(do(a, s)) = y \equiv \\ \blank [y = Num(s) - 1 \land a = PickAboveC] \lor \\ \blank [y = Num(s) \land a \neq PickAboveC]$. 
\qed
\end{itemize}
\end{example}

\section{An Abstraction Framework from $\LangSCLIA$ to $\LangSCExt$}
\looseness=-1
In this section, based on \citet{BanihashmemiGL2017}'s work, we develop an abstraction framework from linear integer situation calculus to extended situation calculus.

\looseness=-1
We assume that we are given two BATs $\BAT^l$ and $\BAT^h$.
We consider $\BAT^l$ formulated in $\LangSCExt$ as the low-level BAT and $\BAT^h$ formulated in $\LangSCLIA$ as the high-level BAT.
The low-level BAT $\BAT^l$ contains a finite set of action types $\actSet^l$ and a finite set of predicate fluents $\predSet^l$.
Given a low-level situation-suppressed formula $\phi$, we use $\predSet^l(\phi)$ for the set of predicate fluents that occur in $\phi$.
The meaning of the two notations $\actSet^h$ and $\predSet^h$ are similar for the high-level BAT $\BAT^h$, and $\funcSet^h$ stands for a finite set of high-level functional fluents.
Given a high-level situation-suppressed formula $\phi$, we use $\predSet^h(\phi)$ (resp. $\funcSet^h(\phi)$) for the set of predicate (resp. functional) fluents that occur in $\phi$.

\looseness=-1
The concept of refinement mapping from $\highBAT$ to $\lowBAT$ is the same as that proposed by \citet{BanihashmemiGL2017} except the following. 
(1) We allow functional fluents in high-level BATs.
(2) The refinement mapping assigns every high-level action to a closed low-level Golog program since every high-level action has no argument.
(3) The refinement mapping assigns every high-level predicate fluent to a closed low-level formula since every high-level predicate fluent contains exactly one argument of sort situation.
(4) The refinement mapping assigns every high-level functional fluent to a closed counting term since the image of every high-level functional fluent is the set of integers.

\begin{definition}\label{def:refMapp} \rm
We say that a function $m$ is a refinement mapping from $\BAT^h$ to $\BAT^l$, iff
\begin{enumerate}
\item for every high-level action $\act$, $m(\act) \doteq \delta_\act$ where $\delta_\act$ is a closed Golog program defined over $\actSet^l$ and $\predSet^l$;

\item for every high-level predicate fluent $P$, $m(P) \doteq \phi_{P}$ where $\phi_{P}$ is a closed situation-suppressed formula defined over $\predSet^l$;

\item for every high-level functional fluent $f$, $m(f) \doteq \countOper \vec{x}. \phi_f(\vec{x})$ where $\phi_f$ is a situation-suppressed formula with a tuple $\vec{x}$ of free variables defined over $\predSet^l$.
\end{enumerate}
\end{definition}

\looseness=-1
Given a refinement mapping $m$ and a high-level situation-suppressed formula $\phi$, $m(\phi)$ denotes the result of replacing every occurrence of high-level predicate fluent $P$ and functional fluent $f$ by the low-level counterpart $m(P)$ and $m(f)$, respectively.
The mapping formula $\Psi_m$ based on $m$, is defined as $[\bigwedge_{P \in \predSet^h} P \equiv m(P) ] \land [\bigwedge_{f \in \funcSet^h} f = m(f)]$.
Intuitively, $\Psi_m$ says that every high-level predicate fluent has the same Boolean value as its corresponding low-level closed formula defined by $m$, and similarly for every high-level functional fluent.

\looseness=-1
\begin{example} \label{exm:refMap}
A refinement mapping $m$ from $\BAT^{BW_h}$ to $\BAT^{BW_l}$ is defined as:
\begin{itemize}
\item $m(Num) \doteq \countOper x. on^{+}(x, C)$;

\item $m(Holding) \doteq \exists x. holding(x)$;

\item $m(PickAboveC) \doteq (\pi x, y) on^+(x, C)?; unstack(x, y)$;

\item $m(Putdown) \doteq \pi x. putdown(x)$.  \qed
\end{itemize}
\end{example}

\looseness=-1
We hereafter introduce the concept of $m$-isomorphism between high-level situations and low-level situations.
We assume that $M^h$ and $M^l$ are models of $\highBAT$ and $\lowBAT$, respectively.
The two assignments $v[s/s_{h}]$ and $v[s/s_l]$ are the same as $v$ except mapping variable $s$ to situation $s_{h}$ and $s_l$, respectively.

\begin{definition}\label{def:isomorphic} \rm
Given a refinement mapping $m$, we say that situation $s_h$ in $M^h$ is $m$-isomorphic to situation $s_l$ in $M^l$, written $s_h \bisimilar_{m}^{M^h, M^l} s_l$, iff
\begin{itemize}
	\item for every high-level predicate fluent $P$ and every variable assignment $v$, $M^h, v[s/s_h] \models P(s)$ iff $M^l, v[s/s_l] \models m(P)[s]$.

	\item for every high-level functional fluent $f$ and every variable assignment $v$, $M^h, v[s/s_h] \models f(s) = y$ iff $M^l, v[s/s_l] \models m(f)[s] = y$.
\end{itemize}
\end{definition}

\looseness=-1
We can get the following results:
\begin{proposition} \label{prop:isomorphic}
	If $s_h \bisimilar_{m}^{M^h, M^l} s_l$, then for every high-level situation-suppressed formula $\phi$ and every variable assignment $v$, $M^h, v[s/s_h] \models \phi[s]$ iff $ M^l, v[s/s_l] \models m(\phi)[s]$.
\end{proposition}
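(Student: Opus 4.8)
The plan is to prove the statement by structural induction on the high-level situation-suppressed formula $\phi$. Because the high-level language is $\LangSCLIA$, such a $\phi$ is built by the $\LIA$ grammar, which is quantifier-free, from (i) comparison atoms $e_1 = e_2$, $e_1 < e_2$ and $e_1 \cong_c e_2$ over integer-valued terms, (ii) Boolean atoms that are high-level predicate fluents $P$, and (iii) the connectives $\neg$, $\land$, $\lor$; so no quantifier case arises. The two base families---predicate-fluent atoms and term comparisons---are exactly where the hypothesis $s_h \bisimilar_{m}^{M^h, M^l} s_l$ is consumed: the predicate clause of Definition~\ref{def:isomorphic} discharges the atoms $P(s)$ directly, while the functional clause must first be lifted, through an auxiliary claim, to arbitrary terms.

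Before the formula induction I would establish the term-level claim: for every situation-suppressed $\LIA$-definable term $e$, the value of $e[s]$ under $\highModel, v[s/s_h]$ equals the value of $m(e)[s]$ under $\lowModel, v[s/s_l]$. This is proved by induction on the structure of $e$. For a constant $c$ we have $m(c) = c$, whose value is fixed in both models. For a numeric variable $x$, since $x$ differs from the situation variable $s$, both $v[s/s_h]$ and $v[s/s_l]$ return $v(x)$. For a functional fluent $f$, the value is preserved by the functional clause of $m$-isomorphism: reading ``$\highModel, v[s/s_h] \models f(s) = y$ iff $\lowModel, v[s/s_l] \models m(f)[s] = y$'' for \emph{every} $v$, hence for every value assigned to $y$, says precisely that $f(s)$ and $m(f)[s]$ denote the same integer. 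The cases $e_1 + e_2$ and $e_1 - e_2$ then follow from the induction hypothesis, using that $m$ distributes over $+$ and $-$ and that both models interpret integer arithmetic identically.

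With the term claim in hand, the formula induction is routine. The atoms $\true$ and $\false$ are immediate; a predicate-fluent atom $P$ is handled by the predicate clause of Definition~\ref{def:isomorphic}; and each comparison atom---$e_1 = e_2$, $e_1 < e_2$, or $e_1 \cong_c e_2$---holds in $\highModel$ under $v[s/s_h]$ iff the corresponding arithmetic relation holds between the values of $e_1[s]$ and $e_2[s]$, which by the term claim coincide with the values of $m(e_1)[s]$ and $m(e_2)[s]$, iff $m$ applied to the atom holds in $\lowModel$ under $v[s/s_l]$ (as $m$ leaves the comparison symbol and the modulus $c$ untouched). The connectives $\neg$, $\land$ and $\lor$ follow from the induction hypothesis together with the fact that $m$ commutes with them, completing the induction.

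The only genuinely delicate point, and where I expect to spend the most care, is the functional-fluent base case of the term claim: one must read the ``for every assignment $v$'' quantification in Definition~\ref{def:isomorphic} as asserting equality of the \emph{denotations} of $f(s)$ and $m(f)[s]$ rather than mere equisatisfiability, and one must track the auxiliary variable $y$ and the free numeric variables of $\phi$ consistently across the two assignments $v[s/s_h]$ and $v[s/s_l]$, which agree off $s$. Everything else reduces to the observation that $m$ acts homomorphically on term and formula structure, so that applying $m$ and then evaluating in $\lowModel$ matches evaluating in $\highModel$; the difference between the two underlying logics ($\LIA$ above versus first-order logic with transitive closure and counting below) never intervenes, since the closed formulas $m(P)$ and counting terms $m(f)$ are treated as opaque once their truth values and denotations are pinned down by the base cases.
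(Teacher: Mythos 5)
Your proposal is correct and follows the same route as the paper, whose entire proof is the single line ``by induction on the structure of $\phi$''; you have simply spelled out that induction in full, including the auxiliary term-level claim needed to handle functional fluents and $\LIA$ arithmetic. The care you take with the functional-fluent base case (reading the ``for every $v$'' clause of Definition~\ref{def:isomorphic} as equality of denotations) is exactly the detail the paper leaves implicit.
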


\begin{proposition} \label{prop:isomorphicForget}
	If $s_h \bisimilar_{m}^{M^h, M^l} s_l$, then for every low-level situation-suppressed formula $\phi$ and every variable assignment $v$, we have $M^l, v[s/s_l] \models \phi[s] \land \lowBAT_{fma}$ implies that $M^h, v[s/s_h] \models \forget(\phi \land \lowBAT_{fma} \land \Psi_m, \predSet^l \cup \set{\codingSym})[s]$.
\end{proposition}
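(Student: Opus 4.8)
The plan is to unwind the definition of forgetting (Definition \ref{def:forget}) so that the goal reduces to exhibiting a single witness structure. Since the conclusion is a one-directional implication, and since $M'\models\forget(\theory,\predFuncSet)$ holds exactly when some model of $\theory$ agrees with $M'$ off $\predFuncSet$, it suffices to build a structure $M^*$ over the combined situation-suppressed vocabulary such that (a) $M^*$ satisfies $\theory\doteq\phi\land\lowBAT_{fma}\land\Psi_m$ under $v$, and (b) $M^*$ is $(\predSet^l\cup\set{\codingSym})$-identical to the situation-suppressed reduct of $M^h$ at $s_h$. Definition \ref{def:forget} then yields $M^h, v[s/s_h]\models\forget(\theory,\predSet^l\cup\set{\codingSym})[s]$ directly.

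First I would construct $M^*$ by gluing the two given models at the two matched situations. Concretely, $M^*$ takes its object domain together with its interpretations of $\codingSym$ and of every low-level fluent $p\in\predSet^l$ from $M^l$ frozen at $s_l$ (so $p^{M^*}(\vec o)$ is the truth value of $p(\vec o, s_l)$ in $M^l$), its integer domain and arithmetic in the standard way, and its interpretation of every high-level fluent from $M^h$ frozen at $s_h$ (each suppressed $P$ receives the truth value of $P(s_h)$, each $f$ the value of $f(s_h)$). By construction $M^*$ differs from $M^h$ only on $\predSet^l$ and $\codingSym$, which gives clause (b) immediately.

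Next I would check the three conjuncts of $\theory$ in $M^*$. The conjunct $\phi$ holds because $\phi$ is built solely from $\predSet^l$, which in $M^*$ are copied from $M^l$ at $s_l$, and $M^l, v[s/s_l]\models\phi[s]$ by hypothesis; likewise $\lowBAT_{fma}$ holds because the object domain and $\codingSym$ are inherited from $M^l$, where the axiom is valid. The crucial conjunct is $\Psi_m$. For each high-level predicate fluent $P$, the value of $P$ in $M^*$ is the truth of $P(s_h)$ in $M^h$, whereas $m(P)$ is a closed formula over $\predSet^l$ whose value in $M^*$ equals that of $m(P)[s]$ in $M^l$ at $s_l$; $m$-isomorphism (Definition \ref{def:isomorphic}) says these coincide, so $M^*\models P\equiv m(P)$. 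The functional case is identical: $m(f)=\countOper\vec x.\phi_f(\vec x)$ evaluated in $M^*$ is the number of object tuples satisfying $\phi_f$ at $s_l$ in $M^l$, which matches $f(s_h)$ in $M^h$ again by Definition \ref{def:isomorphic}. Here $\lowBAT_{fma}$ is exactly what guarantees this count is a well-defined integer.

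The step I expect to be the main obstacle is making the gluing precise across the sort mismatch: $\LangSCLIA$ has no \textit{object} sort, yet $\lowBAT_{fma}$ and the counting terms in $\Psi_m$ quantify over objects, so $M^*$ and the $(\predSet^l\cup\set{\codingSym})$-identical comparison with $M^h$ must be carried out over an object-augmented view of $M^h$. I would handle this by regarding $M^h$ as silently equipped with $M^l$'s object domain (with $\predSet^l,\codingSym$ left unconstrained, since these are precisely the forgotten symbols), and by appealing to the finiteness forced by $\lowBAT_{fma}$ to keep every counting term a genuine integer. A secondary care is the treatment of the free variables of $\phi$: they are carried uniformly by the assignment $v$ in both models, so the same $v$ is used when evaluating $\phi$, $m(P)$ and $m(f)$ in $M^*$, which is what lets the hypothesis and Definition \ref{def:isomorphic} transfer verbatim.
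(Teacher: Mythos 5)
Your proposal is correct and follows essentially the same route as the paper's own proof: both construct a single witness structure that inherits the low-level fluents and $\codingSym$ from $M^l$ at $s_l$ and the high-level fluents from $M^h$ at $s_h$, verify $\phi$, $\lowBAT_{fma}$ and $\Psi_m$ in that structure using the $m$-isomorphism of Definition \ref{def:isomorphic}, and conclude via Definition \ref{def:forget}. Your write-up is in fact slightly more explicit than the paper's, notably in flagging the object-sort mismatch between $\LangSCLIA$ and $\LangSCExt$, which the paper's proof passes over silently.
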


\looseness=-1
To relate high-level and low-level theories, we resort to $m$-bisimulation relation between high-level models and low-level models.
The notation $\sitDom^{M}$ stands for the situation domain of $M$.

\begin{definition} \label{def:m-bisim} \rm
A relation $B \subseteq \sitDom^{M^h} \times \sitDom^{M^l}$ is an $m$-bisimulation between $M^h$ and $M^l$, iff $(s_h, s_l) \in B$ implies that

\begin{description}
	\item[atom] $s_h$ and $s_l$ are $m$-isomorphic.

	\item[forth] for every high-level action $\act$ and every low-level situation $s_l'$ s.t. $M^l, v[s/s_l, s'/s_l'] \models \Do(m(\act), s, s')$, there is a high-level situation $s_h'$ s.t. $M^h, v[s/s_h, s'/s_h'] \models \Poss(\act, s) \land s' = do(\act, s)$ and $(s'_h, s_l') \in B$.

	\item[back] for every high-level action $\act$ and every high-level situation $s_h'$ s.t. $M^h, v[s/s_h, s'/s_h'] \models \Poss(\act, s) \land s' = do(\act, s)$, there is a  low-level situation $s_l'$ s.t. $M^l, v[s/s_l, s'/s_l'] \models \Do(m(\act), s, s')$ and $(s'_h, s_l') \in B$.
\end{description}		
\end{definition}

\looseness=-1
We say $M^h$ is $m$-bisimilar to $M^l$, written $M^h \bisimilar_m M^l$, iff there is an $m$-bisimulation $B$ between $M^h$ and $M^l$ s.t. $(S_0^{M^h}, S_0^{M^l}) \in B$.

\looseness=-1
Based on the notion of $m$-bisimulation, we hereafter provide definitions of sound and complete abstraction of low-level theories.

\begin{definition}\rm

We say $\BAT^h$ is a \textit{sound abstraction} of $\BAT^l$ relative to a refinement mapping $m$, iff for every model $M^l$ of $\BAT^l$, there is a model $M^h$ of $\BAT^h$ s.t. $M^h \bisimilar_m M^l$.
\end{definition}

\looseness=-1
Let $m$ be a refinement mapping.
Let $\pi_m$ be a low-level program $[m(\act_1) | \cdots | m(\act_n)]$ where each $\act_i \in \actSet^h$.
Intuitively, $\pi_m$ denotes any refinement of any high-level primitive actions.
The iteration $\pi^*_m$ denotes any refinement of any high-level action sequence.

\begin{theorem} \label{thm:soundAbs}
	$\BAT^h$ is a sound abstraction of $\lowBAT$ relative to a refinement mapping $m$ iff the following hold
	\begin{enumerate}
		\item $\lowBAT_{S_0} \land \lowBAT_{fma} \models m(\phi)[S_0]$ where $\highBAT_{S_0} = \phi[S_0]$;
		
		\item $\lowBAT \models \forall s. \Do(\pi^*_m, S_0, s) \implies \\
		 \blank \bigwedge_{\act \in \ \actSet^h} (m(\Pi_{\act})[s] \equiv \exists s' \Do(m(\act), s, s'))$; 
		
		\item $\lowBAT \models \forall s. \Do(\pi^*_m, S_0, s) \implies \\
		\blank \bigwedge_{\act \in \ \actSet^h} \forall s' \{ \Do(m(\act), s, s') \implies \\
		\blank \blank \bigwedge_{P \in \predSet^h} (m(P)[s'] \equiv m(\tilde{\Phi}_{P, \act})[s]) \land \\
		\blank \blank \bigwedge_{f \in \funcSet^h} \forall y(m(f)[s'] = y \equiv m(\tilde{\Phi}_{f, \alpha})(y)[s]) \}$.
	\end{enumerate}
\end{theorem}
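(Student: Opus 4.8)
The statement is an \emph{iff}, so I would establish the two directions separately, treating the construction of a high-level model (the direction \emph{conditions $\Rightarrow$ sound abstraction}) as the substantial part and the extraction of the three entailments from a given bisimulation as the more routine part.

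\textbf{Conditions $\Rightarrow$ sound abstraction.} Fix an arbitrary model $\lowModel \models \lowBAT$; the plan is to build a witness $\highModel \models \highBAT$ with $\highModel \bisimilar_m \lowModel$. I would take $\highModel$ to be the standard tree model of situations over $\actSet^h$ dictated by $\Sigma$: set each high-level predicate (resp.\ functional) fluent at $S_0^{\highModel}$ to the truth value of $m(P)[S_0]$ (resp.\ the integer value of $m(f)[S_0]$) in $\lowModel$, fix $\Poss(\act,s)$ by the precondition axiom $\Poss(\act,s)\equiv\Pi_\act[s]$, and propagate fluent values to every successor $do(\act,s)$ by the simplified successor state axioms, i.e.\ $P(do(\act,s)):=\tilde{\Phi}_{P,\act}[s]$ and analogously for $f$. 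By construction $\highModel$ satisfies $\BAT_{ap}$, $\BAT_{ss}$, $\BAT_{una}$ and $\Sigma$; that it satisfies $\highBAT_{S_0}=\phi[S_0]$ follows from condition~1 together with Proposition~\ref{prop:isomorphic}. I would then let $B$ relate the situation reached in $\highModel$ by a sequence $\act_1,\dots,\act_k$ of high-level actions to every low-level situation reached from $S_0^{\lowModel}$ by the corresponding refinement $m(\act_1);\cdots;m(\act_k)$ (a $\pi^*_m$-execution), and verify Definition~\ref{def:m-bisim}: \textbf{atom} is proved by induction on $k$, the base case being the initial agreement and the step using condition~3 to match the SSA-propagated high-level values with $m(P)[s_l']$ and $m(f)[s_l']$ at the low level, pulled back through Proposition~\ref{prop:isomorphic}; \textbf{forth} and \textbf{back} follow from condition~2, which equates executability of $m(\act)$ with $m(\Pi_\act)[s]$ and hence, by isomorphism, with $\Poss(\act,s)$ in $\highModel$.

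\textbf{Sound abstraction $\Rightarrow$ conditions.} Each condition is an entailment from the low-level theory, so I would fix a model of the relevant low-level fragment, invoke soundness to obtain $\highModel \models \highBAT$ together with an $m$-bisimulation $B$ through the initial situations, and read off the condition. A preliminary lemma, proved by induction on the length of the $\pi^*_m$-execution using \textbf{forth}, shows that every low-level situation $s_l$ with $\Do(\pi^*_m,S_0,s_l)$ is $B$-related to some high-level $s_h$. For condition~2, at such a pair \textbf{forth}/\textbf{back} give $\exists s'.\Do(m(\act),s,s')$ iff $\Poss(\act,s_h)$, while $\Poss(\act,s_h)$ iff $\Pi_\act[s_h]$ iff $m(\Pi_\act)[s_l]$ by the precondition axiom of $\highModel$ and Proposition~\ref{prop:isomorphic}. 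For condition~3, \textbf{forth} supplies $s_h'=do(\act,s_h)$ with $(s_h',s_l')\in B$; then \textbf{atom} at $(s_h',s_l')$, the SSA of $\highModel$ at $s_h'$, and Proposition~\ref{prop:isomorphic} at $(s_h,s_l)$ chain together to yield $m(P)[s_l']\equiv m(\tilde{\Phi}_{P,\act})[s_l]$ and the functional analogue. Condition~1 is immediate from \textbf{atom} at the initial pair and $\highModel\models\phi[S_0]$; the only wrinkle is that its hypothesis is the weaker $\lowBAT_{S_0}\land\lowBAT_{fma}$, so I would first extend an arbitrary model of this fragment to a full model of $\lowBAT$ by relative satisfiability \cite{Reiter2001} before applying soundness.

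\textbf{Main obstacle.} The delicate point is the well-definedness of $\highModel$ in the first direction: a refinement $m(\act)$ is a nondeterministic Golog program, so it may carry a single low-level $s_l$ to many terminating situations $s_l'$, yet the bisimulation forces all of them to be related to the one successor $do(\act,s_h)$. Consistency therefore requires that all these $s_l'$ agree on the value of every $m(P)$ and $m(f)$, which is precisely what condition~3 guarantees; making this coupling between the purely syntactic SSA propagation in $\highModel$ and the semantic low-level values rigorous, uniformly along all $\pi^*_m$-reachable situations, is the crux of the argument. A secondary point to keep in mind is that $m(f)$ is a counting term, so one must use $\lowBAT_{fma}$ to ensure it always denotes a well-defined integer.
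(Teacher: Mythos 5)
Your proposal is correct and follows essentially the same route as the paper: the same relation $B$ (high-level situations $do(\vec{\act},S_0)$ paired with low-level $\Do(m(\vec{\act}),S_0,\cdot)$-reachable situations), the same use of conditions~2 and~3 for the forth/back and atom clauses, the same reachability lemma for the forward direction, and the same appeal to relative satisfiability to handle the weaker hypothesis of condition~1. The only cosmetic difference is that you build $\highModel$ by explicit SSA propagation where the paper fixes only the initial interpretation and invokes a relative-satisfiability lemma to extend it, which amounts to the same construction.
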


\begin{definition} \rm
	We say $\BAT^h$ is a \textit{complete abstraction} of $\BAT^l$ relative to a refinement mapping $m$, iff for every model $M^h$ of $\BAT^h$, there is a model $M^l$ of $\BAT^l$ s.t. $M^h \bisimilar_m M^l$.
\end{definition}

\looseness=-1
If the high-level BAT $\BAT^h$ is a sound abstraction of the low-level BAT $\lowBAT$, then deciding if $\highBAT$ is a complete abstraction of $\lowBAT$ reduces to determining if the high-level initial knowledge base $\highBAT_{S_0}$ is the result of forgetting all low-level predicate fluents and the coding symbol in the union of the low-level initial knowledge base, the finitely many objects axiom and the mapping formula.

\begin{theorem} \label{thm:compSoundAbs}
	Suppose that $\BAT^h$ is a sound abstraction of $\lowBAT$ relative to a refinement mapping $m$.
	Then, $\highBAT$ is a complete abstraction of $\lowBAT$ relative to $m$ iff $\highBAT_{S_0} \equiv \forget(\phi \land \lowBAT_{fma} \land \Psi_m, \predSet^l \cup \set{\codingSym})[S_0]$ where $\phi[S_0] = \lowBAT_{S_0}$.
\end{theorem}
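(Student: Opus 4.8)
The plan is to prove both directions of the biconditional by mediating between the semantic notion of $m$-bisimulation and the syntactic operation of forgetting through a single bridging fact: once the initial situations of a high-level and a low-level model are $m$-isomorphic, the soundness conditions of Theorem~\ref{thm:soundAbs} force the two models to be fully $m$-bisimilar. Write $\psi[S_0]$ for $\forget(\phi \land \lowBAT_{fma} \land \Psi_m, \predSet^l \cup \set{\codingSym})[S_0]$, where $\phi[S_0] = \lowBAT_{S_0}$; note that $\psi$ is uniform in $S_0$ and mentions only high-level fluents. Throughout I would use relative satisfiability: a model of a BAT is determined, up to the foundational and unique-names structure, by its interpretation at $S_0$, and conversely every interpretation of the initial situation that satisfies the initial KB (together with $\lowBAT_{fma}$ and the initial instance of the state constraints) extends to a model of the whole BAT. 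This lets me pass freely between ``$S_0$-structures'' and full BAT models while preserving the interpretation of the fluents at $S_0$.

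The first step is a bridging lemma: given sound abstraction, if $M^h \models \highBAT$, $M^l \models \lowBAT$, and $S_0^{M^h}$ is $m$-isomorphic to $S_0^{M^l}$, then $M^h \bisimilar_m M^l$. I would define $B$ as the least relation containing $(S_0^{M^h}, S_0^{M^l})$ and closed under taking, for each high-level action $\act$, a pair $(do(\act, s_h), s_l')$ whenever $(s_h, s_l) \in B$, $M^l \models \Do(m(\act), s_l, s_l')$, and the two successors are $m$-isomorphic. Every $s_l$ occurring in $B$ satisfies $\Do(\pi_m^*, S_0, s_l)$ by construction, so conditions 2 and 3 of Theorem~\ref{thm:soundAbs} apply at each pair. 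For \textbf{forth} and \textbf{back} I would chain $m$-isomorphism through the axioms: condition 2 converts $\exists s'.\Do(m(\act), s_l, s')$ into $m(\Pi_{\act})[s_l]$ and, via Proposition~\ref{prop:isomorphic}, into $\Poss(\act, s_h)$ (and conversely); condition 3 together with the high-level SSA and Proposition~\ref{prop:isomorphic} shows $m(P)[s_l'] \equiv m(\tilde{\Phi}_{P,\act})[s_l] \equiv \tilde{\Phi}_{P,\act}[s_h] \equiv P(do(\act,s_h))$ for every predicate fluent $P$, and analogously $m(f)[s_l'] = y \equiv m(\tilde{\Phi}_{f,\act})(y)[s_l] \equiv \tilde{\Phi}_{f,\act}(y)[s_h] \equiv f(do(\act,s_h)) = y$, so $m$-isomorphism is preserved at each step. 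Hence $B$ is an $m$-bisimulation with $(S_0^{M^h}, S_0^{M^l}) \in B$, giving $M^h \bisimilar_m M^l$.

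For the ``$\Leftarrow$'' direction I assume $\highBAT_{S_0} \equiv \psi[S_0]$ and take any $M^h \models \highBAT$. Then $M^h \models \psi[S_0]$, so by Definition~\ref{def:forget} there is a structure $M$ with $M \models (\phi \land \lowBAT_{fma} \land \Psi_m)[S_0]$ and $M \identical_{\predSet^l \cup \set{\codingSym}} M^h$; in particular $M$ and $M^h$ agree on every high-level fluent at $S_0$, while the low-level reduct of $M$ at $S_0$ satisfies $\lowBAT_{S_0} \land \lowBAT_{fma}$. Extending that reduct to a model $M^l \models \lowBAT$ and using $M \models \Psi_m[S_0]$, I obtain $M^h \models P(S_0)$ iff $M \models m(P)[S_0]$ iff $M^l \models m(P)[S_0]$ (and likewise for functional fluents), i.e.\ $S_0^{M^h}$ is $m$-isomorphic to $S_0^{M^l}$; the bridging lemma then yields $M^h \bisimilar_m M^l$, establishing complete abstraction. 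For ``$\Rightarrow$'' I prove the two entailments separately. Assuming complete abstraction, any $N \models \highBAT_{S_0}$ extends to $M^h \models \highBAT$, which by completeness has a bisimilar $M^l \models \lowBAT$; since the initial situations are $m$-isomorphic and $M^l \models \lowBAT_{S_0} \land \lowBAT_{fma}$, Proposition~\ref{prop:isomorphicForget} gives $M^h \models \psi[S_0]$, hence $N \models \psi[S_0]$, so $\highBAT_{S_0} \models \psi[S_0]$. Conversely, for $N \models \psi[S_0]$, Definition~\ref{def:forget} supplies $M \models (\phi \land \lowBAT_{fma} \land \Psi_m)[S_0]$ with $M \identical_{\predSet^l \cup \set{\codingSym}} N$; extending its low-level reduct to $M^l \models \lowBAT$ and invoking \emph{sound} abstraction gives some $M^h \models \highBAT$ with $M^h \bisimilar_m M^l$. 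Triangulating through the common initial low-level situation, both $N$ (via $\Psi_m$) and $M^h$ (via $m$-isomorphism) have, for each high-level fluent, the same truth value as $m(P)[S_0]$ and the same value as $m(f)[S_0]$; hence $N$ and $M^h$ assign identical values to all high-level fluents at $S_0$, and since $M^h \models \highBAT_{S_0}$ and this formula is uniform in $S_0$ over high-level fluents only, $N \models \highBAT_{S_0}$, giving $\psi[S_0] \models \highBAT_{S_0}$.

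The main obstacle is the bridging lemma, and within it the forth/back verification: I must keep $m$-isomorphism an invariant of $B$ while simultaneously matching executability (condition 2) and the result of every refinement step (condition 3), chaining each equivalence through the high-level SSAs and Proposition~\ref{prop:isomorphic} in both directions. The remaining delicacy is purely bookkeeping around forgetting: correctly reading Definition~\ref{def:forget} as ``the high-level $S_0$-state is realizable by some low-level initial state compatible with $\Psi_m$,'' and ensuring that the extension of an $S_0$-structure to a full BAT model (via relative satisfiability, including the initial instance of the low-level state constraints carried by $\lowBAT_{S_0}$) never disturbs the fluent values at $S_0$.
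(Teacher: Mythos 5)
Your proposal is correct and follows essentially the same route as the paper: your bridging lemma is the paper's Lemma~\ref{lem:compSoundAbs3} (proved, as you do, by combining the conditions of Theorem~\ref{thm:soundAbs} with an inductively constructed bisimulation as in Lemma~\ref{lem:soundAbs}), your use of relative satisfiability matches Lemmas~\ref{lem:relSatLowBAT} and~\ref{lem:relSatHighBAT}, and your triangulation through the common low-level initial situation is exactly how the paper derives $M^h \models \highBAT_{S_0}$ inside Lemma~\ref{lem:compSoundAbs3}. The only cosmetic difference is that for $\highBAT_{S_0} \models \forget(\cdot)[S_0]$ you invoke Proposition~\ref{prop:isomorphicForget} where the paper constructs the witness structure directly; the two are interchangeable.
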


\section{Computing Sound and Complete Abstraction}
\looseness=-1
In this section, we first identify a class of Golog programs, namely \textit{guarded actions}, and connect high-level actions to low-level guarded actions.
We then impose some restrictions on refinement mappings and provide a syntactic approach to computing the initial KB, action precondition axioms and successor state axioms of sound and complete abstraction from a low-level BAT and a restricted refinement mapping.

\subsection{Guarded Actions}
\looseness=-1
We say a Golog program $\delta$ is a \textit{guarded action}, if it is of the form $\pi \vec{x}. \psi(\vec{x})?; A(\vec{x}, \vec{c})$.
A guarded action requires all variables $\vec{x}$ of action $A(\vec{x}, \vec{c})$ to be guarded by the formula $\psi(\vec{x})$.

\begin{definition} \label{def:enabling} \rm
Let $A(\vec{x}, \vec{c})$ be a low-level action with a tuple $\vec{x}$ of variables and a tuple $\vec{c}$ of constants.
Let $\phi(\vec{y})$ and $\psi(\vec{x})$ be two low-level situation-suppressed formulas where $\vec{x}$ includes all of the variables of $\vec{y}$.
We say $A(\vec{x}, \vec{c})$ is 
\begin{itemize}		
	\item \textit{alternately} $\phi(\vec{y})$-\textit{enabling} under $\psi(\vec{x})$, iff $\BAT^l \models \forall s, \vec{x}. (\Poss(A(\vec{x}, \vec{c}), s) \land \psi(\vec{x})[s]) \implies (\neg \phi(\vec{y})[s] \land \phi(\vec{y})[do(A(\vec{x}, \vec{c}), s)])$;

	\item \textit{singly} $\phi(\vec{y})$-\textit{enabling} under $\psi(\vec{x})$, iff $A(\vec{x}, \vec{c})$ is alternately $\phi(\vec{y})$-\textit{enabling} under $\psi(\vec{x})$ and $\lowBAT \models \forall s, \vec{x}. (\Poss(A(\vec{x}, \vec{c}), s) \land \psi(\vec{x})[s]) \implies \forall \vec{z} (\vec{y} \neq \vec{z} \implies \phi(\vec{z})[s] \equiv \phi(\vec{z})[do(A(\vec{x}, \vec{c}), s)])$.
\end{itemize}
\end{definition}

\begin{definition} \label{def:invariant} \rm
	Let $A(\vec{x}, \vec{c})$ be a low-level action with a tuple $\vec{x}$ of variables and a tuple $\vec{c}$ of constants.
	Let $\phi(\vec{y})$ and $\psi(\vec{x})$ be two low-level situation-suppressed formulas. 
	We say $A(\vec{x}, \vec{c})$ is $\phi(\vec{y})$-\textit{invariant} under $\psi(\vec{x})$, iff $\BAT^l \models \forall s, \vec{x}. (\Poss(A(\vec{x}, \vec{c}), s) \land \psi(\vec{x})[s]) \implies \forall \vec{y}(\phi(\vec{y})[s] \equiv \phi(\vec{y})[do(A(\vec{x}, \vec{c}), s)])$.
\end{definition}

\looseness=-1
Alternative enabling property says that (1) $\phi(\vec{y})$ does not hold in the situation $s$ in which both the precondition of action $A(\vec{x}, \vec{c})$ and the condition $\psi(\vec{x})$ holds, and (2) $\phi(\vec{y})$ holds after performing $A(\vec{x}, \vec{c})$.
Single enabling property is a stronger property than alternative enabling, stipulating that $A(\vec{x}, \vec{c})$ does not affect the Boolean value of $\phi(\vec{z})$ for any tuple $\vec{z}$ of objects not equal to $\vec{y}$.
Invariance property means that $A(\vec{x}, \vec{c})$ does not affect the truth value of $\phi(\vec{y})$ with any tuple $\vec{y}$. 
\looseness=-1

\begin{definition} \label{def:exclusive} \rm
	Let $\phi(\vec{x})$ be a low-level situation suppressed formula.
	We say $\phi(\vec{x})$ is \textit{exclusive}, iff $\BAT^l \models \forall s, \vec{x}, \vec{y}. \Exec(s) \implies (\phi(\vec{x})[s] \land \phi(\vec{y})[s] \implies \vec{x} = \vec{y})$.
\end{definition}

\looseness=-1
Intuitively, at most one tuple $\vec{x}$ satisfies the formula $\phi(\vec{x})$ in every executable situation.

\begin{example} \label{exm:enablingInvariant}
	We continue with the low-level BAT $\BAT^{BW_l}$ of blocks world illustrated in Example \ref{exm:blocksworld}.
	For any situation $s$, the action precondition axiom for $putdown(x)$ entails $holding(x, s)$ and the successor state axioms entails $\neg holding(x, do(putdown(x), s))$.
	Therefore, $putdown(x)$ is alternately $\neg holding(x)$-enabling under $\true$.	
	In addition, performing action $putdown(x)$ does not change the Boolean value of $on^+(y, C)$ of any object $y$  and hence $putdown(x)$ is $on^+(y, C)$-invariant under $\true$.

	\looseness=-1
	Similarly, $unstack(x, y)$ is alternately $on^+(x, C)$-enabling under $on^+(x, C)$ and alternately $holding(x)$-enabling under $\true$.
	Furthermore, when $on^+(x, C)$ holds, performing $unstack(x, y)$ only grasps the topmost block on the tower contains block $C$ and does not catch other blocks on the same tower.
	Hence, $unstack(x, y)$ is singly $on^+(x, C)$-enabling under $on^+(x, C)$.

	\looseness=-1
	The fourth state constraint $holding(x) \land holding(y) \implies x = y$ deduces that $holding(x)$ is exclusive. \qed
\end{example}

\looseness=-1
We hereafter define connections between low-level guarded actions together with the low-level situation-suppressed formulas and its corresponding high-level actions together with high-level predicate (or functional) fluents.

\begin{definition} \label{def:highActEnabling} \rm
Let $m$ be a refinement mapping, $\act$ a high-level action, $P$ a high-level predicate fluent and $f$ a high-level functional fluent.
We say 
\begin{enumerate}
\item $\act$ is $P$-enabling relative to $m$, iff $m(\act) \doteq \pi \vec{x}. \psi(\vec{x})?; A(\vec{x}, \vec{c})$, $m(P) \doteq \exists \vec{y}. \phi(\vec{y})$ and $A(\vec{x}, \vec{c})$ is alternately $\phi(\vec{y})$-enabling under $\psi(\vec{x})$; 

\item $\act$ is $P$-disabling relative to $m$, iff $m(\act) \doteq \pi \vec{x}. \psi(\vec{x})?; A(\vec{x}, \vec{c})$, $m(P) \doteq \exists \vec{y}. \phi(\vec{y})$, $A(\vec{x}, \vec{c})$ is alternately $\neg \phi(\vec{y})$-enabling under $\psi(\vec{x})$ and $\phi(\vec{y})$ is exclusive; 

\item $\act$ is $P$-invariant relative to $m$, iff $m(\act) \doteq \pi \vec{x}. \psi(\vec{x})?; A(\vec{x}, \vec{c})$, $m(P) \doteq \exists \vec{y}. \phi(\vec{y})$ and $A(\vec{x}, \vec{c})$ is $\phi(\vec{y})$-invariant under $\psi(\vec{x})$; 

\item $\act$ is $f$-incremental relative to $m$, iff $m(\act) \doteq \pi \vec{x}. \psi(\vec{x})?; A(\vec{x}, \vec{c})$, $m(f) \doteq \countOper \vec{y}. \phi(\vec{y})$ and $A(\vec{x}, \vec{c})$ is singly $\phi(\vec{y})$-enabling under $\psi(\vec{x})$; 

\item $\act$ is $f$-decremental relative to $m$, iff $m(\act) \doteq \pi \vec{x}. \psi(\vec{x})?; A(\vec{x}, \vec{c})$, $m(f) \doteq \countOper \vec{y}. \phi(\vec{y})$  and $A(\vec{x}, \vec{c})$ is singly $\neg \phi(\vec{y})$-enabling under $\psi(\vec{x})$; 

\item $\act$ is $f$-invariant relative to $m$, iff $m(\act) \doteq \pi \vec{x}. \psi(\vec{x})?; A(\vec{x}, \vec{c})$, $m(f) \doteq \countOper \vec{y}. \phi(\vec{y})$  and $A(\vec{x}, \vec{c})$ is $\phi(\vec{y})$-invariant under $\psi(\vec{x})$. 
\end{enumerate}
\end{definition}

\looseness=-1
The following proposition provides the effect of executing high-level action $\act$ with properties illustrated in Definition \ref{def:highActEnabling} when the high-level BAT is a complete abstraction of the low-level BAT.

\begin{proposition} \label{prop:guardedAct}
	Suppose that $\BAT^h$ is a complete abstraction of $\BAT^l$ relative to $m$.
	Let $\act$ be a high-level action.
	Then, the following holds:
	\begin{enumerate}
		\item If $\act$ is $P$-enabling relative to $m$, then $\BAT^h \models \Exec(s) \land \Poss(\act, s) \implies P(do(\act, s))$;

		\item If $\act$ is $P$-disabling relative to $m$, then $\BAT^h \models \Exec(s) \land \Poss(\act, s)  \implies \neg P(do(\act, s))$;

		\item If $\act$ is $P$-invariant relative to $m$, then $\BAT^h \models \Exec(s) \land \Poss(\act, s) \implies P(do(\act, s)) \equiv P(s)$;

		\item If $\act$ is $f$-incremental relative to $m$, then $\BAT^h \models \Exec(s) \land \Poss(\act, s) \implies f(do(\act, s)) = f(s) + 1$;

		\item If $\act$ is $f$-decremental relative to $m$, then $\BAT^h \models \Exec(s) \land \Poss(\act, s) \implies f(do(\act, s)) = f(s) - 1$;

		\item If $\act$ is $f$-invariant relative to $m$, then $\BAT^h \models \Exec(s) \land \Poss(\act, s) \implies f(do(\act, s)) = f(s)$.
	\end{enumerate}
\end{proposition}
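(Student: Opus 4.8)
The plan is to fix an arbitrary model $M^h$ of $\highBAT$ and prove each entailment pointwise at an executable situation. Since $\highBAT$ is a complete abstraction of $\lowBAT$ relative to $m$, there exist a model $M^l$ of $\lowBAT$ and an $m$-bisimulation $B$ between $M^h$ and $M^l$ with $(S_0^{M^h}, S_0^{M^l}) \in B$. The structural backbone is a reachability lemma: every executable high-level situation has a $B$-partner. Because high-level actions are ground and the foundational axioms force every situation to be generated from $S_0$, an executable $s_h$ is $do([\act_1,\dots,\act_k], S_0)$ with each prefix action possible; I would prove by induction on $k$, starting from $(S_0^{M^h}, S_0^{M^l}) \in B$ and applying the \textbf{back} clause of Definition \ref{def:m-bisim} at each step, that there is an $s_l$ with $(s_h, s_l) \in B$.

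Fix such an $s_h$ with $M^h \models \Exec(s_h) \land \Poss(\act, s_h)$ together with its partner $s_l$. Applying \textbf{back} to $\act$ yields a low-level $s_l'$ with $M^l \models \Do(m(\act), s_l, s_l')$ and $(do(\act, s_h), s_l') \in B$. Unfolding the Golog semantics of the guarded action $m(\act) = \pi\vec{x}.\psi(\vec{x})?; A(\vec{x},\vec{c})$ extracts a binding $\vec{d}$ with $M^l \models \psi(\vec{d})[s_l] \land \Poss(A(\vec{d},\vec{c}), s_l)$ and $s_l' = do(A(\vec{d},\vec{c}), s_l)$. For the enabling case (1), the alternately $\phi$-enabling property (Definition \ref{def:enabling}) applied to this binding gives $M^l \models \phi(\vec{d}|_{\vec{y}})[s_l']$, where $\vec{d}|_{\vec{y}}$ is the restriction of $\vec{d}$ to the components of $\vec{y}$; hence $M^l \models (\exists\vec{y}.\phi(\vec{y}))[s_l'] = m(P)[s_l']$, and the \textbf{atom} clause together with Definition \ref{def:isomorphic} transfers this to $M^h \models P(do(\act, s_h))$. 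The invariance cases (3) and (6) have the same shape, using Definition \ref{def:invariant} to equate $\phi$ (resp.\ every witness of $\phi$) at $s_l$ and $s_l'$ and then transferring the Boolean value of $m(P)$ (resp.\ the count $m(f)$) through the isomorphism, with Proposition \ref{prop:isomorphic} available for the formula-level transfer.

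The incremental and decremental cases (4) and (5) refine this with a counting argument. Here single $\phi$-enabling (Definition \ref{def:enabling}) supplies both that the witness $\vec{d}|_{\vec{y}}$ flips (from $\neg\phi$ at $s_l$ to $\phi$ at $s_l'$, or the reverse) and the frame clause that every other tuple $\vec{z}$ keeps its truth value. Because $\lowBAT$ entails $\lowBAT_{fma}$, the extension of $\phi$ is a finite set, so its cardinality changes by exactly one; transferring through the isomorphisms of $(s_h, s_l)$ and $(do(\act, s_h), s_l')$ via the functional clause of Definition \ref{def:isomorphic} converts this into $f(do(\act, s_h)) = f(s_h) + 1$ (resp.\ $f(s_h) - 1$) in $M^h$.

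The main obstacle is the disabling case (2), where I must show not merely that the chosen witness is turned off but that $m(P) = \exists\vec{y}.\phi(\vec{y})$ has \emph{no} witness at $s_l'$. Alternately $\neg\phi$-enabling only records that $\vec{d}|_{\vec{y}}$ satisfies $\phi$ at $s_l$ and fails it at $s_l'$; the real work is to exclude a fresh witness appearing after the action. This is exactly where exclusiveness (Definition \ref{def:exclusive}) must be leveraged: at the executable situation $s_l$ the unique satisfier is $\vec{d}|_{\vec{y}}$, and at the executable $s_l'$ there is again at most one satisfier. Closing the gap---arguing that the single guarded primitive $A(\vec{d},\vec{c})$ cannot install a new, distinct witness while destroying the old one---is the delicate step, and I expect it to rest on combining exclusiveness at both $s_l$ and $s_l'$ with the guarded-action shape of $m(\act)$. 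Once no witness survives, the \textbf{atom} clause yields $M^h \models \neg P(do(\act, s_h))$, completing the case.
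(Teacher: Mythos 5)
Your overall route is the one the paper takes, just with the key intermediate lemma inlined: the paper isolates the transfer step as Lemma \ref{lem:compSoundAbs2} (for a complete abstraction, any low-level entailment of the form $\BAT^l \models \Do(\pi^*_m,S_0,s)\land\Do(m(\act),s,s')\implies m(\phi)[s']$ lifts to $\BAT^h \models \Exec(s)\land\Poss(\act,s)\implies\phi[do(\act,s)]$), and the proof of that lemma is exactly your reachability-plus-\textbf{back}-plus-\textbf{atom} argument (the reachability half being Lemma \ref{lem:compSoundAbs}). Each of the six items then reduces to verifying a low-level entailment from the corresponding enabling/disabling/invariance property. Your treatment of items 1, 3, 4, 5 and 6 is correct and matches the paper's in substance.

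The genuine gap is item 2, which you have located but not closed. Alternately $\neg\phi(\vec{y})$-enabling only yields that the particular witness $\vec{d}|_{\vec{y}}$ determined by the chosen binding satisfies $\phi$ at $s_l$ and fails it at $s'_l$; to conclude $M^l\models \neg m(P)[s'_l]$ you must exclude a fresh witness appearing at $s'_l$. The repair you gesture at---exclusiveness at both $s_l$ and $s'_l$---does not accomplish this: exclusiveness at $s'_l$ only bounds the number of satisfiers by one, which is entirely consistent with the action destroying the old witness while installing a new, distinct one; nothing in Definitions \ref{def:enabling} and \ref{def:exclusive} alone forbids that scenario. The paper's proof of item 2 uses exclusiveness at $s_l$ to pin down the unique satisfier there and then passes to $\forall\vec{y}\,\neg\phi(\vec{y})[s']$ (itself a rather terse step), whereas your write-up explicitly stops short of establishing that universal statement at all. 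As submitted, case 2 is therefore not proved, and the proposed strategy for finishing it would need a different ingredient (some form of frame information about which tuples the action can affect) rather than a second application of exclusiveness.
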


\begin{example} \label{exm:highActEnabling}
	Example \ref{exm:BATLIA} presents high-level BAT $\BAT^{BW_h}$ of blocks world, which will be later verified as a sound and complete abstraction of $\BAT^{BW_l}$.
	Example \ref{exm:refMap} illustrates a refinement mapping $m$ where $m(Holding) \doteq \exists x. holding(x)$ and $m(Putdown) \doteq \pi x. putdown(x)$.	
	Example \ref{exm:enablingInvariant} shows that $putdown(x)$ is alternately $\neg holding(x)$-enabling under $\true$ and $holding(x)$ is exclusive.
	Hence, high-level action $Putdown$ is $Holding$-disabling relative to $m$.
	By Proposition \ref{prop:guardedAct}, $Holding$ does not hold by performing action $Putdown$ in every high-level executable situation.
		
	Similarly, $Putdown$ is $Num$-invariant relative to $m$ while $PickAboveC$ is $Holding$-enabling and $Num$-decremental.	
	By Proposition \ref{prop:guardedAct}, $Holding$ becomes true and the value of $Num$ decrements by $1$ via performing action $PickAboveC$ in every high-level executable situation.
	\qed
\end{example}

\subsection{Restrictions on Refinement Mapping}
\looseness=-1
We say a refinement mapping $m$ is \textit{flat}, iff the following two conditions hold: (1) for every high-level action $\act$, $m(\act) \doteq \pi \vec{x}. \psi(\vec{x})?; A(\vec{x}, \vec{c})$; and (2) for every high-level predicate fluent $P$, $m(P) \doteq \exists \vec{x}. \phi(\vec{x})$. 
Given a flat refinement mapping $m$, we use $\Phi(m)$ for the set $\{\phi(\vec{x}) \mid m(P) \doteq \exists \vec{x}. \phi(\vec{x}) \text{ for a high-level predicate fluent } P \text{ or } m(f) \doteq \\  \countOper \vec{x}. \phi(\vec{x}) \text{ for a high-level functional fluent } f \}$.

\looseness=-1
We say a flat refinement mapping $m$ is \textit{complete}, iff the following two conditions hold: (1) for every high-level action $\act$ and every high-level predicate fluent $P$, $\act$ is $P$-enabling, $P$-disabling, or $P$-invariant; and (2) for every high-level action $\act$ and every high-level functional fluent $f$, $\act$ is $f$-incremental, $f$-decremental, or $f$-invariant.

\looseness=-1
We say a low-level situation $s_l$ is $m$-\textit{reachable}, iff $M^l, v[s/s_l] \models \Do(\pi^*_m, S_0, s)$.
We say two low-level situations $s_1$ and $s_2$ of two low-level structures $M_1$ and $M_2$ are \textit{in the same abstract state}, iff the following two conditions hold: (1) for every high-level predicate fluent $P$ and every  variable assignment $v$, we have $M_1, v[s/s_1] \models m(P)[s]$ iff $M_2, v[s/s_2] \models m(P)[s]$; and (2) for every high-level predicate fluent $f$ and every  variable assignment $v$,  $M_1, v[s/s_1] \models m(f)[s] = y$ iff $M_2, v[s/s_2] \models m(f)[s] = y$.

\looseness=-1
We say a refinement mapping $m$ is \textit{executability-preserving}, iff for every two low-level $m$-reachable situations $s_1$ and $s_2$ in the same abstract state, we have $M_1, v[s/s_1] \models \exists s'. \Do(m(\act), s, s')$ iff $M_2, v[s/s_2] \models \exists s'. \Do(m(\act), s, s')$ for every high-level action $\act$.

\begin{theorem} \label{thm:abstraction}
	Suppose that the refinement mapping $m$ is flat, complete and executability-preserving.
	Let $\BAT^h = \Sigma \cup \BAT^h_{S_0} \cup \BAT^h_{ap} \cup \BAT^h_{ss} \cup \BAT^h_{una}$ where 	
	
	\begin{itemize}
		\item $\highBAT_{S_0} \equiv \forget(\phi \land \lowBAT_{fma} \land \Psi_m, \predSet^l \cup \set{\codingSym})[S_0]$ where $\phi[S_0] \equiv \lowBAT_{S_0}$;
				
		\item $\highBAT_{ap}$ contains the set of sentences: for every high-level action $\act$, $\Poss(\act, s) \equiv \forget((\exists \vec{x}. \psi(\vec{x}) \land \Pi_A(\vec{x}, \vec{c})) \land \BAT^{l-}_{con} \land \lowBAT_{fma} \land \Psi_m, \predSet^l \cup \set{\codingSym})[s]$
		where $m(\act) \doteq \pi \vec{x}. \psi(\vec{x})?; A(\vec{x}, \vec{c})$; 
				
		\item $\BAT^h_{ss}$ contains the set of sentences:
		\begin{itemize}
			\item for every high-level predicate fluent $P$, 
			\begin{center}
				$P(do(a, s)) \equiv [\bigvee \limits_{i = 1}^{m} a = \act^+_i] \lor [P(s) \land \bigwedge \limits_{j = 1}^{n} a \neq \act^-_j]$
			\end{center}
			where each $\act^+_i$ is $P$-enabling relative to $m$ and each $\act^-_j$ is $P$-disabling.
			
			\item for every high-level functional fluent $f$, 
			\begin{center}
				$f(do(a, s)) = y \equiv [y = f(s) + 1 \land (\bigvee \limits_{i = 1}^{m} a = \act^+_i)] \lor$ \\
				$\hspace*{27.5mm}[y = f(s) - 1 \land (\bigvee \limits_{i = 1}^{n} a = \act^-_i)] \lor$ \\
				$\hspace*{20mm} [y = f(s) \land \bigwedge \limits_{i = 1}^{m} a \neq \act^+_i \land \bigwedge \limits_{i = 1}^{n} a \neq \act^-_i]$
			\end{center}
			where each $\act^+_i$ is $f$-incremental relative to $m$ and each $\act^-_i$ is $f$-decremental.
		\end{itemize}	
	\end{itemize}
	
	Then, $\BAT^h$ is a complete and sound abstraction of $\BAT^l$ relative to $m$.
\end{theorem}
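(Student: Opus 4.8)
The plan is to reduce the whole statement to soundness and then obtain soundness through the characterization in Theorem \ref{thm:soundAbs}. Completeness then comes for free: once $\BAT^h$ is known to be a sound abstraction of $\lowBAT$, Theorem \ref{thm:compSoundAbs} tells us that completeness is equivalent to the single equation $\highBAT_{S_0} \equiv \forget(\phi \land \lowBAT_{fma} \land \Psi_m, \predSet^l \cup \set{\codingSym})[S_0]$ with $\phi[S_0] \equiv \lowBAT_{S_0}$, and this is exactly how $\highBAT_{S_0}$ is defined in the construction; so after establishing soundness there is nothing left to do. Before either step I would first check that the constructed $\BAT^h$ is a genuine LIBAT: once $\predSet^l \cup \set{\codingSym}$ is forgotten, the formulas defining $\highBAT_{S_0}$ and each $\Pi_\act$ mention only the high-level (Boolean/integer) fluents, and by quantifier elimination for $\LIA$ they can be placed in the required $\LIA$-definable, situation-uniform form, while the SSAs already have the prescribed shape.

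For soundness I would verify the three conditions of Theorem \ref{thm:soundAbs} at $m$-reachable situations. For Condition~1 I would extend an arbitrary model of $\lowBAT_{S_0} \land \lowBAT_{fma}$ so as to interpret the high-level fluents according to $\Psi_m$; since $\theory \models \forget(\theory, \cdot)$ by Proposition \ref{prop:forget}, and since $\Psi_m$ makes every high-level formula $\chi$ equivalent to its image $m(\chi)$, the definition of $\highBAT_{S_0}$ yields $\lowBAT_{S_0} \land \lowBAT_{fma} \models m(\highBAT_{S_0})[S_0]$. For Condition~3 I would rewrite $\Do(m(\act), s, s')$ using the guarded-action shape $m(\act) \doteq \pi\vec{x}.\psi(\vec{x})?; A(\vec{x},\vec{c})$, so that $s' = do(A(\vec{x},\vec{c}),s)$ with $\psi(\vec{x})[s] \land \Poss(A(\vec{x},\vec{c}),s)$, and then argue case by case. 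Because $m$ is flat and complete, each $\act$ is, for every predicate fluent $P$, either $P$-enabling, $P$-disabling or $P$-invariant, and, for every functional fluent $f$, either $f$-incremental, $f$-decremental or $f$-invariant; computing the simplified right-hand side $\tilde{\Phi}_{P,\act}$ (resp.\ $\tilde{\Phi}_{f,\act}$) from the constructed SSA yields $\true$, $\false$ or $P(s)$ (resp.\ $f(s)+1$, $f(s)-1$ or $f(s)$), and Definitions \ref{def:enabling}, \ref{def:invariant} and \ref{def:exclusive} — applied directly to $\lowBAT$, so that no appeal to the completeness-dependent Proposition \ref{prop:guardedAct} is needed — show that $m(P)[s']$ (resp.\ $m(f)[s']$) takes precisely the matching value. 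The two delicate sub-cases here are $P$-disabling, where exclusivity of $\phi$ at the successor situation is what rules out a fresh witness once the guarded witness has been removed, and the incremental/decremental cases, where the \emph{single} (rather than merely alternate) enabling property is what forces the witness count to change by exactly one.

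Condition~2 is of the same flavour but is the step I expect to be the main obstacle. The $(\Leftarrow)$ direction mirrors Condition~1: a low-level witness of $\exists\vec{x}.\psi(\vec{x})\land\Pi_A(\vec{x},\vec{c})$, together with $\BAT^{l-}_{con}$, $\lowBAT_{fma}$ and $\Psi_m$, entails the forgetting result $\Pi_\act$, hence its image $m(\Pi_\act)$. The $(\Rightarrow)$ direction is harder, because forgetting $\predSet^l \cup \set{\codingSym}$ discards all low-level detail: from $m(\Pi_\act)[s]$ the definition of forgetting (Definition \ref{def:forget}) returns only \emph{some} model $M_0$ of $\exists\vec{x}.\psi\land\Pi_A$ that agrees with the abstract state of $s$ but need not be a model of $\lowBAT$ in which $s$ is $m$-reachable. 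Bridging from such an $M_0$ back to genuine low-level executability at the reachable $s$ is exactly where the \emph{executability-preserving} hypothesis must be invoked, since it guarantees that, among $m$-reachable situations in the same abstract state, executability of $m(\act)$ is determined by the abstract state alone — the one piece of information that survives forgetting. I would therefore spend most of the effort making this bridging argument precise; should the conditions route prove awkward at this point, the fallback is to build the bisimulation directly, defining $M^h$ by reading the abstract state off reachable low-level situations and verifying \textbf{atom}, \textbf{forth} and \textbf{back} of Definition \ref{def:m-bisim}, where the enabling/invariance machinery of Condition~3 establishes \textbf{forth}/\textbf{back} and executability-preservation again secures well-definedness.
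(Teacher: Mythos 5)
Your proposal is correct in outline and identifies all the load-bearing ingredients, but it takes a genuinely different route from the paper. The paper does \emph{not} go through the characterization of Theorem \ref{thm:soundAbs}: it constructs the high-level model $M^h$ explicitly from a given $M^l$ (interpreting fluents at $S_0$ via $m$, then inductively via the enabling/disabling/invariance and incremental/decremental/invariance cases, and interpreting $\Poss$ by the forgetting formula), verifies $M^h \models \highBAT$ directly, and then builds the $m$-bisimulation $B = \{(do(\vec{\act},S_0)^{M^h}, s_l) : M^l \models \Do(m(\vec{\act}),S_0,s_l)\}$ by induction, checking \textbf{atom}, \textbf{forth} and \textbf{back} — i.e., it executes what you call your ``fallback.'' Your primary route — verifying Conditions 1--3 of Theorem \ref{thm:soundAbs} and then invoking Theorem \ref{thm:compSoundAbs} for completeness — is more modular, since it reuses the already-proved characterization (whose $\Leftarrow$ direction internally performs the same model construction via Lemmas \ref{lem:relSatHighBAT} and \ref{lem:soundAbs}) instead of re-deriving the bisimulation by hand; the price is that you must express everything as entailments from $\lowBAT$ at $m$-reachable situations, which is exactly what your case analysis does. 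Crucially, you located the two genuinely delicate points in the same places the paper does: the $(\Rightarrow)$ half of Condition 2 is precisely the paper's \textbf{back} step, and your bridging argument (use the definition of forgetting to obtain a structure $M_0$ agreeing with the abstract state of $s$, extend it by relative satisfiability to a $\lowBAT$-model whose $m$-reachable initial situation executes $m(\act)$, then transfer executability back to $s$ via executability-preservation) is word-for-word the paper's argument there; likewise your Condition-3 case analysis (alternate enabling for $P$-enabling, exclusivity for $P$-disabling, single enabling for the $\pm 1$ count change) matches the paper's \textbf{atom} verification in the inductive step. The only caveats, which you share with the paper rather than fall behind it, are the implicit assumption that the forgetting results are expressible as ($\LIA$-definable) formulas uniform in the situation argument, and the somewhat quick use of exclusivity to exclude a \emph{fresh} witness of $\phi$ at the successor situation in the $P$-disabling case.
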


\looseness=-1
By Theorem \ref{thm:abstraction}, we can directly acquire the successor state axioms of a complete and sound abstraction.
The high-level initial KB can be computed via forgetting every low-level predicate fluent symbol and the coding symbol in the low-level initial KB conjoining with the finitely many objects axiom and the mapping formula.
The action precondition axioms can be similarly obtained.

\begin{example} \label{exm:flatCompRefMapSSA}
	\looseness=-1
	The high-level BAT $\BAT^{BW_h}$ contains a predicate fluent $Holding$, a functional fluent $Num$ and two actions: $Putdown$ and $PickAboveC$.
	As mentioned in Example \ref{exm:highActEnabling}, high-level action $Putdown$ is $Holding$-disabling and $Num$-invariant relative to $m$ while $PickAboveC$ is $Holding$-enabling and $Num$-decremental.
	Hence, $m$ is flat and complete.
	In addition, $m$ is also executability-preserving, which will be verified later.
	
	\looseness=-1
	By Theorem \ref{thm:abstraction}, we construct the successor state axiom of the complete and sound abstraction $\BAT^{BW_h}$ of $\BAT^{BW_l}$. 
		
	Since action $Putdown$ is $Holding$-disabling relative to $m$ and $PickAboveC$ is $Holding$-enabling, the successor state axiom for predicate fluent $Holding$ is
	\begin{center}
		$\hspace*{-47.5mm} Holding(do(a, s)) \equiv$ \\
		$\hspace*{10mm} [a = PickAboveC] \lor [Holding(s) \land a \neq Putdown]$.
	\end{center}	
	
	As $PickAboveC$ is $Num$-decremental relative to $m$, the successor state axiom for functional fluent $Num$ is 
	\begin{center}
		$\hspace*{-45mm} Num(do(a, s)) = y \equiv $ \\
		$\hspace*{-4.75mm} [y = Num(s) - 1 \land a = PickAboveC] \lor$ \\
		$\hspace*{10mm} [y = Num(s) \land a \neq PickAboveC]$. \qed
	\end{center}
\end{example}

\looseness=-1
In the following, we define a normal form of low-level situation suppressed formulas $\phi$, namely \textit{propositional} $\exists$-\textit{formula}, and a translation $m^{-1}$ that maps every low-level propositional $\exists$-formula to a high-level formula.
We also identify four conditions on flat refinement mapping under the translation $m^{-1}$ derives the result of forgetting for computing the initial KB and the action precondition axioms.
 
\looseness=-1
Let $\Phi$ be a set of formulas.
We say a formula is a \textit{propositional} $\exists$-\textit{formula over} $\Phi$, iff it is the Boolean combination of $\exists \vec{x}. \phi(\vec{x})$'s where $\phi(\vec{x}) \in \Phi$.

\looseness=-1
We say a flat refinement mapping $m$ is \textit{simply forgettable}, iff $\forget(\phi \land \BAT^{l-}_{con} \land \BAT^{l}_{fma} \land \Psi_m, \predSet^l \cup \set{\codingSym}) \equiv \forget(\phi \land \Psi_m, \predSet^l)$ for every propositional $\exists$-formula $\phi$ over $\Phi(m)$.
Simple forgettability property says that forgetting all low-level predicate fluent symbols and the coding symbol in any propositional $\exists$-formula $\phi$ together with the state constraint $\BAT^{l-}_{con}$, the finitely many object axiom $\BAT^{l}_{fma}$ and the mapping formula $\Psi_m$ can be simplified to forgetting all low-level predicate fluent symbols in the conjunction of $\phi$ and $\Psi_m$.

\begin{proposition} \label{prop:BWSimpForg}
	The refinement mapping $m$ for blocks world defined in Example \ref{exm:refMap} is simply forgettable.
\end{proposition}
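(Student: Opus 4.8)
The plan is to verify the definition of simple forgettability directly, using the model-variant characterization of forgetting (Definition~\ref{def:forget}). Fix an arbitrary propositional $\exists$-formula $\phi$ over $\Phi(m) = \set{holding(x),\, on^{+}(x,C)}$, and abbreviate $\theory_1 = \phi \land \BAT^{l-}_{con} \land \lowBAT_{fma} \land \Psi_m$ and $\theory_2 = \phi \land \Psi_m$. By Definition~\ref{def:forget}, a structure $M'$ models the left-hand result iff some model of $\theory_1$ is $(\predSet^l \cup \set{\codingSym})$-identical to $M'$, and $M'$ models the right-hand result iff some model of $\theory_2$ is $\predSet^l$-identical to $M'$. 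Since $\phi$, $\BAT^{l-}_{con}$ and $\Psi_m$ mention only $on$, $holding$, $C$ and the high-level fluents $Holding$, $Num$, both results live over the residual vocabulary $\set{Holding, Num}$ (and $C$), so it suffices to prove these two existence conditions coincide for every $M'$.

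The forward inclusion is routine and uses neither $\BAT^{l-}_{con}$ nor $\lowBAT_{fma}$: if $M_1 \models \theory_1$ is $(\predSet^l \cup \set{\codingSym})$-identical to $M'$, then $M_1 \models \theory_2$, and reinterpreting $\codingSym$ in $M_1$ to agree with $M'$ yields a model of the $\codingSym$-free theory $\theory_2$ that is $\predSet^l$-identical to $M'$. Hence every model of $\forget(\theory_1,\predSet^l \cup \set{\codingSym})$ is a model of $\forget(\theory_2,\predSet^l)$.

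The substantive direction is the converse. Given $M_2 \models \phi \land \Psi_m$ that is $\predSet^l$-identical to $M'$, I must build a model $M_1$ of the full $\theory_1$ that is $(\predSet^l \cup \set{\codingSym})$-identical to $M'$; that is, keeping the object domain, $C$, and the fixed values of $Holding$ and $Num$, I must supply interpretations of $holding$ and $on$ satisfying all six constraints of $\BAT^{l-}_{con}$ and a coding $\codingSym$ satisfying $\lowBAT_{fma}$, while re-establishing $\Psi_m$. Preservation of $\phi$ then comes for free: each such $\phi$ depends only on the truth of $\exists x.\,holding(x)$ and $\exists x.\,on^{+}(x,C)$, which under $\Psi_m$ equal $Holding$ and $Num > 0$, so any reinterpretation that fixes $Holding$, $Num$ and satisfies $\Psi_m$ leaves $\phi$ unchanged. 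I would define $on$, $holding$ canonically: stack $Num$ of the non-$C$ objects in a single tower on $C$, mark one further object as held exactly when $Holding$ holds, and leave the rest on the table; one then checks directly that the objects above $C$ form a single linear order, that $holding$ is exclusive and mutex with $on$, that $\countOper x.\, on^{+}(x,C) = Num$ and $\exists x.\,holding(x) \equiv Holding$, and --- the object domain being finite by the standing finitely-many-objects assumption --- that any injective $\codingSym$ into $\intNum$ with a least and greatest value witnesses $\lowBAT_{fma}$.

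The main obstacle is the counting bookkeeping in this canonical construction: one must show the blocks-world constraints impose no restriction on the realizable pair $(Holding, Num)$ beyond what $\forget(\theory_2,\predSet^l)$ already forces. The delicate feature is that the mutex constraint isolates a held block in the $on$-graph, so $on^{+}(\cdot,C)$ necessarily fails for it; realizing $Num$ blocks above $C$ and, when $Holding$ holds, a separate free block must therefore be accounted for among the distinct objects of the finite domain carried over from $M_2$. I expect verifying that the six constraints are jointly satisfiable at the prescribed value of $Num$ --- matching $\countOper x.\, on^{+}(x,C)$ against the available objects --- to be where the precise syntactic form of $\BAT^{l-}_{con}$ is essential and where the real work lies.
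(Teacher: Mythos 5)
Your forward inclusion is fine (and in fact more careful than necessary about re-aligning $\codingSym$). The problem is the converse direction: you stop exactly at the step that carries all the content. You announce the canonical construction (a single tower of $Num$ blocks on $C$, plus one held block when $Holding$ is true) and then explicitly defer the verification that this can be realized ``among the distinct objects of the finite domain carried over from $M_2$,'' calling it ``where the real work lies.'' That is not a proof of the step; it is a statement of the obstacle. And the obstacle you name is real under your own reading of $\identical_{\predFuncSet}$ (which fixes the object domain): a model of $\forget(\phi \land \Psi_m, \predSet^l)$ only forces the domain to contain $C$ plus $Num$ witnesses of $on^+(\cdot,C)$, because the witness $M_2$ of $\Psi_m$ need not satisfy the mutex constraint, so the held block there may coincide with a block above $C$. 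If $Holding$ is true and the domain has exactly $Num+1$ objects, the mutex and linear-order constraints of $\BAT^{l-}_{con}$ make it impossible to place $Num$ blocks above $C$ \emph{and} a held block disjoint from the tower, so your canonical $M_1$ cannot be built on that domain. Your sketch therefore does not close, and you give no argument ruling such domains out.

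The paper's proof takes a different tack precisely here: it does not try to reuse the domain inherited from $M'$ at all, but builds $M'$ outright with a domain of $Num + 2$ blocks (one for $C$, $Num$ stacked on $C$, and one spare block that is held or on the table according to $Holding$), assigning $\codingSym$ explicitly to witness $\lowBAT_{fma}$. Whether one regards that as fully rigorous against Definition~\ref{def:identical} is a separate question, but it is the move that makes the satisfiability of all six state constraints at the prescribed $(Holding, Num)$ immediate, and it is exactly the move your write-up lacks. To repair your version you would either have to adopt the paper's freedom to choose the domain, or prove that every model of $\forget(\phi \land \Psi_m, \predSet^l)$ already has enough objects to accommodate the tower and a disjoint held block --- which, as the scenario above shows, is not true without further argument.
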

 
\looseness=-1
Given a flat refinement mapping $m$ and a low-level propositional $\exists$-formula $\varphi$ over $\Phi(m)$, the formula $m^{-1}(\varphi)$ is obtained from $\varphi$ by the following two steps:
\begin{enumerate}
	\item generate the formula $\psi$ by replacing every occurrence of $\exists \vec{x}. \phi(\vec{x})$ in $\varphi$ by $P$ (resp. $f > 0$) where $m(P) \doteq \exists \vec{x}. \phi(\vec{x})$ (resp. $m(f) \doteq \countOper \vec{x}. \phi(\vec{x})$);
 	
 	\item generate the formula $m^{-1}(\varphi)$ by conjoining $\psi$ with the conjunct $\bigwedge_{f \in \funcSet^h(\psi)} f \geq 0$.
\end{enumerate}

\looseness=-1
We say a flat refinement mapping $m$ is \textit{consistent}, iff $\exists \vec{x}. \phi(\vec{x})$ is consistent for every high-level predicate fluent $P$ and every high-level functional fluent $f$ s.t. $m(P) \doteq \exists \vec{x}. \phi(\vec{x})$ and $m(f) \doteq \countOper \vec{x}. \phi(\vec{x})$.

\looseness=-1
We say a flat refinement mapping $m$ is \textit{syntax-irrelevant}, iff for every two high-level predicate (or functional) fluents $P_1$ (or $f_1$) and $P_2$ (or $f_2$) s.t. $m(P_1) \doteq \exists \vec{x}. \phi_1(\vec{x})$ (or $m(f_1) \doteq \countOper \vec{x}. \phi_1(\vec{x})$) and $m(P_2) \doteq \exists \vec{y}. \phi_2(\vec{y})$ (or $m(f_2) \doteq \countOper \vec{y}. \phi_2(\vec{y})$), $\predSet^l(\phi_1) \cap \predSet^l(\phi_2) = \emptyset$.
Syntax-irrelevant refinement mapping requires that for every two high-level predicate (or functional) fluents, the two sets of low-level predicate fluents that occur in the corresponding two low-level situation-suppressed formulas $\phi_1$ and $\phi_2$ defined in $m$ are disjoint.

\begin{proposition} \label{prop:consSynIrRefMap}
	Let $m$ be a flat, consistent and syntax-irrelevant refinement mapping and $\phi$ a propositional $\exists$-formula over $\Phi(m)$.
	Then, $m^{-1}(\phi) \equiv \forget(\phi \land \Psi_m, \predSet^l)$.
\end{proposition}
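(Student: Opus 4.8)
The plan is to establish the two entailments $\forget(\phi \land \Psi_m, \predSet^l) \models m^{-1}(\phi)$ and $m^{-1}(\phi) \models \forget(\phi \land \Psi_m, \predSet^l)$ separately, reading $\forget$ through its model-theoretic characterization in Definition~\ref{def:forget}: a structure $M'$ satisfies $\forget(\phi \land \Psi_m, \predSet^l)$ exactly when the interpretation of the low-level predicate fluents in $M'$ can be changed into some $M$ with $M \identical_{\predSet^l} M'$ and $M \models \phi \land \Psi_m$. Throughout I write $\psi$ for the formula obtained in step~1 of $m^{-1}$ (so $m^{-1}(\phi) = \psi \land \bigwedge_{f \in \funcSet^h(\psi)} f \geq 0$), and I record the two rewriting facts forced by $\Psi_m$: for each high-level predicate fluent $P$ with $m(P) \doteq \exists \vec{x}. \phi_P(\vec{x})$ we have $\exists \vec{x}. \phi_P(\vec{x}) \equiv P$, and for each functional fluent $f$ with $m(f) \doteq \countOper \vec{x}. \phi_f(\vec{x})$ we have $\exists \vec{x}. \phi_f(\vec{x}) \equiv f > 0$ together with $f \geq 0$, since a counting term is always non-negative.

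First I would prove the easy inclusion $\phi \land \Psi_m \models m^{-1}(\phi)$. In any $M \models \phi \land \Psi_m$ the two rewriting facts let me replace, inside $\phi$, every atom $\exists \vec{x}. \phi_P(\vec{x})$ by $P$ and every atom $\exists \vec{x}. \phi_f(\vec{x})$ by $f > 0$ without changing truth value, so $M \models \psi$; and $M \models \bigwedge_{f} f \geq 0$ because each $f$ equals a count. Hence $M \models m^{-1}(\phi)$. Since $m^{-1}(\phi)$ contains no symbol of $\predSet^l$, Proposition~\ref{prop:forget} upgrades this to $\forget(\phi \land \Psi_m, \predSet^l) \models m^{-1}(\phi)$, giving the first entailment.

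The second entailment is where the real work lies, and it is where consistency and syntax-irrelevance enter. Given $M' \models m^{-1}(\phi)$ I must build $M \identical_{\predSet^l} M'$ with $M \models \phi \land \Psi_m$, i.e.\ I must reinterpret only the low-level predicate fluents so that, over the fixed common domain of $M'$, $\exists \vec{x}. \phi_P(\vec{x})$ matches the given truth value $P^{M'}$ for every $P$ and $\countOper \vec{x}. \phi_f(\vec{x})$ equals the given value $f^{M'}$ for every $f$; once $\Psi_m$ holds, $M \models \phi \equiv \psi$, and $M' \models \psi$ transfers (as $\psi$ omits $\predSet^l$), so $M \models \phi$. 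Syntax-irrelevance guarantees that the low-level predicate fluents occurring in distinct $\phi_P,\phi_f$ are pairwise disjoint, so these targets can be met by independent choices of the relevant predicates and then glued into a single interpretation of $\predSet^l$ (predicates appearing in no $\phi_P,\phi_f$ are left as in $M'$). Consistency supplies the positive cases (realizing $\exists \vec{x}. \phi_P(\vec{x})$ true, or a positive count for $\phi_f$), and the conjunct $f \geq 0$ in $m^{-1}(\phi)$ is exactly what rules out the one unrealizable count target.

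The main obstacle I anticipate is this realizability-over-a-fixed-domain step: consistency yields a satisfying structure for each $\exists \vec{x}. \phi_P(\vec{x})$ (and a witnessing count for $\phi_f$), but I must reproduce the required value over the very domain of $M'$ rather than over a freshly chosen one, and likewise produce exactly $f^{M'}$ satisfying tuples. I expect to argue that, because $\phi_P$ and $\phi_f$ are built from low-level predicate fluents over which I have full freedom and which are shared by no other image formula, I can set these predicates to carve out precisely the desired (empty, non-empty, or size-$f^{M'}$) extension, reducing the claim to a counting and satisfiability bookkeeping argument; verifying that this can always be done, and handling the boundary case where the target value is $0$, is the delicate part of the proof.
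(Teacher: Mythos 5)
Your proposal follows essentially the same route as the paper's proof: the forward entailment is obtained by rewriting $\phi$ into $m^{-1}(\phi)$ under $\Psi_m$ (the paper isolates exactly this rewriting as Lemma~\ref{lem:refMap}, proved by induction on $\phi$) and then transferring truth to $M'$ because $m^{-1}(\phi)$ mentions no symbol of $\predSet^l$, while the backward entailment is the same model construction that uses consistency to realize the positive cases and syntax-irrelevance to glue the independent reinterpretations of the disjoint predicate sets. The ``delicate'' step you flag --- producing exactly $f^{M'}$ satisfying tuples over the fixed domain of $M'$ rather than merely a positive count --- is a genuine subtlety, but the paper's own proof treats it no more carefully (it only stipulates $M \models f > 0 \land \exists \vec{x}.\,\phi(\vec{x})$ and asserts that $\Psi_m$ is ``easily verified''), so your outline matches the published argument in both structure and level of detail.
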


\looseness=-1
Proposition \ref{prop:consSynIrRefMap} says that the result of forgetting in a propositional $\exists$-formula conjoining with the mapping formula can be obtained by the translation $m^{-1}$.
However, the low-level initial KB and the executability condition of low-level guarded actions may not be in the form of propositional $\exists$-formulas.
We therefore impose one additional condition on refinement mappings such that the above two formulas are equivalent to propositional $\exists$-formulas under the finitely many objects axiom and the state constraints, respectively.

\looseness=-1
We say a flat refinement mapping $m$ is \textit{propositional} $\exists$-\textit{definable}, iff (1) there is a situation-suppressed propositional $\exists$-formula $\phi$ over $\Phi(m)$ s.t. $\BAT^{l-}_{con} \land \lowBAT_{fma} \models \phi \equiv \varphi$ where $\varphi[S_0] = \lowBAT_{S_0}$; and (2) for every high-level action $\act$ s.t. $m(\act) \doteq \pi \vec{x}. \psi(\vec{x})?; A(\vec{x}, \vec{c})$, there is a situation-suppressed propositional $\exists$-formula $\phi$ over $\Phi(m)$ s.t. $\BAT^{l-}_{con} \land \lowBAT_{fma} \models \phi \equiv \exists \vec{x}. \psi(\vec{x}) \land \Pi_A(\vec{x}, \vec{c})$.

\begin{proposition} \label{prop:execPreRefMap}
	A flat, consistent, syntax-irrelevant and propositional $\exists$-definable refinement mapping is executability-preserving.
\end{proposition}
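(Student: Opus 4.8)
The plan is to reduce the semantic executability condition $\exists s'.\,\Do(m(\act),s,s')$ to a purely situation-suppressed formula and then exploit the fact that such a formula, once expressed over $\Phi(m)$, cannot distinguish two situations lying in the same abstract state. Fix two $m$-reachable situations $s_1$ in $M_1$ and $s_2$ in $M_2$ that are in the same abstract state, and fix a high-level action $\act$. Since $m$ is flat, $m(\act)\doteq\pi\vec{x}.\,\psi(\vec{x})?;A(\vec{x},\vec{c})$, so the first step is to expand the Golog macro $\Do$ for this guarded action. Unfolding the sequence, test and nondeterministic-choice clauses yields $\Do(m(\act),s,s')\equiv\exists\vec{x}.\,(\psi(\vec{x})[s]\land\Poss(A(\vec{x},\vec{c}),s)\land s'=do(A(\vec{x},\vec{c}),s))$, whence $\exists s'.\,\Do(m(\act),s,s')\equiv\exists\vec{x}.\,(\psi(\vec{x})[s]\land\Poss(A(\vec{x},\vec{c}),s))$. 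Replacing $\Poss$ by its precondition axiom gives the situation-suppressed equivalent $\exists\vec{x}.\,\psi(\vec{x})\land\Pi_A(\vec{x},\vec{c})$.

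Next I would invoke propositional $\exists$-definability: its clause (2) furnishes a propositional $\exists$-formula $\phi_\act$ over $\Phi(m)$ with $\BAT^{l-}_{con}\land\lowBAT_{fma}\models\phi_\act\equiv\exists\vec{x}.\,\psi(\vec{x})\land\Pi_A(\vec{x},\vec{c})$. To use this equivalence at $s_1$ and $s_2$ I must first argue that both situations validate $\BAT^{l-}_{con}$ and $\lowBAT_{fma}$. The axiom $\lowBAT_{fma}$ is situation-independent and holds in every model of $\lowBAT$; for the state constraints I would observe that an $m$-reachable situation is reached from $S_0$ by a $\Do$-execution of $\pi^*_m$, i.e.\ by a finite chain of primitive actions each of which is made possible inside the guarded actions, so every $m$-reachable situation is executable and hence satisfies $\BAT^{l-}_{con}$. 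Consequently $\exists s'.\,\Do(m(\act),s,s')\equiv\phi_\act[s]$ holds at both $s_1$ and $s_2$.

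It then remains to show that the truth value of the propositional $\exists$-formula $\phi_\act$ is fixed by the abstract state. Since $\phi_\act$ is a Boolean combination of formulas $\exists\vec{x}.\,\phi(\vec{x})$ with $\phi\in\Phi(m)$, it suffices to treat each such subformula. If $\phi$ comes from a predicate fluent, then $\exists\vec{x}.\,\phi(\vec{x})$ is exactly $m(P)$, whose truth at $s_1$ and $s_2$ coincides by the definition of same abstract state; if $\phi$ comes from a functional fluent, then $\exists\vec{x}.\,\phi(\vec{x})$ is equivalent to $m(f)>0$, and the abstract state fixes the integer value $m(f)$, hence the sign condition. Therefore $\phi_\act[s]$ takes the same truth value at $s_1$ and $s_2$, and combining with the previous paragraph gives $M_1,v[s/s_1]\models\exists s'.\,\Do(m(\act),s,s')$ iff $M_2,v[s/s_2]\models\exists s'.\,\Do(m(\act),s,s')$, which is precisely executability-preservation.

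I expect the main obstacle to be the bookkeeping around the state constraints rather than the core logical idea: precisely justifying that $m$-reachability entails executability (so that $\BAT^{l-}_{con}$ may be assumed when applying propositional $\exists$-definability) and handling the functional-fluent subformulas through the $m(f)>0$ reformulation. The hypotheses of consistency and syntax-irrelevance are not directly exercised by this direct argument; I would double-check whether they are genuinely needed here or are carried along only so that $\phi_\act$ can, via Proposition~\ref{prop:consSynIrRefMap} and the translation $m^{-1}$, be pushed to the high-level side---an alternative routing that would make those two hypotheses essential.
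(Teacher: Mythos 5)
Your proposal is correct and follows essentially the same route as the paper's proof: reduce $\exists s'.\,\Do(m(\act),s,s')$ to the situation-suppressed condition $\exists\vec{x}.\,\psi(\vec{x})\land\Pi_A(\vec{x},\vec{c})$, use propositional $\exists$-definability to replace it (under $\BAT^{l-}_{con}\land\lowBAT_{fma}$) by a propositional $\exists$-formula over $\Phi(m)$, and observe that such a formula has the same truth value at any two situations in the same abstract state. The only cosmetic difference is that you argue this last invariance directly on the Boolean structure of $\phi_\act$, whereas the paper routes it through the translation $m^{-1}$ and its Lemma on abstract states; your side remarks (that $m$-reachability yields executability, hence the state constraints, and that consistency and syntax-irrelevance are not really exercised here) are both accurate and, if anything, make explicit steps the paper leaves implicit.
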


\looseness=-1
Proposition \ref{prop:execPreRefMap} means that the four properties of refinement mappings: flatness, consistency, syntax-irrelevance and propositional $\exists$-definability together guarantees the executability-preserving property.

\looseness=-1
By putting Theorem \ref{thm:abstraction} and Propositions \ref{prop:forgetUnderTheory}, \ref{prop:consSynIrRefMap} and \ref{prop:execPreRefMap} into together, we obtain a syntactic approach to computing sound and complete abstractions given a flat, complete,  simply forgettable, consistent, syntax-irrelevant and propositional $\exists$-definable refinement mapping.
We illustrate this approach with the following example.

\begin{example} \label{exm:flatSimpRefMapping}	
	\looseness=-1
	Remember that $m(Num) \doteq \countOper x. on^{+}(x, C)$ and $m(Holding) \doteq \exists x. holding(x)$.
	The two formulas $\exists x. on^{+}(x, C)$ and $\exists x. holding(x)$ are consistent, so $m$ is consistent.
	In addition, $on$ and $holding$ are different low-level predicate fluent symbols, hence $m$ is syntax-irrelevant.
	The set $\Phi(m)$ of formulas is $\set{on^+(x, C), holding(x)}$.
	
	\looseness=-1
	Example \ref{exm:BATLIA} presents the sound and complete abstraction $\BAT^{BW_h}$ of $\BAT^{BW_l}$.	
	We will show the construction of $\BAT^{BW_h}$ according to Theorem \ref{thm:abstraction} and Propositions \ref{prop:forgetUnderTheory}, \ref{prop:BWSimpForg} - \ref{prop:execPreRefMap}.	
	The successor state axioms was shown in Example \ref{exm:flatCompRefMapSSA}.
	
	\looseness=-1
	We first construct the high-level initial KB $\BAT^{BW_h}_{S_0}$.
	The low-level initial KB $\BAT^{BW_l}_{S_0}$ is $(\varphi \land \BAT^{BW_l}_{con})[S_0]$ where $\varphi: \neg \exists x. holding(x) \land \exists x. on^{+}(x, C)$ is a propositional $\exists$-formula over $\Phi(m)$.
	Clearly, $\BAT^{BW_l-}_{con} \land \BAT^{BW_l}_{fma} \models \varphi \equiv (\varphi \land \BAT^{BW_l}_{con})$.
	The formula $m^{-1}(\varphi) = \neg Holding \land Num > 0 \land Num \geq 0$.
	Hence, we get that $\BAT^{BW_h}_{S_0} \equiv \neg Holding(S_0) \land Num(S_0) > 0$.
	
	\looseness=-1
	We then construct the action precondition axioms $\BAT^{BW_h}_{ap}$.
	For high-level action $Putdown$, $m(Putdown) \doteq \pi x. putdown(x)$ and $\Poss(putdown(x), s) \equiv holding(x, s)$.
	The formula $\exists x. holding(x)$ is a propositional $\exists$-formula over $\Phi(m)$.
	Since $m^{-1}(\exists x. holding(x)) = Holding$, we obtain that $\Poss(Putdown, s) \equiv Holding(s)$.
	
	\looseness=-1
	For high-level action $PickAboveC$, $m(PickAboveC) \doteq (\pi x, y) on^+(x, C)?; unstack(x, y)$ and $\Poss(unstack(x, y), s) \equiv on(x, y, s) \land \neg (\exists z) on(z, x, s) \land \neg (\exists z) holding(z, s)$.
	The executability condition $\phi$ of guaraded action $(\pi x, y) on^+(x, C)?; unstack(x, y)$ is $\exists x, y. on^+(x, C) \land on(x, y) \land \neg (\exists z) on(z, x) \land \neg (\exists z) holding(z)$.
	Clearly, $\phi$ is not a propositional $\exists$-formula over $\Phi(m)$.
	By the transitive closure property of $on^+(x, y)$, the finitely many objects axiom and the first state constraint $\neg on^+(x, x)$ of $\BAT^{BW_l-}_{con}$, we get that 
	$\BAT^{BW_l-}_{con} \land \BAT^{BW_l}_{fma} \models (\exists x) on^+(x, C) \equiv (\exists x, y. on^+(x, C) \land on(x, y) \land \neg (\exists z) on(z, x))$.
	It follows that $\BAT^{BW_l-}_{con} \land \BAT^{BW_l}_{fma}  \models [(\exists x) on^+(x, C) \land \neg (\exists x) holding(x)] \equiv \phi$.
	In addition, $m^{-1}((\exists x) on^+(x, C) \land \neg (\exists x) holding(x)) = Num > 0 \land \neg Holding \land Num \geq 0$.
	Hence, $\Poss(PickAboveC, s) \equiv Holding(s) \land Num(s) > 0$.
	\qed
\end{example}

\section{Related Work}
\looseness=-1
In generalized planning, \citet{BonFG2019} utilize SAT solvers and description logics to learn an abstract qualitative numerical problem (QNP) \cite{SriZIG2011} from a finite subset of problem instances.
However, the learned QNP may not be a sound abstraction for the whole set of original problem instances.
Later, \citet{BonFE2019} studied the necessary condition of problem instances on which the abstraction is guaranteed to be sound.
In addition, QNP under deterministic semantics (that is, each integer variable $x$ can be decremented by $1$ if $x > 0$ or incremented by $1$) \cite{SriZGAR2015} and generalized linear integer numeric planning problem \cite{LinCFGLS2022} can be formalized in linear integer situation calculus.

\looseness=-1
The abstraction-refinement framework is a popular paradigm for verifying properties against multi-agent systems.
It first initializes an abstract model based on 3-valued semantics.
If the property is true (or false) in the abstract model, then it is also true (or false) in the concrete model.
Otherwise, the abstract model is too coarse to decide the satisfaction of the property.
In this case, the framework will search some failure states to refine the abstraction in an attempt to return a definite answer.
Various abstraction and refinement strategies are proposed.
(1) \citet{BelFF2023} constructs an initial abstract model based on the common knowledge relationships among agents, and then refines the abstract model by splitting the failure state $s$ according to the incoming transitions into $s$.
(2) Predicate abstraction \cite{LomAM2016,BelFL2016} first selects key predicates to represent the abstract model and then updates the predicate set according to failure states so as to refine the model.
(3) \citet{LomuscioM2014} combines state abstraction and action abstraction to constructs an abstract model, and then refines the model by splitting those states where the truth value of some subformula of the property is unknown.
However, for alternating-time temporal logic under imperfect information and perfect recall \cite{BelFF2023} and alternating-time temporal logic with knowledge \cite{LomAM2016}, the abstraction-refinement framework cannot guarantee that a complete abstract model is obtained since finding failure states in these two logics is undecidable.
The key difference between the above works and our paper is that they focuses on abstraction on models while we investigate abstraction on theories.

\looseness=-1
An abstraction framework based on the situation calculus and ConGolog programming language was proposed in \cite{BanihashmemiGL2017}.
Later, this abstraction framework was extended to cope with online execution with sensing actions \cite{BanihashemiGL2018}, probabilistic domains \cite{Bel2020,HofB2023}, generalized planning \cite{CuiLL2021}, non-deterministic actions \cite{BanihashemiGL2023}, and multi-agent games \cite{LesperanceGRK2024}.
\citet{CuiKL2023} explored the automatic verification of sound abstractions for generalized planning. 
However, none of these aforementioned works investigates the computation problem of abstractions.
\citet{LuoLL2020} showed that sound and complete abstractions can be characterized theoretically via the notion of forgetting.
However, forgetting is a computationally difficult problem.
Hence, the above method is not a practical approach.


\section{Conclusions and Future Work}
\looseness=-1
In this paper, we develop a syntactic approach to computing complete and sound abstraction in the situation calculus.
To this end, we first present a variant of situation calculus, namely linear integer situation calculus, using LIA as the underlying logic.
We then migrate \citet{BanihashmemiGL2017}'s abstraction framework to one from linear integer situation calculus to extended situation calculus with the following modifications:
(1) The high-level BAT is based on linear integer situation calculus allowing functional fluents.
(2) The refinement mapping maps every high-level action to a closed low-level Golog program, maps every high-level predicate fluent to a closed low-level formula, and maps every high-level functional fluent to a closed low-level counting term.
In addition, we impose six restrictions (flatness, completeness, simple forgettability, consistency, syntax-irrelevancy and propositional $\exists$-definability) on refinement mapping.
We also define a translation $m^{-1}$ from low-level propositional $\exists$-formulas to high-level formulas.
Under refinement mappings with the above restrictions, initial KB, action precondition axioms and successor state axioms of high-level BAT can be syntactically constructed from the corresponding low-level counterparts via the translation $m^{-1}$.

\looseness=-1
This paper opens several avenues for further work.
(1) While this paper provides theoretical results on abstraction, it is interesting to implement the syntactic approach so as to automatically produce high-level action theories.
(2) This paper assumes that the restricted refinement mapping is provided.
We will explore how to generate suitable refinement mappings from low-level action theories.
(3) The four restrictions: completeness, simple-forgettability, consistency and propositional $\exists$-definability are not syntactical.
One possible direction is to develop an automatic method to verify if a refinement mapping satisfies the above four restrictions. 
(4) We plan to loosen the restrictions in this paper so as to devise a more general syntactic approach to suit more complex planning domains.

\section{Acknowledgments}
\looseness=-1
We are grateful to Yongmei Liu, Hong Dong and Shiguang Feng for their helpful discussions on the paper.
This paper was supported by National Natural Science Foundation of China (Nos. 62276114, 62077028, 62206055 and 62406108), Guangdong Basic and Applied Basic Research Foundation (Nos. 2023B1515120064, 2024A1515011762), Guangdong-Macao Advanced Intelligent Computing Joint Laboratory (No. 2020B1212030003), Science and Technology Planning Project of Guangzhou (No. 2025A03J3565), Research Foundation of Education Bureau of Hunan Province of China (No. 23B0471), the Fundamental Research Funds for the Central Universities, JNU (No. 21623202).

\bibliography{AAAI-2025-1}

\newpage
\appendix
\section{Supplemental Proofs}

{\noindent \bf Proof of Proposition \ref{prop:forgetUnderTheory}}
\begin{proof}
	Since $\theory \models \phi \equiv \psi$, $\theory \cup \set{\phi} \equiv \theory \cup \set{\psi}$.
	Hence, $\forget(\theory \cup \set{\phi}, \predFuncSet) \equiv \forget(\theory \cup \set{\psi}, \predFuncSet)$.
\end{proof}

{\noindent \bf Proof of Proposition \ref{prop:isomorphic}}
\begin{proof}
	By induction on the structure of $\phi$.
\end{proof}

{\noindent \bf Proof of Proposition \ref{prop:isomorphicForget}}

\begin{proof}
	Suppose that $M^l, v[s/s_l] \models \phi[s] \land \lowBAT_{fma}$.
	We can construct a model $M$ s.t. $M \models \phi \land \lowBAT_{fma} \land \Psi_m$ and $M \models P^l(\vec{x})$ iff $M^l, v[s/s_l] \models P^l[s]$ for every low-level predicate fluent $P^l$ and $(\codingSym(x))^M = (\codingSym(x))^{M^l}$ for every object $x$.
	It follows that $M \models P^h$ iff $M \models m(P^h)$ for every high-level predicate fluent $P^h$, and similarly for every high-level functional fluent $f^h$.
	Since $s_h \bisimilar_{m}^{M^h, M^l} s_l$, $M^l, v[s/s_l] \models m(P^h)[s]$ iff $M^h, v[s/s_h] \models P^h[s]$ for every high-level predicate fluent $P^h$, and similarly for every high-level functional fluent $f^h$.
	Thus, $M \models P^h$ iff $M^h, v[s/s_h] \models P^h[s]$ for every high-level predicate fluent $P^h$, and similarly for every high-level functional fluent $f^h$.
	We get that $M^h, v[s/s_h] \models \forget(\phi \land \lowBAT_{fma} \land \Psi_m, \predSet^l \cup \set{\codingSym})[s]$.
\end{proof}

\begin{lemma} \label{lem:m-bisim}
	Let $M^h$ be a high-level model and $M^l$ a low-level model s.t. $M^h \sim_m M^l$.
	Let $\vec{\act}$ be a high-level action sequence.
	Let $s_h$ be a high-level situation $do(\vec{\act}, S_0)^{M^h}$ s.t. $M^h \models \Exec(do(\vec{\act}, S_0))$ and $s_l$ a low-level situation s.t. $M^l, v[s/s_l] \models \Do(m(\vec{\act}), S_0, s)$.
	Then, for every $m$-bisimulation $B$ s.t. $(S_0^{M^h}, S_0^{M^l}) \in B$, we have $(s_h, s_l) \in B$.	
\end{lemma}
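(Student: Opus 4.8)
The plan is to prove the statement by induction on the length of the high-level action sequence $\vec{\act}$. The base case handles the empty sequence, where $s_h = S_0^{M^h}$ and $s_l = S_0^{M^l}$, so $(s_h, s_l) \in B$ follows directly from the assumption $(S_0^{M^h}, S_0^{M^l}) \in B$ built into the definition of $M^h \sim_m M^l$. The inductive step is where the real work lies: I would write $\vec{\act} = \vec{\act}' \cdot \act$, so that $s_h = do(\act, do(\vec{\act}', S_0))^{M^h}$, and decompose the low-level execution of $\Do(m(\vec{\act}), S_0, s)$ into an execution of $\Do(m(\vec{\act}'), S_0, s'')$ reaching some intermediate situation $s''_l$ followed by an execution of $\Do(m(\act), s'', s)$ reaching $s_l$.

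First I would establish that the intermediate pair is related, i.e. $(s''_h, s''_l) \in B$ where $s''_h = do(\vec{\act}', S_0)^{M^h}$. This requires the induction hypothesis, but to invoke it I must check its premises at the shorter sequence: namely that $M^h \models \Exec(do(\vec{\act}', S_0))$ (which follows since $\Exec$ is downward-closed along subhistories, and $do(\vec{\act}', S_0) \subHis do(\vec{\act}, S_0)$) and that $s''_l$ genuinely witnesses $\Do(m(\vec{\act}'), S_0, s)$ (which is exactly the decomposition of the low-level run). With $(s''_h, s''_l) \in B$ in hand, I would then apply the \textbf{forth} clause of the $m$-bisimulation to the action $\act$ at the pair $(s''_h, s''_l)$: since $M^l, v[s/s''_l, s'/s_l] \models \Do(m(\act), s, s')$, the \textbf{forth} condition yields a high-level situation $\hat{s}_h$ with $M^h \models \Poss(\act, s) \land s' = do(\act, s)$ evaluated at $s''_h$ and $(\hat{s}_h, s_l) \in B$.

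The final step is to argue $\hat{s}_h = s_h$, i.e. that the high-level successor delivered by \textbf{forth} coincides with the actual situation $do(\act, s''_h)^{M^h}$ named in the statement. This is where the condition $s' = do(\act, s)$ supplied by \textbf{forth} is essential: it forces $\hat{s}_h = do(\act, s''_h)^{M^h} = s_h$, so that $(\hat{s}_h, s_l) = (s_h, s_l) \in B$, closing the induction. I would also note in passing that the executability premise $M^h \models \Exec(do(\vec{\act}, S_0))$ guarantees $\Poss(\act, s''_h)$ holds in $M^h$, so the high-level successor along the named situation is in fact the one produced by the bisimulation clause rather than a spurious alternative.

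The main obstacle I anticipate is the decomposition of the iterated refinement $\Do(m(\vec{\act}), S_0, s)$ into an intermediate stage and a final step in a way that cleanly meets the hypotheses of the induction: I must be careful that the intermediate low-level situation $s''_l$ indeed satisfies $\Do(m(\vec{\act}'), S_0, s)$ and that executability transfers correctly to the prefix, since $\Do$ is defined through the macro-expansion of sequential Golog programs and the situations along the run must all be executable for the \textbf{forth}/\textbf{back} clauses to be invoked legitimately. Managing this bookkeeping, rather than any deep conceptual difficulty, is the crux.
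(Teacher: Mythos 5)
Your proposal is correct and follows essentially the same route as the paper's proof: induction on the length of $\vec{\act}$, decomposition of the low-level run of the sequential refinement into a prefix reaching an intermediate situation plus one final $\Do(m(\act),\cdot,\cdot)$ step, the induction hypothesis to relate the intermediate pair, and the \textbf{forth} clause to relate the final pair (with the observation that the clause's $s' = do(\act, s)$ requirement pins the delivered high-level successor to $do(\act, s''_h)^{M^h} = s_h$). Your extra care about executability transferring to the prefix and about identifying $\hat{s}_h$ with $s_h$ only makes explicit what the paper leaves implicit.
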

\begin{proof}
	By induction of the length of $\vec{\act}$.
	Let $B$ be a $m$-bisimulation over $\sitDom^{M^h} \times \sitDom^{M^l}$ s.t. $(S_0^{M^h}, S_0^{M^l}) \in B$.
	
	\textbf{Base case} ($s_h = S_0^{M^h}$ and $s_l = S_0^{M^l}$): $(s_h, s_l) \in B$ due to $(S_0^{M^h}, S_0^{M^l}) \in B$.
	
	\textbf{Inductive step}: Assume that $s_h$ is a high-level situation $do(\vec{\act}; \beta, S_0)^{M^h}$ s.t. $M^h \models \Exec(do(\vec{\act}; \beta, S_0))$  and $s_l$ is a low-level situation s.t. $M^l, v[s/s_l] \models \Do(m(\vec{\act}; \beta), S_0, s)$.
	Thus, there is a high-level situation $s^*_h$ s.t. $s^*_h = do(\vec{\act}, S_0)^{M^h}$ and $s_h = do(\beta, s^*_h)^{M^h}$.
	It follows that $M^h, v[s^*/s^*_h] \models \Poss(\beta, s^*)$.
	There is a low-level situation $s^*_l$ s.t. $M^l, v[s^*/s^*_l, s/s_l] \models \Do(m(\vec{\act}), s^*, S_0) \land \Do(m(\beta), s^*, s)$.
	By the inductive hypothesis, $(s^*_h, s^*_l) \in B$.
	By the forth condition of $m$-bisimulation, $(s_h, s_l) \in B$.
\end{proof}

\begin{lemma} \label{lem:soundAbs}
	Let $m$ be a refinement mapping from $\highBAT$ to $\lowBAT$.	
	Suppose that $M^h \models \highBAT$ and $M^l \models \lowBAT$ s.t. the following conditions hold
	\begin{enumerate}
		\item $S_0^{M^h} \bisimilar_m^{M^h, M^l} S_0^{M^l}$;
		
		\item $\lowModel \models \forall s. \Do(\pi^*_m, S_0, s) \implies \bigwedge_{\act \in \ \actSet^h} (m(\Pi_{\act})[s] \equiv \exists s' \Do(m(\act), s, s'))$ where $\Poss(\act, s) \equiv \Pi_{\act}[s]$;
		
		\item $\lowModel \models \forall s. \Do(\pi^*_m, S_0, s) \implies \\
		\blank \bigwedge_{\act \in \ \actSet^h} \forall s' \{ \Do(m(\act), s, s') \implies \\
		\blank \blank \bigwedge_{P \in \predSet^h} (m(P)[s'] \equiv m(\tilde{\Phi}_{P, \act})[s]) \land \\
		\blank \blank \bigwedge_{f \in \funcSet^h} \forall y(m(f)[s'] = y \equiv m(\tilde{\Phi}_{f, \alpha})(y)[s]) \}$
		where $P(do(a, s)) \equiv \Phi_P(a, s)$, $f(do(a, s)) = y \equiv \Phi_f(y, a, s)$, $\tilde{\Phi}_{P, \act}$ is the simplified formula of $\Phi_P(a, s)$ and $\tilde{\Phi}_{f, \alpha}(y)$ is the simplified formula of $\Phi_f(y, a, s)$.
	\end{enumerate}
	Then, $M^h \bisimilar_m M^l$.
\end{lemma}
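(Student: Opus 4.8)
The plan is to exhibit an explicit $m$-bisimulation $B$ between $\highModel$ and $\lowModel$ that contains the pair of initial situations. The natural candidate is the relation generated by matching high-level and low-level execution traces: let $B$ consist of all pairs $(s_h, s_l)$ for which there is a high-level ground action sequence $\vec{\act}$ such that $s_h = do(\vec{\act}, S_0)^{\highModel}$, $\highModel \models \Exec(do(\vec{\act}, S_0))$, and $\lowModel, v[s/s_l] \models \Do(m(\vec{\act}), S_0, s)$. By the empty trace, $(S_0^{\highModel}, S_0^{\lowModel}) \in B$, so it remains to verify the three clauses of Definition \ref{def:m-bisim} for every pair in $B$. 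Note that any $(s_h, s_l) \in B$ has $s_l$ being $m$-reachable, since $\Do(m(\vec{\act}), S_0, s_l)$ is one of the executions subsumed by $\Do(\pi^*_m, S_0, s_l)$; hence hypotheses (2) and (3) are applicable at $s_l$.

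The heart of the argument is the \textbf{atom} clause, which I would establish as a separate claim: every $(s_h, s_l) \in B$ satisfies $s_h \bisimilar_m^{\highModel, \lowModel} s_l$, proved by induction on the length of the witnessing trace $\vec{\act}$. The base case $\vec{\act} = \langle\rangle$ is exactly hypothesis (1). For the inductive step, write $\vec{\act} = \vec{\beta}; \act$ and decompose the run $\Do(m(\vec{\beta}; \act), S_0, s_l)$, using the sequential semantics of $\Do$, into an intermediate low-level situation $s_l^*$ with $\Do(m(\vec{\beta}), S_0, s_l^*)$ and $\Do(m(\act), s_l^*, s_l)$; set $s_h^* = do(\vec{\beta}, S_0)^{\highModel}$, so that $s_h = do(\act, s_h^*)^{\highModel}$ and $(s_h^*, s_l^*) \in B$. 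By the induction hypothesis $s_h^* \bisimilar_m^{\highModel, \lowModel} s_l^*$, and $s_l^*$ is $m$-reachable. Now for each high-level predicate fluent $P$, the high-level simplified SSA gives $\highModel \models P(s_h) \equiv \tilde{\Phi}_{P, \act}[s_h^*]$; by Proposition \ref{prop:isomorphic} applied to the situation-suppressed formula $\tilde{\Phi}_{P, \act}$ this holds iff $\lowModel \models m(\tilde{\Phi}_{P, \act})[s_l^*]$; and by hypothesis (3) at $s_l^*$ the latter is equivalent to $\lowModel \models m(P)[s_l]$. Chaining these equivalences yields $\highModel \models P(s_h)$ iff $\lowModel \models m(P)[s_l]$, and the identical chain with $\tilde{\Phi}_{f, \act}(y)$ handles each functional fluent $f$; together these give $s_h \bisimilar_m^{\highModel, \lowModel} s_l$.

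The \textbf{forth} and \textbf{back} clauses then follow quickly from hypothesis (2) and Proposition \ref{prop:isomorphic}. For \textbf{forth}, given $(s_h, s_l) \in B$ and $s_l'$ with $\lowModel \models \Do(m(\act), s_l, s_l')$, the existence of such a refinement run forces $\lowModel \models m(\Pi_\act)[s_l]$ by hypothesis (2); the atom clause plus Proposition \ref{prop:isomorphic} transfer this to $\highModel \models \Pi_\act[s_h]$, i.e.\ $\Poss(\act, s_h)$, so $s_h' = do(\act, s_h)^{\highModel}$ is the required successor, and appending $\act$ to the trace of $(s_h, s_l)$ witnesses $(s_h', s_l') \in B$. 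The \textbf{back} clause is symmetric: from $\Poss(\act, s_h)$ we get $m(\Pi_\act)[s_l]$, hence by hypothesis (2) some $s_l'$ with $\Do(m(\act), s_l, s_l')$, and the same extended trace places $(s_h', s_l')$ in $B$. I expect the main obstacle to be the inductive step of the atom claim: one must correctly line up the three distinct evaluations --- the high-level SSA at $s_h^*$, the $m$-translated condition transported by Proposition \ref{prop:isomorphic}, and the low-level SSA effect captured by hypothesis (3) at $s_l^*$ --- while being careful that the intermediate situation $s_l^*$ is itself $m$-reachable so that hypothesis (3) legitimately applies there.
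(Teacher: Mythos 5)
Your proposal is correct and follows essentially the same route as the paper's proof: the same trace-generated relation $B$, the same induction on trace length to establish the atom clause via the chain high-level SSA $\to$ Proposition \ref{prop:isomorphic} $\to$ hypothesis (3), and the same use of hypothesis (2) with Proposition \ref{prop:isomorphic} for the forth and back clauses. The only cosmetic difference is that you build executability of $s_h$ into the definition of $B$ and separate the atom claim from forth/back, whereas the paper verifies all three clauses in one induction; neither change affects the argument.
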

\begin{proof}
	Assume that $M^h$ and $M^l$ satisfy the above three conditions.
	We verify that $M^h$ is $m$-bisimilar to $M^l$.	
	We construct the relation $B$ over $\sitDom^{M^h} \times \sitDom^{M^l}$ as follows: $(s_h, s_l) \in B$ iff there is a high-level action sequence $\vec{\act}$ s.t. $M^l, v[s/s_l] \models \Do(m(\vec{\act}), S_0, s)$ and $s_h = do(\vec{\act}, S_0)^{M^h}$.
	Clearly, $(S_0^{M^h}, S_0^{M^l}) \in B$ and every situation $s^l$ s.t. $(s_h, s_l) \in B$ is $m$-reachable.
	We verify that the atom, forth and back conditions of $m$-bisimulation by induction on the length of situation $s_h$.
	
	\textbf{Base case} ($s_h = S_0^{M^h}$):
	In the following, we verify the atom, forth and back conditions of $m$-bisimulation.
	\begin{description}
		\item[atom] It directly derive from Condition 1 that $S_0^{M^h}$ and $S_0^{M^l}$ are $m$-isomorphic.
		
		\item[forth] Let $\beta$ be a high-level action where the action precondition axiom for $\beta$ is $\Poss(\beta, s) \equiv \Pi_\beta[s]$.
		Let $s'_l$ be a low-level situation s.t. $M^l, v[s'/s'_l] \models \Do(m(\beta), S_0, s')$.
		We let $s'_h$ be $do(\beta, S_0)^{M^h}$.
		By the construction of relation $B$, $(s'_h, s'_l) \in B$.
		It remains to show that $M^h \models \Poss(\beta, S_0)$.
		Since $M^l, v[s'/s'_l] \models \Do(m(\beta), S_0, s')$, $M^l \models \exists s'. \Do(m(\beta), S_0, s')$.
		By Condition 2, we get that $M^l \models m(\Pi_\beta)[S_0]$.
		By Proposition \ref{prop:isomorphic}, we get that $M^h \models \Pi_\beta[S_0]$.
		Hence, $M^h \models \Poss(\beta, S_0)$.
		
		\item[back] Let $\beta$ be a high-level action where the action precondition axiom for $\beta$ is $\Poss(\beta, s) \equiv \Pi_\beta[s]$.		
		Let $s'_h$ be a high-level situation s.t. $M^h, v[s'/s'_h] \models \Poss(\beta, S_0) \land s' = do(\beta, S_0)$.
		It follows that $M^h \models \Pi_\beta[S_0]$.
		By Proposition \ref{prop:isomorphic}, $M^l \models m(\Pi_\beta)[S_0]$.
		By Condition 2, we get that $M^l \models \exists s'. \Do(m(\beta), S_0, s')$.
		Thus, there is a low-level situation $s'_l$ s.t. $M^l, v[s'/s'_l] \models \Do(m(\beta), S_0, s')$.
		By the construction of relation $B$, $(s'_h, s'_l) \in B$.
	\end{description}
	
	\textbf{Inductive step}: Suppose that $(s'_h, s'_l) \in B$ where $s_h = do(\vec{\act}, S_0)^{M^h}$ and $s'_h = do(\beta, s_h)^{M^h}$ with the action precondition axiom for $\beta$ is $\Poss(\beta, s) \equiv \Pi_\beta[s]$.
	We here only verify that $s'_h$ and $s'_l$ are $m$-isomorphic.
	Similarly to the base case, the forth and back conditions can be proved.
	By the construction of relation $B$ and the semantics of Golog programs, there is a situations $s_l$ in $M^l$ s.t. $M^l, v[s/s_l, s'/s'_l] \models \Do(m(\vec{\act}), S_0, s) \land \Do(m(\beta), s, s')$.
	It is easily verified that $(s_h, s_l) \in B$.	
	
	We first verify that for every high-level predicate fluent $P$ and every variable assignment $v$, $M^h, v[s'/s'_h] \models P(s')$ iff $M^l, v[s'/s'_l] \models m(P)[s']$.
	
	\begin{itemize}
		\item ($\Rightarrow$): 
		Suppose that $M^h, v[s'/s'_h] \models P(s')$.
		By the successor state axiom for $P$, $M^h, v[s/s_h] \models \tilde{\Phi}_{P, \act}[s]$.
		By Proposition \ref{prop:isomorphic}, $M^l, v[s/s_l] \models m(\tilde{\Phi}_{P, \act})[s]$.
		By Condition 3, $M^l, v[s'/s'_l] \models m(P)[s']$.
		
		\item ($\Leftarrow$): Suppose that $M^l, v[s'/s'_l] \models P(s')$.
		By Condition 3, $M^l, v[s/s_l] \models m(\tilde{\Phi}_{P, \act})[s]$.
		By Proposition \ref{prop:isomorphic}, $M^h, v[s/s_h] \models \tilde{\Phi}_{P, \act}[s]$.
		By the successor state axiom for $P$, $M^h, v[s'/s'_h] \models P(s')$.
	\end{itemize}
		
	We now verify that for every high-level functional fluent $f$ and every variable assignment $v$, $M^h, v[s'/s'_h] \models f(s') = y$ iff $M^l, v[s'/s'_l] \models m(f)[s'] = y$.
	\begin{itemize}
		\item ($\Rightarrow$): Suppose that $M^h, v[s'/s'_h] \models f(s') = y$.
		By the successor state axiom for $f$, $M^h, v[s/s_h] \models \tilde{\Phi}_{f, \beta}(y)[s]$.
		By Proposition \ref{prop:isomorphic}, $M^l, v[s/s_l] \models m(\tilde{\Phi}_{f, \beta})(y)[s]$.
		By Condition 3, $M^l, v[s'/s'_l] \models m(f)[s'] = y$.
		
		\item ($\Leftarrow$): Suppose that $M^l, v[s'/s'_l] \models m(f)[s'] = y$.
		By Condition 3, $M^l, v[s/s_l] \models m(\tilde{\Phi}_{f, \beta})(y)[s]$.
		By Proposition \ref{prop:isomorphic}, $M^h, v[s/s_h] \models \tilde{\Phi}_{f, \beta}(y)[s]$.
		By the successor state axiom for $f$, $M^h, v[s'/s'_h] \models f(s') = y$.		
	\end{itemize}
\end{proof}

\begin{lemma} \label{lem:relSatLowBAT}
	Let $M$ be a model of $\lowBAT_{S_0} \land \lowBAT_{fma}$.
	There is a model $M^l$ of $\lowBAT$ s.t. $M \models P(\vec{x}, S_0)$ iff $M^l \models P(\vec{x}, S_0)$ for every low-level predicate fluent $P$. 
\end{lemma}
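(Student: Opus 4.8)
The statement is the model-theoretic form of Reiter's relative satisfiability theorem \cite{Reiter2001}: any model of the initial theory can be grown into a model of the entire basic action theory, and the growth never disturbs the initial-situation data. The plan is therefore to build $M^l$ by copying the static part of $M$ verbatim and freely generating all dynamical structure around it. Concretely, I would first let $M^l$ share the object domain and the integer sort (the standard integers) of $M$, and set $\codingSym^{M^l} = \codingSym^{M}$. Since $\lowBAT_{fma}$ constrains only $\codingSym$ and the object domain, it is inherited immediately; in particular there remain finitely many objects, so every counting term and every transitive-closure formula $P^+$ will evaluate to a genuine value at each situation.

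Next I would fix the dynamical scaffolding. To force $\lowBAT_{una}$ I rebuild the action sort as the Herbrand-style set of ground action terms over the object domain, interpreting each action function freely, so that distinct action terms denote distinct actions by construction. The situation sort is taken to be exactly the tree freely generated from a fresh root $S_0^{M^l}$ by the operation $do$, with $\subHis$ read as the subhistory (proper-subtree) relation; because this domain is precisely the set inductively generated by $S_0$ and $do$ and nothing else, the foundational axioms $\Sigma$ hold. It then remains to interpret the fluents and $\Poss$. At the root I copy $M$: for every low-level predicate fluent $P$, I declare $P^{M^l}(\vec{x}, S_0^{M^l})$ to agree with $P^{M}(\vec{x}, S_0)$. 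This is exactly the biconditional demanded by the lemma, and it also makes $\lowBAT_{S_0}$ true in $M^l$, since $\lowBAT_{S_0}$ is uniform in $S_0$ and mentions only objects, $\codingSym$, integers, and the fluents at $S_0$, all of which now coincide between $M$ and $M^l$ (the transitive closures and counting terms occurring in $\lowBAT_{S_0}$ are computed from the same base relations, hence agree). For non-root situations I define the fluents by recursion on tree depth, reading the successor state axioms as definitions: $P^{M^l}(\vec{x}, do(a,s))$ is declared to hold iff $\Phi_P(a,s)$, with its fluents evaluated at $s$, holds in $M^l$. As $\Phi_P$ is uniform in $s$ this recursion is well-founded and makes $\lowBAT_{ss}$ true by fiat, and interpreting $\Poss^{M^l}(a,s)$ by $\Pi_a[s]$ makes $\lowBAT_{ap}$ true. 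Collecting the clauses gives $M^l \models \lowBAT$ together with the required agreement on every $P(\vec{x}, S_0)$.

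The two points needing genuine care are the second-order induction axiom inside $\Sigma$ and the well-foundedness of the fluent recursion; both are discharged by the single choice of taking the situation domain to be the minimal inductively generated tree. That choice rules out non-standard (``junk'') situations, so the induction axiom is satisfied, and simultaneously supplies the depth measure along which the successor state axioms are unfolded without circularity. The only other delicate verification is that $\lowBAT_{S_0}$ transfers despite involving counting and transitive closure; this goes through precisely because $M$ and $M^l$ have the same object domain, the same $\codingSym$, and the same base fluent relations at $S_0$, so every derived counting value and closure relation evaluated at the initial situation is identical in the two structures.
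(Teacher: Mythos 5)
Your proposal is correct and takes essentially the same route as the paper: the paper's proof simply notes that $\lowBAT_{una}$ can be satisfied by extension and then invokes the proof of Pirri and Reiter's relative satisfiability theorem, whose underlying construction (free situation tree over a Herbrand action domain, fluents copied at $S_0$ and propagated by reading the successor state axioms as definitions, $\Poss$ defined by the precondition axioms) is exactly what you spell out. Your explicit treatment of the counting and transitive-closure terms at $S_0$ and of the second-order induction axiom fills in details the paper leaves to the citation, but introduces no new idea.
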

\begin{proof}
	Since $\lowBAT_{una}$ involves no any symbol of $M$, we can extend $M$ s.t. $M \models \lowBAT_{S_0} \land \lowBAT_{fma} \land \lowBAT_{una}$.
	According to the proof of the relative satisfiability of basic action theories (Theorem 1 in \cite{PirriR1999}), we can extend $M$ to a model $M^l$ of $\lowBAT$.	
\end{proof}

\begin{lemma} \label{lem:relSatHighBAT}
	Let $M$ be a model of $\highBAT_{S_0}$.
	There is a model $M^h$ of $\highBAT$ s.t. $M \models P(S_0)$ iff $M^h \models P(S_0)$ for every high-level predicate fluent $P$, and similarly for every high-level functional fluent $f$. 
\end{lemma}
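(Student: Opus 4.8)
The plan is to establish a relative satisfiability result for the linear integer situation calculus, exactly paralleling the proof of Lemma~\ref{lem:relSatLowBAT} but accounting for functional fluents and the integer sort. The statement asserts that any model of the high-level initial knowledge base extends to a model of the full LIBAT $\highBAT = \Sigma \cup \highBAT_{S_0} \cup \highBAT_{ap} \cup \highBAT_{ss} \cup \highBAT_{una}$ while preserving the interpretation of all fluents at $S_0$.

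First I would dispose of the unique names axioms. Since every high-level action in $\LangSCLIA$ is ground, $\highBAT_{una}$ only asserts that the finitely many action constants denote pairwise distinct objects of sort action, and it mentions no fluent symbol. Hence $M$ can be expanded to satisfy $\highBAT_{S_0} \land \highBAT_{una}$ without altering the interpretation of any predicate or functional fluent at $S_0$.

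Next I would invoke the relative satisfiability argument of \cite{PirriR1999} (as in Lemma~\ref{lem:relSatLowBAT}), adapted to $\LangSCLIA$. Concretely, I would build $M^h$ by fixing the integer sort to the standard integers $\intNum$ (no axiom of $\highBAT$ can perturb this), interpreting the situation sort as the tree of ground situation terms so as to satisfy the foundational axioms $\Sigma$, reading off $\Poss$ at each situation from $\highBAT_{ap}$, and defining every fluent's value by induction over the situation tree: at $S_0$ I keep the interpretation given by $M$, and at each $do(\act, s)$ I use the successor state axiom --- the predicate case via $\Phi_P(\act,s)$ and the functional case via $\Phi_f(y,\act,s)$ --- to compute the new value from the value at $s$. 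Because the interpretation at $S_0$ is untouched, $M \models P(S_0)$ iff $M^h \models P(S_0)$ for every predicate fluent $P$, and likewise $M \models f(S_0)=y$ iff $M^h \models f(S_0)=y$ for every functional fluent $f$, while by construction $M^h \models \highBAT$.

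The main obstacle, absent from the low-level lemma, is the well-definedness of the functional fluent extension: I must check that at each $do(\act, s)$ the axiom $f(do(a,s)) = y \equiv \Phi_f(y,a,s)$ determines exactly one integer $y$. This follows from the syntactic shape of $\Phi_f$ together with $\highBAT_{una}$: for a given ground action $\act$ the unique names axioms ensure that at most one positive disjunct's equality $a = \act^+_i$ fires, so either a unique $y = t_i[s]$ is selected, or the default disjunct selects $y = f(s)$, the two cases being mutually exclusive. Granting the standard consistency requirement on the effect conditions $\gamma^+_i$ (so that two simultaneously enabled positive effects never demand conflicting values), this yields a total, single-valued interpretation of $f$ on $\intNum$ at every successor situation, which is what makes the inductive construction of $M^h$ go through.
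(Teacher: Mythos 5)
Your proposal is correct and follows essentially the same route as the paper: the paper's proof is a one-line reduction to Lemma~\ref{lem:relSatLowBAT}, i.e.\ expand $M$ to satisfy the unique names axioms and then apply the relative-satisfiability construction of \citet{PirriR1999}, with the functional-fluent consistency of $\Phi_f(y,a,s)$ (which the paper asserts is ``easily verified'' from its syntactic form in the proof of Theorem~\ref{thm:compSoundAbs}) handling the only new ingredient. You simply spell out in more detail the inductive construction over the situation tree and the well-definedness check that the paper leaves implicit.
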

\begin{proof}
	The proof is similar to that of Lemma \ref{lem:relSatLowBAT} except that we do not contain the finitely
	many objects axiom and involve functional fluent.
\end{proof}

{\noindent \bf Proof of Theorem \ref{thm:soundAbs}}
\begin{proof}	
	($\Rightarrow$):
	
	\begin{enumerate}
		\item Let $M^l_{S_0}$ be a model of $\lowBAT_{S_0} \land \lowBAT_{fma}$.
		By Lemma \ref{lem:relSatLowBAT}, $M^l_{S_0}$ can be extended to a model $M^l$ satisfy all axioms of $\lowBAT$.
		Since $\highBAT$ is a sound abstraction of $\lowBAT$ relative to $m$, there is a model $M^h$ of $\highBAT$ s.t. $M^h \sim_m M^l$.
		So $S_0^{M^h}$ and $S_0^{M^l}$ are $m$-isomorphic.
		In addition, $M^h \models \highBAT_{S_0}$.
		By Proposition \ref{prop:isomorphic}, $M^l \models m(\phi)[S_0]$ where $\highBAT_{S_0} \equiv \phi[S_0]$.
		Since $M^l$ and $M^l_{S_0}$ agree on the interpretation of every fluent over the initial situation $S_0$, $M^l_{S_0} \models m(\phi)[S_0]$.
		
		\item Let $M^l$ be a model of $\lowBAT$.
		Let $s_l$ be a low-level situation s.t. $M^l, v[s/s_l] \models \Do(\pi^*_m, S_0, s)$.
		There is a high-level action sequence $\vec{\act}$ s.t. $M^l, v[s/s_l] \models \Do(\vec{\act}, S_0, s)$.
		Since $\highBAT$ is a sound abstraction of $\lowBAT$ relative to $m$, there is a model $M^h$ of $\highBAT$ s.t. $M^h \sim_m M^l$.
		Let $s_h$ be a high-level situation $do(\vec{\act}, S_0)^{M^h}$.
		Let $B$ be the $m$-bisimulation s.t. $(S_0^{M^h}, S_0^{M^l}) \in B$.
		By Lemma \ref{lem:m-bisim}, we get that $(s_h, s_l) \in B$.
		It follows that $s_h$ and $s_l$ are $m$-isomorphic.
		Let $\beta$ be a high-level action where $\Poss(\beta, s) \equiv \Pi_\beta[s]$.
		
		Suppose that $M^l, v[s/s_l] \models m(\Pi_\beta)[s]$.
		By Proposition \ref{prop:isomorphic}, $M^h, v[s/s_h] \models \Pi_\beta[s]$.
		Hence, $M^h, v[s/s_h] \models \Poss(\beta, s)$.
		Thus, there is a high-level situation $s'_h$ s.t. $s'_h = do(\beta, s)^{M^h}$.		
		By the back condition of $m$-bisimulation, there is a low-level situation $s_l'$ s.t. $M^l, v[s/s_l, s'/s_l'] \models \Do(m(\beta), s, s')$.
		Hence, $M^l, v[s/s_l] \models \exists s'. \Do(m(\beta), s, s')$.
		
		Suppose that $M^l, v[s/s_l] \models \exists s'. \Do(m(\beta), s, s')$.
		Let $s'_l$ be a low-level situation s.t. $M^l, v[s/s_l, s'/s_l'] \models \Do(m(\beta), s, s')$.
		By the forth condition of $m$-bisimulation, there is a high-level situation $s'_h$ s.t. $M^h, v[s/s_h, s'/s'_h] \models \Poss(\beta, s) \land s' = do(\beta, s)$.
		Hence, $M^h, v[s/s_h] \models \Pi_\beta[s]$.
		By Proposition \ref{prop:isomorphic}, $M^l, v[s/s_l] \models m(\Pi_\beta)[s]$.
				
		\item We here only verify the case for high-level predicate fluent $P$.
		The case for high-level functional fluent $f$ can be similarly proved.
		
		Let $\beta$ be a high-level action and the successor state axiom for $P$ is $P(do(a, s)) \equiv \Phi_P(a, s)$.
		Thus, $\highBAT_{ss} \models P(do(\beta, s)) \equiv \tilde{\Phi}_{P, \beta}[s]$.
		Let $M^l$ be a model of $\lowBAT$.
		Let $s_l$ and $s'_l$ be two low-level situations s.t. $M^l, v[s/s_l, s'/s'_l] \models \Do(\pi^*_m, S_0, s) \land \Do(m(\act), s, s')$.
		There is a high-level action sequence $\vec{\act}$ s.t. $M^l, v[s/s_l] \models \Do(\vec{\act}, S_0, s)$.
		Since $\highBAT$ is a sound abstraction of $\lowBAT$ relative to $m$, there is a model $M^h$ of $\highBAT$ s.t. $M^h \sim_m M^l$.
		Let $s_h$ and $s'_h$ be two high-level situations $do(\vec{\act}, S_0)^{M^h}$ and $do(\vec{\act}; \beta, S_0)^{M^h}$, respectively.
		Let $B$ be an $m$-bisimulation s.t. $(S_0^{M^h}, S_0^{M^l}) \in B$.
		By Lemma \ref{lem:m-bisim}, we get that $(s_h, s_l) \in B$.
		It follows that $s_h$ and $s_l$ are $m$-isomorphic.
		Similarly, $s'_h$ and $s'_l$ are also $m$-isomorphic.
		
		Suppose that $M^l, v[s'/s'_l] \models m(P)[s']$.
		By Proposition \ref{prop:isomorphic}, $M^h, v[s'/s'_h] \models P(s')$.
		According to the successor state axiom for $P$, $M^h, v[s/s_h] \models \tilde{\Phi}_{P, \beta}[s]$.
		By Proposition \ref{prop:isomorphic}, $M^l, v[s/s_l] \models m(\tilde{\Phi}_{P, \beta})[s]$.
		
		Suppose that $M^l, v[s'/s'_l] \models m(\tilde{\Phi}_{P, \beta})[s]$.
		By Proposition \ref{prop:isomorphic}, $M^h, v[s/s_h] \models \tilde{\Phi}_{P, \beta}[s]$.
		According to the successor state axiom for $P$, $M^h, v[s'/s'_h] \models P(s')$.
		By Proposition \ref{prop:isomorphic}, $M^l, v[s'/s'_l] \models m(P)[s']$.
		
	\end{enumerate}
	
	($\Leftarrow$): Let $M^l$ be a model of $\lowBAT$.
	We first construct the interpretations of high-level predicate and functional fluent of model $M^h$ at the initial situation $S_0$ as follows:
	\begin{enumerate}		
		\item for each high-level predicate fluent $P \in \predSet^h$, $M^h \models P(S_0))$ iff $M^l \models m(P)[S_0]$.
	
		\item for each high-level functional fluent $f \in \funcSet^h$, we let $f(S_0)^{M^h} = (m(f)[S_0])^{M^l}$.
	\end{enumerate}	

	By the construction, $S_0^{M^h} \bisimilar_m^{M^h, M^l} S_0^{M^l}$.
	By Condition 1 and Proposition \ref{prop:isomorphic}, we get that $M^h \models \initKB^h$.
	By Lemma \ref{lem:relSatHighBAT}, we can extend $M^h$ s.t. $M^h \models \highBAT$.
	Conditions 2 and 3 of Theorem \ref{thm:soundAbs} implies that Conditions 2 and 3 of Lemma \ref{lem:soundAbs} holds.
	By Lemma \ref{lem:soundAbs}, $M^h \sim_m M^l$.
\end{proof}

\begin{lemma} \label{lem:compSoundAbs3}
	Suppose that $\highBAT$ is a sound abstraction of $\lowBAT$ relative to $m$.
	Let $\BAT^{h*}$ be the high-level BAT $\highBAT \setminus \set{\highBAT_{S_0}}$. 
	Let $M^l$ be a model of $\lowBAT$ and $M^h$ a model of $\BAT^{h*}$ s.t. $S_0^{M^h} \bisimilar_m^{M^h, M^l} S_0^{M^l}$.
	Then, $M^h \sim_m M^l$ and $M^h \models \highBAT_{S_0}$.
\end{lemma}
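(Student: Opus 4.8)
The plan is to decouple the two conclusions: first establish $M^h \models \highBAT_{S_0}$ using only the given initial isomorphism, and then feed the resulting fact $M^h \models \highBAT$ into Lemma~\ref{lem:soundAbs} to obtain the full bisimulation. This ordering is essential, since Lemma~\ref{lem:soundAbs} as stated requires $M^h \models \highBAT$ (initial KB included), so I cannot invoke it until the initial KB has been discharged; establishing $M^h \models \highBAT_{S_0}$ up front also avoids any circular dependence on the bisimulation I am trying to prove.

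For the first conclusion, write $\highBAT_{S_0} \equiv \phi[S_0]$ for a situation-suppressed $\phi$. Because $\highBAT$ is a sound abstraction of $\lowBAT$, the forward direction of Theorem~\ref{thm:soundAbs} applies, and in particular its Condition~1 gives $\lowBAT_{S_0} \land \lowBAT_{fma} \models m(\phi)[S_0]$. Since $M^l \models \lowBAT$, it satisfies $\lowBAT_{S_0} \land \lowBAT_{fma}$, hence $M^l \models m(\phi)[S_0]$. I then apply Proposition~\ref{prop:isomorphic} to the hypothesis $S_0^{M^h} \bisimilar_m^{M^h, M^l} S_0^{M^l}$ with the formula $\phi$, which yields $M^h \models \phi[S_0]$, i.e.\ $M^h \models \highBAT_{S_0}$. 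Crucially, this step uses only the initial isomorphism, not the not-yet-established bisimulation.

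Combining $M^h \models \highBAT_{S_0}$ with the assumption $M^h \models \BAT^{h*}$ gives $M^h \models \highBAT$, which licenses an appeal to Lemma~\ref{lem:soundAbs}. It remains to verify its three hypotheses for the pair $(M^h, M^l)$. Condition~1 ($S_0^{M^h} \bisimilar_m^{M^h, M^l} S_0^{M^l}$) is exactly our assumption. Conditions~2 and~3 are the per-model instances of Conditions~2 and~3 of Theorem~\ref{thm:soundAbs}: since $\highBAT$ is a sound abstraction, those hold at the level of the theory $\lowBAT$, and because $M^l \models \lowBAT$ they hold in $M^l$ in particular. Lemma~\ref{lem:soundAbs} then delivers $M^h \bisimilar_m M^l$, completing the proof.

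The argument is essentially an orchestration of earlier results, so the only real care needed is bookkeeping. The main point to get right is recognizing that soundness of $\highBAT$ does double duty, supplying both the initial-KB entailment (Theorem~\ref{thm:soundAbs}, Condition~1) and the dynamic Conditions~2--3 required by Lemma~\ref{lem:soundAbs}, and that the model-level conditions demanded by that lemma follow from the theory-level conditions of the theorem precisely because $M^l$ is a model of $\lowBAT$. I expect no genuine obstacle beyond maintaining this ordering, so that $M^h \models \highBAT_{S_0}$ is in hand before the bisimulation lemma is invoked.
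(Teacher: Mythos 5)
Your proposal is correct, and its skeleton matches the paper's: both arguments obtain Conditions~2 and~3 of Lemma~\ref{lem:soundAbs} from the forward direction of Theorem~\ref{thm:soundAbs} (using that $M^l \models \lowBAT$), pair them with the assumed initial isomorphism, and invoke Lemma~\ref{lem:soundAbs} for the bisimulation. The differences are in the handling of $\highBAT_{S_0}$. The paper proves the bisimulation first and then derives $M^h \models \highBAT_{S_0}$ by invoking the definition of sound abstraction to produce an auxiliary model $N^h$ of $\highBAT$ with $N^h \bisimilar_m M^l$, and chaining the two initial isomorphisms through $M^l$ via Proposition~\ref{prop:isomorphic}; you instead derive $M^h \models \highBAT_{S_0}$ up front, directly from Condition~1 of Theorem~\ref{thm:soundAbs} ($\lowBAT_{S_0} \land \lowBAT_{fma} \models m(\phi)[S_0]$) together with Proposition~\ref{prop:isomorphic} applied to the given initial isomorphism. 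Your route is slightly more economical (no auxiliary model needed) and, by establishing the initial KB before appealing to Lemma~\ref{lem:soundAbs}, it respects that lemma's stated hypothesis $M^h \models \highBAT$ literally; the paper applies the lemma to a model known only to satisfy $\BAT^{h*}$, which is harmless because the lemma's proof never uses the initial KB, but your ordering avoids having to make that observation. Both uses of soundness (the paper's appeal to the definition, your appeal to Condition~1 of Theorem~\ref{thm:soundAbs}) are legitimate and essentially interchangeable.
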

\begin{proof}
	Since $\highBAT$ is a sound abstraction of $\lowBAT$ relative to $m$ and $M^l$ is a model of $\lowBAT$, by Theorem \ref{thm:soundAbs}, Conditions 2 and 3 of Lemma \ref{lem:soundAbs} holds.
	In addition, $S_0^{M^h} \bisimilar_m^{M^h, M^l} S_0^{M^l}$.
	Hence, $M^h \sim_m M^l$.
	
	Since $\highBAT$ is a sound abstraction of $\lowBAT$ relative to $m$, there is a model $N^h$ of $\BAT^h$ s.t. $N^h \bisimilar_m M^l$.
	Hence, $S_0^{M^h} \bisimilar_{m}^{M^h, M^l} S_0^{M^l}$ and $S_0^{N^h} \bisimilar_{m}^{N^h, M^l} S_0^{M^l}$.
	By Proposition \ref{prop:isomorphic}, $M^h \models \highBAT_{S_0}$. 
\end{proof}

{\noindent \bf Proof of Theorem \ref{thm:compSoundAbs}}
\begin{proof}	
	($\Rightarrow$): Assume that $\highBAT$ is a complete abstraction of $\lowBAT$ relative to $m$.
	
	We first verify the only-if direction of forgetting.	
	Let $M$ be a model of $\lowBAT_{S_0} \land \lowBAT_{fma} \land \Psi_m[S_0]$ and $M'$ a structure s.t. $M \identical_{\predSet^l \cup \set{\codingSym}} M'$.
	It follows that for every high-level predicate fluent $P$, $M' \models P(S_0)$ iff $M \models m(P)[S_0]$, and similarly for every high-level functional fluent $f$.
	
	By Lemma \ref{lem:relSatLowBAT}, from $M$, we construct the low-level model $M^l$ satisfying all axioms of $\lowBAT$.

	By Lemma \ref{lem:relSatHighBAT}, we construct the high-level model $M^h$ from $M'$.
	Remember that in the definition of LIBAT, for each functional fluent $f$, there is a successor state axiom of the form $f(do(a, s)) = y \equiv \Phi_f(y, a, s)$ where $\Phi_f(y, a, s)$ is defined as:	
	\begin{center}
		$[\bigvee\limits_{i = 1}^{m} (y = t_{i}[s] \land a = \act^+_i \land \gamma^+_{i}[s])] \lor$ \\ 
		$[y = f(s) \land \neg (\bigvee\limits_{j = 1}^{n} (a = \act^-_j \land \gamma^-_{j}[s]))].$
	\end{center}	
	It is easily verified that $\Phi_f(y, a, s)$ satisfies functional fluent consistency property.
	Similarly to the model $M$, $M'$ can be extended to a model $M^h$ satisfying all axioms of $\highBAT \setminus \set{\highBAT_{S_0}}$.

	It is easily verified that for every high-level predicate fluent $P$, $M^l \models P(S_0)$ iff $M^h \models m(P)[S_0]$, and similarly for every high-level functional fluent $f$.
	$\highBAT$ is a sound abstraction of $\lowBAT$ relative to $m$.
	By Lemma \ref{lem:compSoundAbs3}, $M^h \sim_m M^l$ and $M^h \models \highBAT_{S_0}$.
	It follows that $M' \models \highBAT_{S_0}$.

	We now verify the if direction of forgetting.
	Let $M'$ be a model of $\highBAT_{S_0}$.
	By Lemma \ref{lem:relSatHighBAT}, we extend $M'$ to a model $M^h$ satisfying all axioms of $\highBAT$.

	By the definition of complete abstraction, there is a model $M^l$ of $\lowBAT$ s.t. $M^h \bisimilar_m M^l$.
	We construct $M$ from $M^l$ by considering interpretation on only the symbols in $\lowBAT_{S_0}$.
	Due to the fact that $M^h \bisimilar_m M^l$, we have $S_0^{M^h} \bisimilar_{m}^{M^h, M^l} S_0^{M^l}$.
	It follows that for every high-level predicate fluent $P$, $M^h \models P(S_0)$ iff $M^l \models m(P)[S_0]$, and similarly for every high-level functional fluent $f$.
	Hence, we can extend $M$ to satisfy $\Psi_m$.
	It is easily verified that $M \identical_{\predSet^l \cup \set{\codingSym}} M'$.
		
	($\Leftarrow$):
	Assume that $\highBAT_{S_0} \equiv \forget(\phi \land \lowBAT_{fma} \land \Psi_m, \predSet^l \cup \set{\codingSym})[S_0]$ where $\phi[S_0] = \lowBAT_{S_0}$.
	Let $M^h$ be a model of $\highBAT$.
	It follows that $M^h \models \highBAT_{S_0}$.
	Let $M'$ be a structure s.t. $M'$ and $M^h$ agree on interpretation on $\predSet^h$ and $\funcSet^h$ on initial situation $S_0$.
	Thus, $M' \models \highBAT_{S_0}$.
	There is a model $M$ of $\lowBAT_{S_0} \land \lowBAT_{fma} \land \Psi_m[S_0]$ s.t. $M \identical_{\predSet^l \cup \set{\codingSym}} M'$.
	By Lemma \ref{lem:relSatLowBAT}, $M$ can be extended to a model $M^l$ satisfying all axioms of $\lowBAT$.
	It is easily verified that $S_0^{M^h} \bisimilar_{m}^{M^h, M^l} S_0^{M^l}$.	
	By Lemma \ref{lem:compSoundAbs3}, $M^h \bisimilar_m M^l$.
\end{proof}

\begin{lemma}\label{lem:compSoundAbs}
Let $M^h$ be a model of $\BAT^h$ and $M^l$ a model of $\BAT^l$ s.t. $M^h \bisimilar_m M^l$.
Let $B$ be the $m$-bisimulation between $M^h$ and $M^l$.
Then, the following hold:
\begin{itemize}
\item for every executable situation $s_h$ in $M^h$, there is an $m$-reachable situation $s_l$ in $M^l$ s.t. $(s_h, s_l) \in B$.

\item for every $m$-reachable situation $s_l$ in $M^l$, there is an executable situation $s_h$ in $M^h$ s.t. $(s_h, s_l) \in B$.
\end{itemize}
\end{lemma}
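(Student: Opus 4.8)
The plan is to prove each bullet by a separate induction, reading the matching situation step by step out of the bisimulation relation $B$. The first bullet is proved by induction on the length of the action history of the executable situation $s_h$, driving the construction with the \textbf{back} condition; the second is proved by induction on the number of iterations of $\pi_m$ needed to reach $s_l$, driving the construction with the \textbf{forth} condition. Both inductions share the same base case: since $M^h \bisimilar_m M^l$, we have $(S_0^{M^h}, S_0^{M^l}) \in B$, and $S_0^{M^l}$ is trivially $m$-reachable (zero iterations of $\pi_m$) while $S_0^{M^h}$ is trivially executable, as $\Exec(S_0)$ holds vacuously.

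For the first bullet, by the foundational axioms every situation of $M^h$ either equals $S_0^{M^h}$ or has the form $do(\beta, s_h^*)$; since in $\LangSCLIA$ every action is ground, $\beta$ is one of the finitely many high-level actions. If $s_h = do(\beta, s_h^*)$ is executable, then its prefix $s_h^*$ is executable and $M^h \models \Poss(\beta, s_h^*)$. First I would apply the induction hypothesis to $s_h^*$ to obtain an $m$-reachable $s_l^*$ with $(s_h^*, s_l^*) \in B$. Then, since $M^h, v[s/s_h^*, s'/s_h] \models \Poss(\beta, s) \land s' = do(\beta, s)$, the \textbf{back} condition yields a low-level $s_l$ with $M^l, v[s/s_l^*, s'/s_l] \models \Do(m(\beta), s, s')$ and $(s_h, s_l) \in B$. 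Finally I would argue that $s_l$ is $m$-reachable: because $m(\beta)$ is one of the disjuncts of $\pi_m$, we have $\Do(\pi_m, s_l^*, s_l)$, and appending this single step to the $\pi_m^*$-run witnessing the reachability of $s_l^*$ gives $\Do(\pi_m^*, S_0, s_l)$.

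For the second bullet, by the semantics of $\pi_m^*$ an $m$-reachable $s_l$ reached in $k \geq 1$ iterations is a legal terminating situation of $m(\beta)$ from some $s_l^*$ reached in $k-1$ iterations, i.e. $M^l, v[s/s_l^*, s'/s_l] \models \Do(m(\beta), s, s')$ with $s_l^*$ still $m$-reachable. Symmetrically, I would apply the induction hypothesis to $s_l^*$ to obtain an executable $s_h^*$ with $(s_h^*, s_l^*) \in B$, then invoke the \textbf{forth} condition to get $s_h$ with $M^h, v[s/s_h^*, s'/s_h] \models \Poss(\beta, s) \land s' = do(\beta, s)$ and $(s_h, s_l) \in B$. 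Since $s_h = do(\beta, s_h^*)$, the prefix $s_h^*$ is executable, and $M^h \models \Poss(\beta, s_h^*)$, the situation $s_h$ is executable.

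The two decompositions of a situation into a predecessor plus a final step are routine. The point that needs care is the correct unfolding of the Kleene-star semantics of $\pi_m^*$ under the \citet{LevRLLS1997} macro expansion: one must check that ``$m$-reachable in $k$ iterations'' really does split as ``one complete execution of some $m(\beta)$ appended to a situation $m$-reachable in $k-1$ iterations,'' and dually that appending one such complete execution preserves $m$-reachability. These two closure facts about $\pi_m^*$ are what make both inductions go through; once they are in place, the \textbf{forth} and \textbf{back} conditions supply the matching situations directly.
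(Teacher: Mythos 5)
Your proof is correct and follows essentially the same route as the paper's: induction on the length of the situation, using the \textbf{back} condition for the first bullet and the \textbf{forth} condition for the second (the paper writes out only the first bullet and declares the second symmetric, exactly as you do in reverse). Your extra attention to unfolding the Kleene-star semantics of $\pi_m^*$ — that one complete execution of some $m(\beta)$ appended to an $m$-reachable situation preserves $m$-reachability — is a point the paper passes over in one sentence, and spelling it out is a modest improvement rather than a different argument.
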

\begin{proof}
	We here only prove the first item.
	The second item can be similarly proved.
	We prove by induction on the length of situation $s_h$.
	
	\textbf{Base case} ($s_h = S_0^{M^h}$): It is obvious since $M^h \bisimilar_m M^l$ implies that $(S_0^{M^h}, S_0^{M^l}) \in B$.
	
	\textbf{Inductive step}: Let $s^*_h$ be an executable situation in $M^h$ s.t. $M^h, v[s^*/s^*_h, s/s_h] \models \Poss(\act, s^*) \land s = do(\act, s^*)$.
	By the inductive hypothesis, there is an $m$-reachable situation $s^*_l$ in $M^l$ s.t. $(s^*_h, s^*_l) \in B$.
	By the back condition of $m$-bisimulation, there is a situation $s_l$ in $M^l$ s.t. $M^l, v[s^*/s^*_l, s/s_l] \models \Do(m(\act), s^*, s)$ and $(s_h, s_l) \in B$.
	This, together with $m$-reachability of $s^*_l$, implies that $s_l$ is $m$-reachable.

\end{proof}

\begin{lemma} \label{lem:compSoundAbs2}
Suppose that $\BAT^h$ is a complete abstraction of $\BAT^l$ relative to $m$.
Then, for every high-level action $\act$ and every high-level situation-suppressed formula $\phi$, $\BAT^l \models \forall s, s'. \Do(\pi^*_m, S_0, s) \land \Do(m(\act), s, s') \implies m(\phi)[s']$ only if $\BAT^h \models \forall s. \Exec(s) \land \Poss(\act, s) \implies \phi[do(\act, s)]$.	
\end{lemma}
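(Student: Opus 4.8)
The plan is to prove this ``only if'' statement directly by a semantic argument over models of $\highBAT$. Assume the antecedent, namely $\lowBAT \models \forall s, s'. \Do(\pi^*_m, S_0, s) \land \Do(m(\act), s, s') \implies m(\phi)[s']$. To establish $\highBAT \models \forall s. \Exec(s) \land \Poss(\act, s) \implies \phi[do(\act, s)]$, I would fix an arbitrary model $M^h$ of $\highBAT$, an arbitrary executable situation $s_h$ with $M^h \models \Poss(\act, s_h)$, and set $s'_h = do(\act, s_h)^{M^h}$; the goal is then to show $M^h, v[s'/s'_h] \models \phi[s']$ for an arbitrary assignment $v$.

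First I would invoke completeness of the abstraction to obtain a model $M^l$ of $\lowBAT$ together with an $m$-bisimulation $B$ witnessing $M^h \bisimilar_m M^l$. Since $s_h$ is executable, Lemma \ref{lem:compSoundAbs} supplies an $m$-reachable low-level situation $s_l$ with $(s_h, s_l) \in B$; by the $\mathbf{atom}$ condition $s_h \bisimilar_{m}^{M^h, M^l} s_l$, and $m$-reachability means $M^l, v[s/s_l] \models \Do(\pi^*_m, S_0, s)$. Then I would apply the $\mathbf{back}$ condition of $B$ to the pair $(s_h, s_l)$: because $M^h, v[s/s_h, s'/s'_h] \models \Poss(\act, s) \land s' = do(\act, s)$, there is a low-level situation $s'_l$ with $M^l, v[s/s_l, s'/s'_l] \models \Do(m(\act), s, s')$ and $(s'_h, s'_l) \in B$.

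Finally, combining $\Do(\pi^*_m, S_0, s_l)$ and $\Do(m(\act), s_l, s'_l)$ with the antecedent (instantiated in $M^l \models \lowBAT$) yields $M^l, v[s'/s'_l] \models m(\phi)[s']$. Since $(s'_h, s'_l) \in B$ gives $s'_h \bisimilar_{m}^{M^h, M^l} s'_l$, Proposition \ref{prop:isomorphic} transfers this back to the high level, giving $M^h, v[s'/s'_h] \models \phi[s']$, which is exactly $M^h \models \phi[do(\act, s_h)]$. As $M^h$, $s_h$ and $v$ were arbitrary, the conclusion follows.

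The main obstacle is bridging the high-level notion of executability of $s_h$ to the low-level notion of $m$-reachability of its bisimilar partner $s_l$, since only $m$-reachable situations lie in the scope of the low-level hypothesis (which is guarded by $\Do(\pi^*_m, S_0, s)$); this is precisely what Lemma \ref{lem:compSoundAbs} delivers, after which the argument is a routine chaining of the $\mathbf{back}$ condition with Proposition \ref{prop:isomorphic}. A secondary point to handle carefully is that $\phi$ is an arbitrary situation-suppressed formula possibly carrying free object variables, so the transfer via Proposition \ref{prop:isomorphic} must be carried out under the arbitrary assignment $v$ rather than merely for sentences.
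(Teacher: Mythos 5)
Your proposal is correct and follows essentially the same route as the paper's proof: invoke completeness to obtain a bisimilar low-level model, use Lemma \ref{lem:compSoundAbs} to find an $m$-reachable partner $s_l$ of the executable $s_h$, apply the \textbf{back} condition to get $s'_l$ with $\Do(m(\act), s_l, s'_l)$, and transfer $m(\phi)[s'_l]$ back to the high level via $m$-isomorphism. If anything, your write-up is slightly more careful than the paper's, which elides the explicit instantiation of the low-level entailment at $(s_l, s'_l)$ and cites Definition \ref{def:isomorphic} where Proposition \ref{prop:isomorphic} (for arbitrary formulas under an assignment $v$) is what is actually needed.
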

\begin{proof}
	Let $M^h$ be an arbitrary model of $\BAT^h$, $s_h$ an arbitrary high-level situation and $\act \in \actSet^h$ s.t. $M^h, v[s/s_h] \models \Exec(s) \land \Poss(\act, s)$.	
	Since $\BAT^h$ is a complete abstraction of $\BAT^l$ relative to $m$, there is a model of $M^l$ of $\lowBAT$ s.t. $M^h \bisimilar_m M^l$.
	Let $B$ be the $m$-bisimulation between $M^h$ and $M^l$.
	By Lemma \ref{lem:compSoundAbs}, there is an $m$-reachable situation $s_l$ in $M^l$ s.t. $(s_h, s_l) \in B$.
	Since $M^h, v[s/s_h] \models \Poss(\act, s)$ and $M^h \bisimilar_m M^l$, there is a situation $s'_l$ in $M^l$ s.t. $M^l \models \Do(m(\act), s_l, s'_l)$ and $(do(\act, s_h), s'_l) \in B$.
	It follows that $do(\act, s_h) \bisimilar^{M^h, M^l}_m s'_l$.
	By Definition \ref{def:isomorphic}, $M^h, v[s/s_h] \models \phi[do(\act, s)]$ and hence $M^h \models \forall s. \phi[do(\act, s)]$.
\end{proof}

{\noindent \bf Proof of Proposition \ref{prop:guardedAct}}
\begin{proof}
We remind that $m(\act) \doteq \pi \vec{x}. \psi(\vec{x})?; A(\vec{x}, \vec{c})$.

\begin{enumerate}
	\item By the definition of $P$-enabling property of $\act$, $A(\vec{x}, \vec{c})$ is alternately $\phi(\vec{y})$-enabling under $\psi(\vec{x})$.
		By the definition of alternative enabling property of $A(\vec{x}, \vec{c})$, $\BAT^l \models \forall s, \vec{x}. (\Poss(A(\vec{x}, \vec{c}), s) \land \psi(\vec{x})[s]) \implies (\neg \phi(\vec{y})[s] \land \phi(\vec{y})[do(A(\vec{x}, \vec{c}), s)])$.
		Hence, $\BAT^l \models \forall s, s', \vec{x}. (\Do(\pi^*_m, S_0, s) \land \psi(\vec{x})[s] \land \Poss(A(\vec{x}, \vec{c}), s) \land s' = do(A(\vec{x}, \vec{c}), s)) \implies \exists \vec{y} \phi(\vec{y})[s']$.
		Then, $\BAT^l \models \forall s, s', \vec{x}. \Do(\pi^*_m, S_0, s) \land \Do(\pi \vec{x}. \psi(\vec{x})?; A(\vec{x}, \vec{c}), s, s') \implies \exists \vec{y} \phi(\vec{y})[s']$.		
		Lemma \ref{lem:compSoundAbs2}, together with the fact that $m(P) \doteq \exists \vec{y}. \phi(\vec{y})$, implies that $\BAT^h \models \forall s. \Exec(s) \land \Poss(\act, s) \implies P(do(\act, s))$.

	\item The proof of this item is similar to Item 1 except we use $P$-disabling property of $\act$ and the exclusion property of $\phi(\vec{y})$.
	
		Since $\phi(\vec{y})$ is exclusive, there is a unique tuple $\vec{y}$ s.t. $\BAT^l \models \forall s. (\Exec(s) \land \Poss(A(\vec{x}, \vec{c}), s) \land \psi(\vec{x})[s]) \implies \phi(\vec{y})[s]$.
		Hence, $\BAT^l \models \forall s, s', \vec{x}. (\Do(\pi^*_m, S_0, s) \land  \psi(\vec{x})[s] \land \Poss(A(\vec{x}, \vec{c}), s) \land s' = do(A(\vec{x}, \vec{c}), s)) \implies \forall \vec{y} (\neg \phi(\vec{y}))[s']$.
		Then, $\BAT^l \models \forall s, s', \vec{x}. \Do(\pi^*_m, S_0, s) \land \Do(\pi \vec{x}. \psi(\vec{x})?; A(\vec{x}, \vec{c}), s, s') \implies \forall \vec{y} \neg \phi(\vec{y})[s']$.
		By Lemma \ref{lem:compSoundAbs2}, $\BAT^h \models \forall s. \Exec(s) \land \Poss(\act, s) \implies \neg P(do(\act, s))$.

	\item The proof of this item is similar to Item 1 except we use $P$-invariant property of $\act$. 

	\item By the definition of $f$-incremental property of $\act$, $A(\vec{x}, \vec{c})$ is singly $\phi(\vec{y})$-enabling under $\psi(\vec{x})$.
		By the definition of single enabling property of $A(\vec{x}, \vec{c})$, $\BAT^l \models \forall s, \vec{x}. (\Poss(A(\vec{x}, \vec{c}), s) \land \psi(\vec{x})[s]) \implies (\neg \phi(\vec{y})[s] \land \phi(\vec{y})[do(A(\vec{x}, \vec{c}), s)])$ and $\lowBAT \models \forall s, \vec{x}. (\Poss(A(\vec{x}, \vec{c}), s) \land \psi(\vec{x})[s]) \implies \forall \vec{z} (\vec{y} \neq \vec{z} \implies \phi(\vec{z}, \vec{d})[s] \equiv \phi(\vec{z}, \vec{d})[do(A(\vec{x}, \vec{c}), s)])$.
		Hence, $\BAT^l \models \forall s, s', \vec{x}. (\Do(\pi^*_m, S_0, s) \land \psi(\vec{x})[s] \land \Poss(A(\vec{x}, \vec{c}), s) \land s' = do(A(\vec{x}, \vec{c}), s)) \implies \countOper \vec{y}. \phi(\vec{y})[s'] = \countOper \vec{y}. \phi(\vec{y})[s] + 1$.
		Then, $\BAT^l \models \forall s, s', \vec{x}. (\Do(\pi^*_m, S_0, s) \land \Do(\pi \vec{x}. \psi(\vec{x})?; A(\vec{x}, \vec{c}), s, s')) \implies \countOper \vec{y}. \phi(\vec{y})[s'] = \countOper \vec{y}. \phi(\vec{y})[s] + 1$.
		Lemma \ref{lem:compSoundAbs2}, together with the fact that $m(f) \doteq \countOper \vec{y}. \phi(\vec{y})$, implies that $\BAT^h \models \Exec(s) \land \Poss(\act, s) \implies f(do(\act, s)) = f(s) + 1$.

	\item The proof of this item is similar to that of the 4th item.

	\item The proof of this item is similar to that of the 4th item.
\end{enumerate}
\end{proof}

{\noindent \bf Proof of Theorem \ref{thm:abstraction}}
\begin{proof}	
	We first prove that $\BAT^h$ is a sound abstraction of $\BAT^l$ relative to $m$.
	Let $M^l$ be a model of $\BAT^l$.
	We construct a model $M^h$ of $\BAT^h$ in the following.
	
	The domain $Act$ of $M^h$ for the sort action has the same size of $\actSet^h$ and each action symbol $\act$ of $\actSet^h$ corresponds to exactly one element $\act^{M^h}$ of $Act$.
	The domain $Sit$ of $M^h$ for the sort situation is the set of all finite sequences of elements of $Act$.
	The interpretation of initial situation symbol $S_0$, function $do$ and relation $\subHis$ are 
	\begin{itemize}
		\item $S_0^{M^h} = []$, the empty sequence;
		
		\item for $\beta^{M^h} \in Act$ and $[\act^{M^h}_1, \cdots, \act^{M^h}_n] \in Sit$,
		\begin{center}		
			$do^{M^h} (\beta^{M^h}, [\act^{M^h}_1 , \cdots, \act^{M^h}_n]) = [\act^{M^h}_1 , \cdots, \act^{M^h}_n , \beta^{M^h}]$;
		\end{center}		 
		
		\item $s  \subHis^{M^h}  s'$ iff the sequence $s$ is an initial subsequence of $s'$.
	\end{itemize}

	Next, we interpret predicate and functional fluents by induction on the situation $s$.
		
	\textbf{Base case} ($s = S^{\highModel}_0$):
	\begin{itemize}
		\item for every high-level predicate fluent $P$, 
			\begin{center}
				$\highModel \models P(S_0)$ iff $\lowModel \models m(P)[S_0]$;
			\end{center}
		
		\item for every high-level functional fluent $f$,
			\begin{center}
				$f^{\highModel}(S_0) = (m(f)[S_0])^{\lowModel}$;
			\end{center}
			
		\item for every high-level action $\act$, 
			\begin{center}
				$M^h \models \Poss(\act, S_0)$ iff \\
				$\highModel \models \forget((\exists \vec{x}. \psi(\vec{x}) \land \Pi_A(\vec{x}, \vec{c})) \land \BAT^{l-}_{con} \land$ \\ 
				\hspace*{20mm}$\lowBAT_{fma} \land \Psi_m, \predSet^l \cup \set{\codingSym})[S_0]$
			\end{center}
			where $m(\act) \doteq \pi \vec{x}. \psi(\vec{x})?; A(\vec{x}, \vec{c})$. 
	\end{itemize}
	
	\textbf{Inductive step}: 
	By induction assumption, $\highModel$ interprets all predicate and functional fluents at situation $s$.
	We interprets them at situation $do(a, s)$.
	\begin{itemize}
		\item for every high-level predicate fluent $P$, every action $\act$ and every situation $s_h$, 
		\begin{itemize}
			\item $\highModel, v[a/\act, s/s_h] \models P(do(a, s))$, if $\act$ is $P$-enabling relative to $m$;
			\item $\highModel, v[a/\act, s/s_h] \nmodels P(do(a, s))$, if $\act$ is $P$-disabling relative to $m$;
			\item $\highModel, v[a/\act, s/s_h] \models P(do(a, s))$ iff $\highModel, v[s/s_h] \models P(s)$, if $\act$ is $P$-invariant relative to $m$;
		\end{itemize}
		
		\item for every high-level functional fluent $f$, every action $\act$, every situation $s_h$,
		\begin{itemize}
			\item $\highModel, v[a/\act, s/s_h] \models f(do(a, s)) = y + 1$ iff $\highModel, v[s/s_h] \models f(s) = y$, if $\act$ is $f$-incremental relative to $m$;
			\item $\highModel, v[a/\act, s/s_h] \models f(do(a, s)) = y - 1$ iff $\highModel, v[s/s_h] \models f(s) = y$, if $\act$ is $f$-decremental relative to $m$;
			\item $\highModel, v[a/\act, s/s_h] \models f(do(a, s)) = y$ iff $\highModel, v[s/s_h] \models f(s) = y$, if $\act$ is $f$-invariant relative to $m$;
		\end{itemize}
	\end{itemize}
	
	Since $m$ is complete (that is, $\act$ is $P$-enabling, $P$-disabling, or $P$-invariant; and $\act$ is $f$-incremental, $f$-decremental, or $f$-invariant), the interpretation of every high-level predicate fluent $P$ and functional fluent $f$ at situation $do(a, s)$ is well-defined.
	
	Finally, we interpret the predicate $\Poss$ as:	
	
	For every high-level action $\act$ and high-level situation $s_l$, 
	\begin{center}
		$M^h, v[s/s_l] \models \Poss(\act, s)$ iff \\
		$\highModel, v[s/s_l] \models \forget((\exists \vec{x}. \psi(\vec{x}) \land \Pi_A(\vec{x}, \vec{c})) \land \BAT^{l-}_{con} \land$ \\ 
		\hspace*{27.5mm} $\lowBAT_{fma} \land \Psi_m, \predSet^l \cup \set{\codingSym})[s]$
	\end{center}
	where $m(\act) \doteq \pi \vec{x}. \psi(\vec{x})?; A(\vec{x}, \vec{c})$. 
	
	We hereafter show that $\highModel \models \highBAT$.	
	It is easily verified that the model $M^h$ satisfies the foundational axiom $\Sigma$, the unique names axioms for actions $\highBAT_{una}$ and the action precondition axiom $\highBAT_{ap}$.
		
	We now show that $\highModel \models \highBAT_{S_0}$.
	We can construct a model $M$ by combining the interpretations of $\lowModel$ and $\highModel$ at $S_0$, that is, 
	\begin{itemize}		
		\item $M$ and $\lowModel$ agree on the interpretation of every low-level predicate fluent at $S_0$ and coding symbol $\codingSym$;
		
		\item $M$ and $\highModel$ agree on the interpretation of every high-level predicate and functional fluent at $S_0$.
	\end{itemize}	
	By the above interpretation of $M^h$ at $S_0$ and the fact that $\lowModel \models \lowBAT_{S_0} \land \lowBAT_{fma}$, we get that $M \models \lowBAT_{S_0} \land \lowBAT_{fma} \land \Psi_m[S_0]$.
	It follows from the definition of forgetting (cf. Definition \ref{def:forget}) that $\highModel \models \highBAT_{S_0}$.
		
	We then show that $\highModel \models \highBAT_{ss}$.
	We first verify that for every high-level predicate fluent $P$, every action $\act$ and every situation $s_h$, $\highModel, v[a/\act, s/s_h] \models$
	\begin{center}
		$P(do(a, s)) \equiv [\bigvee \limits_{i = 1}^{m} a = \act^+_i] \lor [P(s) \land \bigwedge \limits_{j = 1}^{n} a \neq \act^-_j]$
	\end{center}
	where each $\act^+_i$ is $P$-enabling relative to $m$ and each $\act^-_j$ is $P$-disabling.
	
	Suppose that $\highModel, v[a/\act, s/s_h] \models P(do(a, s))$.
	By the interpretation on $P(do(a, s))$ and the completeness of the refinement mapping $m$, only two cases hold: (1) $\act$ is $P$-enabling relative to $m$; and (2) $\act$ is $P$-invariant and $\highModel, v[s/s_h] \models P(s)$.
	These two cases implies that $\highModel, v[a/\act, s/s_h] \models [\bigvee \limits_{i = 1}^{m} a = \act^+_i] \lor [P(s) \land \bigwedge \limits_{j = 1}^{n} a \neq \act^-_j]$.	
	Suppose that $\highModel, v[a/\act, s/s_h] \nmodels P(do(a, s))$.
	Similarly to the above proof, we can get that $\highModel, v[a/\act, s/s_h] \models [\bigvee \limits_{i = 1}^{m} a = \act^+_i] \lor [P(s) \land \bigwedge \limits_{j = 1}^{n} a \neq \act^-_j]$.
	
	The situation for high-level functional fluent $f$ can be similarly proved via the interpretation on $f(do(a, s))$ and the completeness of $m$.
		
	We then verify that $M^h$ is $m$-bisimilar to $M^l$.
	We construct the relation $B$ over $\sitDom^{M^h} \times \sitDom^{M^l}$ as follows: $(s_h, s_l) \in B$ iff there is a high-level action sequence $\vec{\act}$ s.t. $M^l, v[s/s_l] \models \Do(m(\vec{\act}), S_0, s)$ and $s_h = do(\vec{\act}, S_0)^{M^h}$.
	Clearly, $(S_0^{M^h}, S_0^{M^l}) \in B$.
	It is easily verified that every low-level situation $s_l$ s.t. $(s_h, s_l) \in B$ is $m$-reachable.
	We show that every high-level situation $s_h$ s.t. $(s_h, s_l) \in B$ is executable and that $B$ is an $m$-bisimulation relation between $M^h$ and $M^l$ by induction on the length of situation $s_h$. 
	
	\textbf{Base case} ($s_h = S_0^{M^h}$): Clearly, $S_0^{M^h}$ is executable.
	In the following, we verify the atom, forth and back conditions of $m$-bisimulation.
	\begin{description}
		\item[atom] $S_0^{M^h}$ and $S_0^{M^l}$ are $m$-isomorphic due to the construction of $M^h$.
		
		\item[forth] Let $\beta$ be a high-level action where $m(\beta) \doteq \pi \vec{x}. \psi(\vec{x})?; A(\vec{x}, \vec{c})$. 
		Let $s'_l$ be a low-level situation s.t. $M^l, v[s'/s'_l] \models \Do(m(\beta), S_0, s')$.
		We let $s'_h$ be $do(\beta, S_0)^{M^h}$.
		By the construction of relation $B$, $(s'_h, s'_l) \in B$.
		It remains to show that $M^h \models \Poss(\beta, S_0)$.
		Since $M^l, v[s'/s'_l] \models \Do(m(\beta), S_0, s')$, $M^l \models (\exists \vec{x}. \psi(\vec{x}) \land \Pi_A(\vec{x}, \vec{c}))[S_0]$.	
		In addition, $M^l \models \BAT^{l-}_{con}[S_0] \land \lowBAT_{fma}$.		
		By the above interpretation of $M^h$ at $S_0$ and the construction of $M$, $M \models (\exists \vec{x}. \psi(\vec{x}) \land \Pi_A(\vec{x}, \vec{c}))[S_0] \land  \BAT^{l-}_{con}[S_0] \land \lowBAT_{fma} \land \Psi_m[S_0]$.
		By the definition of forgetting (cf. Definition \ref{def:forget}), we get that $M^h \models \forget((\exists \vec{x}. \psi(\vec{x}) \land \Pi_A(\vec{x}, \vec{c})) \land \BAT^{l-}_{con} \land \lowBAT_{fma} \land \Psi_m, \predSet^l \cup \set{\codingSym})[S_0]$.
		Hence, $M^h \models \Poss(\beta, S_0)$.
		
		\item[back] Let $\beta$ be a high-level action where $m(\beta) \doteq \pi \vec{x}. \psi(\vec{x})?; A(\vec{x}, \vec{c})$. 
		Let $s'_h$ be a high-level situation s.t. $M^h, v[s'/s'_h] \models \Poss(\beta, S_0) \land s' = do(\beta, S_0)$.
		It follows that $M^h \models \forget((\exists \vec{x}. \psi(\vec{x}) \land \Pi_A(\vec{x}, \vec{c})) \land \BAT^{l-}_{con} \land \lowBAT_{fma} \land \Psi_m, \predSet^l \cup \set{\codingSym})[S_0]$.
		There is a model $M_*$ s.t. $M_* \models (\exists \vec{x}. \psi(\vec{x}) \land \Pi_A(\vec{x}, \vec{c}))[S_0] \land \BAT^{l-}_{con}[S_0] \land \lowBAT_{fma} \land \Psi_m$ and $M_*$ and $M^h$ agree on high-level predicate and functional fluents at $S_0$.
		We construct $M^l_*$ s.t. $M^l_*$ and $M_*$ agree on low-level predicate fluents at $S_0$.
		Clearly, $M^l_* \models \lowBAT_{fma}$.
		By Lemma \ref{lem:relSatLowBAT}, $M^l_*$ can be extended to satisfy all axioms of $\lowBAT$.
		It is easily verified that $M^l_* \models \exists s. \Do(m(\beta), S_0, s)$ and $S^{M^h}_0 \sim_m S^{M^l_*}_0$.
		By construction of $M^h$, $S^{M^h}_0 \sim_m S^{M^l}_0$.
		By the executability-preserving property of refinement mapping $m$, $M^l \models \exists s. \Do(m(\beta), S_0, s)$.
		Hence, there is a low-level situation $s'_l$ s.t. $M^l, v[s'/s'_l] \models \Do(m(\beta), S_0, s'_l)$.
		By the construction of relation $B$, $(s'_h, s'_l) \in B$.
	\end{description}
	
	\textbf{Inductive step}: Suppose that $(s'_h, s'_l) \in B$ where $s_h = do(\vec{\act}, S_0)^{M^h}$ and $s'_h = do(\beta, s_h)^{M^h}$ where $m(\beta) \doteq \pi \vec{x}. \psi(\vec{x})?; A(\vec{x}, \vec{c})$. 
	We here only verify that $s'_h$ is executable and that $s'_h$ and $s'_l$ are $m$-isomorphic.
	Similarly to the base case, the forth and back conditions of $m$-bisimulation can be proved.
	By the construction of relation $B$ and the semantics of Golog programs, there is a situations $s_l$ in $M^l$ s.t. $M^l, v[s/s_l, s'/s'_l] \models \Do(m(\vec{\act}), S_0, s) \land \Do(m(\beta), s, s')$. 
	It follows that there is a tuple $\vec{e}$ of objects s.t. $M^l, v[s/s_l, s'/s'_l, \vec{x}/\vec{e}] \models \psi(\vec{x})[s] \land \Pi_A(\vec{x}, \vec{c})[s] \land s' = do(A(\vec{x}, \vec{c}), s)$.	
	Hence, $M^l, v[s/s_l] \models \exists (\vec{x}. \psi(\vec{x}) \land \Pi_A(\vec{x}, \vec{c}))[s]$.
	In addition, $s_l$ is executable and hence $M^l, v[s/s_l] \models \BAT^{l-}_{con}[s]$.
	Furthermore, $M^l, v[s/s_l] \models \lowBAT_{fma}$.
	It is easily verified that $(s_h, s_l) \in B$.
	By the induction hypothesis, $s_h$ is executable and $m$-isomorphic to $s_l$.
	In addition, $M^l, v[s/s_l] \models \exists (\vec{x}. \psi(\vec{x}) \land \Pi_A(\vec{x}, \vec{c}))[s] \land \BAT^{l-}_{con}[s] \land \lowBAT_{fma}$.
	By Proposition \ref{prop:isomorphicForget}, we get that $M^h, v[s/s_h] \models \forget((\exists \vec{x}. \psi(\vec{x}) \land \Pi_A(\vec{x}, \vec{c})) \land \BAT^{l-}_{con} \land \lowBAT_{fma} \land \Psi_m, \predSet^l \cup \set{\codingSym})[s]$.
	Hence, $M^h, v[s/s_h] \models \Poss(\beta, s)$ and $s'_h$ is executable.

	We first verify that for every high-level predicate fluent $P$ and every variable assignment $v$, $M^h, v[s'/s'_h] \models P(s')$ iff $M^l, v[s'/s'_l] \models m(P)[s']$.
	Let $P$ be a high-level predicate fluent where $m(P) \doteq \exists \vec{y}. \phi(\vec{y})$.
	
	\begin{itemize}
		\item ($\Rightarrow$): 
		Suppose that $M^h, v[s'/s'_h] \models P(s')$.
		Since the mapping $m$ is complete, $\beta$ is either $P$-enabling or $P$-invariant relative to $m$.
		
		In the case where $\beta$ is $P$-enabling.
		It follows from Definition \ref{def:highActEnabling} that $A(\vec{x}, \vec{c})$ is alternately $\phi(\vec{y})$-enabling under $\psi(\vec{x})$.
		This together with the fact that $M^l, v[s/s_l, \vec{x}/\vec{e}] \models \Poss(A(\vec{x}, \vec{c}), s) \land \psi(\vec{x})[s]$ implies that $M^l, v[s/s_l, \vec{x}/\vec{e}]  \models \phi(\vec{y})[do(A(\vec{x}, \vec{c}), s)]$ where $\vec{x}$ includes all of the variables of $\vec{y}$.
		Hence, $M^l, v[s'/s'_l] \models m(P)[s']$.
		
		In the case where $\beta$ is $P$-invariant.
		According to successor state axiom for $P$, we get that $M^h, v[s/s_h, s'/s'_h] \models \Exec(s) \land \Poss(\beta, s) \implies P(s) \equiv P(s')$.
		It follows that $M^h, v[s/s_h] \models P(s)$.
		By Proposition \ref{prop:isomorphic}, $M^l, v[s/s_l] \models m(P)[s]$.
		Hence, $M^l, v[s/s_l] \models \exists \vec{y}. \phi(\vec{y})[s]$.
		The $P$-invariant property of $\beta$, together with the fact that $M^l, v[s/s_l, \vec{x}/\vec{e}] \models \Poss(A(\vec{x}, \vec{c}), s) \land \psi(\vec{x})[s]$, implies that $M^l, v[s/s_l, \vec{x}/\vec{e}] \models \forall \vec{y}. \phi(\vec{y})[s] \equiv \phi(\vec{y})[do(A(\vec{x}, \vec{c}), s)]$.
		It follows that $M^l, [s'/s'_l] \models  m(P)[s']$. 
		
		\item ($\Leftarrow$): We prove by contraposition.
		Suppose that $M^h, v[s'/s'_h] \models \neg P(s')$.
		Since the mapping $m$ is complete, $\beta$ is either $P$-disabling or $P$-invariant relative to $m$.
		
		In the case where $\beta$ is $P$-disabling.
		It follows from Definition \ref{def:highActEnabling} that $A(\vec{x}, \vec{c})$ is alternately $\phi(\vec{y})$-disabling under $\psi(\vec{x})$.
		This together with the fact that $M^l, v[s/s_l, \vec{x}/\vec{e}] \models \Poss(A(\vec{x}, \vec{c}), s) \land \psi(\vec{x})[s]$ implies that $M^l, v[s/s_l, \vec{x}/\vec{e}] \models \neg \phi(\vec{y})[do(A(\vec{x}, \vec{c}), s)]$ where $\vec{x}$ includes all of the variables of $\vec{y}$.
		In addition, $\phi(\vec{y})$ is exclusive.
		Hence, $M^l, v[s'/s'_l] \models \neg \exists \vec{y}. \phi(\vec{y})[s']$ and $M^l, v[s'/s'_l] \models m(\neg P)[s']$.
		
		The proof for the case where $\beta$ is $P$-invariant is similar the only-if direction.
	\end{itemize}
	
	We now verify that for every high-level functional fluent $f$ and every variable assignment $v$, $M^h, v[s'/s'_h] \models f(s') = z$ iff $M^l, v[s'/s'_l] \models m(f)[s'] = z$.
	Let $f$ be a high-level functional fluent where $m(f) \doteq \countOper \vec{y}. \phi(\vec{y})$.
	\begin{itemize}
		\item ($\Rightarrow$): Suppose that $M^h, v[s'/s'_h] \models f(s') = c$.
		Since the mapping $m$ is complete, $\beta$ is $f$-incremental, $f$-decremental or $f$-invariant relative to $m$.
		
		In the case where $\beta$ is $f$-incremental.
		$M^h, v[s/s_h] \models f(s) = z - 1$.
		By Proposition \ref{prop:isomorphic}, $M^l, v[s/s_h] \models m(f)[s] = z - 1$.	
		It follows from Definition \ref{def:highActEnabling} that $A(\vec{x}, \vec{c})$ is singly $\phi(\vec{y})$-enabling under $\psi(\vec{x})$.
		This together with the fact that $M^l, v[s/s_l, \vec{x}/\vec{e}] \models \Poss(A(\vec{x}, \vec{c}), s) \land \psi(\vec{x})[s]$ implies that $M^l, v[s/s_l, \vec{x}/\vec{e}]  \models \countOper \vec{y}. \phi(\vec{y})[do(A(\vec{x}, \vec{c}), s)] = \countOper \vec{y}. \psi(\vec{y})[s] + 1$ where $\vec{x}$ includes all of the variables of $\vec{y}$.
		Hence, $M^l, v[s/s_l, \vec{x}/\vec{e}] \models m(f)[do(A(\vec{x}, \vec{c}), s)] = m(f)[s] + 1$ and $M^l, v[s'/s'_l] \models m(f)[s'] = z$.
		
		In the case where $\beta$ is $f$-invariant.
		The successor state axiom for $f$, together with the fact that $M^h \models v[s/s_h] \models \Exec(s) \land \Poss(\beta, s)$, implies that $M^h, v[s/s_h, s'/s'_h] \models f(s) = f(s')$.
		It follows that $M^h, v[s/s_h] \models f(s) = z$.
		By Proposition \ref{prop:isomorphic}, $M^l, v[s/s_l] \models \countOper \vec{y}. \phi(\vec{y})[s] = z$.
		The $f$-invariant property of $\beta$, together with the fact that $M^l, v[s/s_l, \vec{x}/\vec{e}] \models \Poss(A(\vec{x}, \vec{c}), s) \land \psi(\vec{x})[s]$, implies that $M^l, v[s/s_l, \vec{x}/\vec{e}] \models \forall{y}. \phi(\vec{y})[s] \equiv \phi(\vec{y})[do(A(\vec{x}, \vec{c}), s)]$.
		It follows that $M^l, v[s/s_l, \vec{x}/\vec{e}] \models \countOper \vec{y}. \phi(\vec{y})[do(A(\vec{x}, \vec{c}), s)] = z$ and hence $M^l, v[s'/s'_l] \models m(f)[s'] = z$.
		
		The proof for the case where $\beta$ is $f$-decremental is similar to that for the case where $\beta$ is $f$-incremental.
		
		\item ($\Leftarrow$): Suppose that $M^l, v[s'/s'_l] \models m(f)[s'] = z$, that is $M^l, v[s'/s'_l] \models \countOper \vec{y}. \phi(\vec{y})[s'] = z$.
		Since the mapping $m$ is complete, $\beta$ is $f$-incremental, $f$-decremental or $f$-invariant relative to $m$.
		
		In the case where $\beta$ is $f$-incremental.
		It follows from Definition \ref{def:highActEnabling} that $A(\vec{x}, \vec{c})$ is singly $\phi(\vec{y})$-enabling under $\psi(\vec{x})$.
		This together with the fact that $M^l, v[s/s_l, \vec{x}/\vec{e}] \models \Poss(A(\vec{x}, \vec{c}), s) \land \psi(\vec{x})(s)$ implies that $M^l, v[s/s_l, \vec{x}/\vec{e}]  \models \countOper \vec{y}. \phi(\vec{y})[do(A(\vec{x}, \vec{c}), s)] = \countOper \vec{y}. \phi(\vec{y})[s] + 1$ where $\vec{x}$ includes all of the variables of $\vec{y}$.
		Hence, $M^l, v[s/s_l, \vec{x}/\vec{e}] \models m(f)[do(A(\vec{x}, \vec{c}), s)] = m(f)[s] + 1$ and $M^l, v[s/s_l] \models m(f)[s] = z - 1$.
		In addition, $M^h, v[s/s_h] \models \Exec(s) \land \Poss(\beta, s)$.
		By Proposition \ref{prop:isomorphic}, $M^h, v[s/s_h] \models f[s] = z - 1$.
		The successor state axiom for $f$ together with the $f$-incremental property of $\beta$ implies that $M^h, v[s'/s'_h] \models f[s'] = z$.
		
		In the case where $\beta$ is $f$-invariant.
		It follows from Definition \ref{def:highActEnabling} that $A(\vec{x}, \vec{c})$ is $\phi(\vec{y})$-invariant under $\psi(\vec{x})$.
		This together with the fact that $M^l, v[s/s_l, \vec{x}/\vec{e}] \models \Poss(A(\vec{x}, \vec{c}), s) \land \psi(\vec{x})(s)$ implies that $M^l, v[s/s_l, \vec{x}/\vec{e}] \models \countOper \vec{y}. \phi(\vec{y})[do(A(\vec{x}, \vec{c}), s)] = \countOper \vec{y}. \phi(\vec{y})[s]$ where $\vec{x}$ includes all of the variables of $\vec{y}$.
		Hence, $M^l, v[s/s_l, \vec{x}/\vec{e}] \models m(f)[do(A(\vec{x}, \vec{c}), s)] = m(f)[s]$ and $M^l, v[s/s_l] \models m(f)[s] = z$.	
		By Proposition \ref{prop:isomorphic}, $M^h, v[s/s_h] \models f[s] = z$.
		In addition, $M^h, v[s/s_h] \models \Exec(s) \land \Poss(\beta, s)$.
		The successor state axiom for $f$ together with the $f$-incremental property of $\beta$ implies that $M^h, v[s'/s'_h] \models f[s'] = z$.
		
		The proof for the case where $\beta$ is $f$-decremental is similar to that for the case where $\beta$ is $f$-incremental.
	\end{itemize}
	
	Finally, since $\BAT^h_{S_0} \equiv \forget(\phi \land \lowBAT_{fma} \land \Psi_m, \predSet^l \cup \set{\codingSym})[S_0]$ where $\phi[S_0] = \lowBAT_{S_0}$, by Theorem \ref{thm:compSoundAbs}, $\BAT^h$ is a complete abstraction of $\BAT^l$ relative to $m$.
\end{proof}

{\noindent \bf Proof of Proposition \ref{prop:BWSimpForg}}
\begin{proof}
	It suffices to prove that $\forget(\phi \land \BAT^{BW_l-}_{con} \land \BAT^{BW_l}_{fma} \land \Psi_m, \predSet^l \cup \set{\codingSym}) \equiv \forget(\phi \land \Psi_m, \predSet^l)$.
	
	We first prove the only-if direction of the definition of forgetting (cf. Definition \ref{def:forget}).
	Let $M$ be a model of $\forget(\phi \land \BAT^{BW_l-}_{con} \land \BAT^{BW_l}_{fma} \land \Psi_m, \predSet^l \cup \set{\codingSym})$.
	There is a model $M'$ s.t. $M' \models \phi \land \BAT^{BW_l-}_{con} \land \BAT^{BW_l}_{fma} \land \Psi_m$ and $M \identical_{\predSet^l \cup \set{\codingSym}} M'$.
	Clearly, $M' \models \phi \land \Psi_m$ and $M \identical_{\predSet^l} M'$.
	Hence, $M \models \forget(\phi \land \Psi_m, \predSet^l)$.
	
	We now prove that the if direction.
	Let $M$ be a model of $\forget(\phi \land \Psi_m, \predSet^l)$.
	We can construct a model $M'$ as follows.
	The domain of $M'$ contains $Num^{M'} + 2$ blocks.
	We use one block for the interpretation of symbol $C$ and $\codingSym(C) = 1$.
	We build up a tower whose bottom block is $C$ by the other $Num^{M'}$ blocks.
	We let $\codingSym(x)$ be the distance between block $x$ and $C$ plus $1$ for every block $x$ on this tower.	
	If $M \models Holding$, then we let the last block holding in $M'$; otherwise, the last block is not holding.
	We let $\codingSym(y) = 0$ for the last block $y$.			
	It is easily verified that $M' \models \phi \land \BAT^{BW_l-}_{con} \land \BAT^{BW_l}_{fma} \land \Psi_m$ and $M \identical_{\predSet^l \cup \set{\codingSym}} M'$.	
\end{proof}

Let $m$ be a flat refinement mapping and $\varphi$ a propositional $\exists$-formula over $\Phi(m)$.
We use $t(\varphi)$ to denote the formula obtained from $\varphi$ via replacing every occurrence of $\exists \vec{x}. \phi(\vec{x})$ in $\varphi$ by $P$ (resp. $f > 0$) where $m(P) \doteq \exists \vec{x}. \phi(\vec{x})$ (resp. $m(f) \doteq \countOper \vec{x}. \phi(\vec{x})$).

\begin{lemma} \label{lem:refMap}
	Let $m$ be a flat, consistent and syntax-irrelevant refinement mapping and $\varphi$ a propositional $\exists$-formula over $\Phi(m)$.
	Then, $\varphi \land \Psi_m \equiv t(\varphi) \land \Psi_m$.
\end{lemma}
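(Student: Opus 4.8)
The plan is to establish the stronger fact that $\Psi_m \models \varphi \equiv t(\varphi)$, from which $\varphi \land \Psi_m \equiv t(\varphi) \land \Psi_m$ follows immediately: any model of $\Psi_m$ assigns the same truth value to $\varphi$ and $t(\varphi)$, so conjoining each side with $\Psi_m$ preserves the set of models. Since $\varphi$ is a propositional $\exists$-formula over $\Phi(m)$, it is a Boolean combination of atoms $\exists \vec{x}. \phi(\vec{x})$ with $\phi \in \Phi(m)$, and $t(\varphi)$ replaces each such atom by the corresponding $P$ or $f > 0$ while leaving the Boolean skeleton untouched. I would therefore first prove the atom-level equivalence (that each $\exists \vec{x}. \phi(\vec{x})$ is equivalent to its $t$-image in every model of $\Psi_m$) and then lift it to all of $\varphi$ by a routine induction on the $\neg, \land, \lor$ structure.

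For the atom-level step I would use flatness to guarantee that every $\phi \in \Phi(m)$ originates either from a predicate fluent $P$ with $m(P) \doteq \exists \vec{x}. \phi(\vec{x})$ or from a functional fluent $f$ with $m(f) \doteq \countOper \vec{x}. \phi(\vec{x})$, and treat the two cases separately. In the predicate case, $\Psi_m$ contains the conjunct $P \equiv m(P)$, i.e.\ $P \equiv \exists \vec{x}. \phi(\vec{x})$, so in any model of $\Psi_m$ the atom $\exists \vec{x}. \phi(\vec{x})$ and its $t$-image $P$ coincide. In the functional case, $\Psi_m$ contains $f = m(f)$, i.e.\ $f = \countOper \vec{x}. \phi(\vec{x})$; combining this with the semantics of the counting operator, under which $\countOper \vec{x}. \phi(\vec{x}) > 0$ holds exactly when some tuple satisfies $\phi$, yields $(f > 0) \equiv (\exists \vec{x}. \phi(\vec{x}))$ in every model of $\Psi_m$, which is precisely the required equivalence between the atom and its $t$-image.

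The one place that needs genuine care is the functional case, specifically the justification of $\countOper \vec{x}. \phi(\vec{x}) > 0 \equiv \exists \vec{x}. \phi(\vec{x})$. I expect this to be the main obstacle only in the sense of getting the semantics right: the equality is a fact about the counting operator alone---the count is positive iff the counted set of tuples is nonempty, and zero otherwise---so it holds independently of any finiteness assumption and in particular needs no appeal to $\lowBAT_{fma}$. I would also note that the translation $t$ is well-defined: syntax-irrelevance forces distinct fluents to be assigned formulas with disjoint low-level predicate symbols (with consistency excluding the degenerate predicate-free formulas), so each atom $\exists \vec{x}. \phi(\vec{x})$ of $\varphi$ determines a unique $P$ or $f$, and the replacement is unambiguous. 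Flatness is the hypothesis carrying the real weight here, since it is what puts each $m(P)$ into the shape $\exists \vec{x}. \phi(\vec{x})$ that the atom-level argument relies on.
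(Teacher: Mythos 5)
Your proof is correct and follows essentially the same route as the paper's: induction on the Boolean structure of $\varphi$, with the atomic cases discharged by the conjuncts $P \equiv m(P)$ and $f = m(f)$ of $\Psi_m$ together with the observation that $\countOper \vec{x}.\phi(\vec{x}) > 0$ holds exactly when $\exists \vec{x}.\phi(\vec{x})$ does. The only (cosmetic) difference is that you carry the stronger invariant $\Psi_m \models \varphi \equiv t(\varphi)$ through the induction, which makes the negation case immediate, whereas the paper proves $\varphi \land \Psi_m \equiv t(\varphi) \land \Psi_m$ directly and handles negation by a short model-theoretic argument.
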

\begin{proof}
	We prove by induction on the structure of $\varphi$.
	
	\begin{itemize}
		\item $\varphi = \exists \vec{x}. \phi(\vec{x})$ and $m(P) \doteq \exists \vec{x}. \phi(\vec{x})$: It follows that $t(\varphi) = P$.
		Since $\Psi_m \models P \equiv \exists \vec{x}. \phi(\vec{x})$, $\exists \vec{x}. \phi(\vec{x}) \land \Psi_m \equiv P \land \Psi_m$.
		
		\item $\varphi = \exists \vec{x}. \phi(\vec{x})$ and $m(f) \doteq \countOper \vec{x}. \phi(\vec{x})$: It follows that $t(\varphi) = f > 0$.
		Since $\Psi_m \models f = \countOper \vec{x}. \phi(\vec{x})$, $\exists \vec{x}. \phi(\vec{x}) \land \Psi_m \equiv (\countOper \vec{x}. \phi(\vec{x}) > 0) \land \Psi_m$.
		Hence, $\exists \vec{x}. \phi(\vec{x}) \land \Psi_m \equiv f > 0 \land \Psi_m$.
		
		\item $\varphi = \neg \phi$: It follows that $t(\varphi) = \neg t(\phi)$.
		By the inductive hypothesis, $\phi \land \Psi_m \equiv t(\phi) \land \Psi_m$.
		We prove $\varphi \land \Psi_m \models t(\varphi) \land \Psi_m$.
		Let $M$ be a model of $\varphi \land \Psi_m$.
		Since $\Psi_m \doteq [\bigwedge_{P \in \predSet^h} P \equiv m(P) ] \land [\bigwedge_{f \in \funcSet^h} f = m(f)]$, $M \nmodels \phi \land \Psi_m$.
		Since $\phi \land \Psi_m \equiv t(\phi) \land \Psi_m$, $M \nmodels t(\phi) \land \Psi_m$.
		Hence, $M \models \neg t(\phi) \land \Psi_m$.
		The opposite direction that $t(\varphi) \land \Psi_m \models \varphi \land \Psi_m$ can be similarly proved.		
		
		\item $\varphi = \phi_1 \land \phi_2$: It follows that $t(\varphi) = t(\phi_1) \land t(\phi_2)$.	
		By the inductive hypothesis, $\phi_1 \land \Psi_m \equiv t(\phi_1) \land \Psi_m$ and $\phi_2 \land \Psi_m \equiv t(\phi_2) \land \Psi_m$.
		Hence, $\phi_1 \land \phi_2 \land \Psi_m \equiv t(\phi_1) \land t(\phi_2) \land \Psi_m$.
	\end{itemize}
\end{proof}

{\noindent \bf Proof of Proposition \ref{prop:consSynIrRefMap}}
\begin{proof}
	We first prove the only-if direction of the definition of forgetting (cf. Definition \ref{def:forget}).
	Let $M$ and $M'$ be two models s.t. $M \models \varphi \land \Psi_m$ and $M \identical_{\predSet^l} M'$.
	Clearly, the formula $m^{-1}(\varphi)$ is $t(\varphi) \land \bigwedge_{f \in \funcSet^h(t(\varphi))} f \geq 0$.
	By Lemma \ref{lem:refMap}, we get that $M \models t(\varphi) \land \Psi_m$.
	In addition, $\Psi_m \models \bigwedge_{f \in \funcSet^h} f = \countOper \vec{x}. \phi_f(\vec{x})$ where $m(f) \doteq \countOper \vec{x}. \phi_f(\vec{x})$, $M \models \bigwedge_{f \in \funcSet^h} f \geq 0$ and hence $M \models m^{-1}(\varphi)$.
	Furthermore, $M$ and $M'$ agree on everything except possibly on the interpretation of every symbol of $\predSet^l$.
	Hence, $M' \models m^{-1}(\varphi)$.
		
	We now prove that the if direction.
	Let $M'$ be a model of $m^{-1}(\varphi)$.
	Since $m$ is syntax-irrelevant and consistent, we can construct a model $M$ as follows.
	For every high-level predicate fluent $P$ s.t. $m(P) \doteq \exists \vec{x}. \phi(\vec{x})$: if $M' \models P$, we let $M \models P \land \exists \vec{x}. \phi(\vec{x})$; otherwise, we let $M \models \neg P \land \forall \vec{x}. \neg \phi(\vec{x})$.
	For every high-level functional fluent $f$ s.t. $m(f) \doteq \countOper \vec{x}. \phi(\vec{x})$: if $M' \models f > 0$, we let $M \models f > 0 \land \exists \vec{x}. \phi(\vec{x})$; otherwise, we let $M \models f = 0 \land \forall \vec{x}. \neg \phi(\vec{x})$.
	It is easily verified that $M \models m^{-1}(\varphi) \land \Psi_m$ and $M \identical_{\predSet^l} M'$.	
\end{proof}

\looseness=-1

\begin{lemma} \label{lem:absState}
	Suppose that two low-level situations $s_1$ and $s_2$ of two low-level models $M_1$ and $M_2$ are in the same abstract state.
	Then, for every high-level situation suppressed formula $\phi$, $M_1, v[s/s_1] \models m(\phi)[s]$ iff $M_{2}, v[s/s_{2}] \models m(\phi)[s]$.
\end{lemma}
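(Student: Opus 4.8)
The plan is to prove the statement by structural induction on $\phi$, with the two conditions defining ``same abstract state'' serving exactly as the base cases; the argument closely parallels that of Proposition~\ref{prop:isomorphic}. The key preliminary observation is that, since $\LangSCLIA$ is $\LIA$-definable, a high-level situation-suppressed term is built from integer constants, numeric variables and high-level functional fluents using $+$ and $-$, while a high-level situation-suppressed formula is built from $\true$, $\false$, high-level predicate fluents, the arithmetic atoms $e_1 = e_2$, $e_1 < e_2$ and $e_1 \cong_c e_2$, the connectives $\neg, \land, \lor$, and quantification over integer variables. Applying $m$ replaces each high-level predicate fluent $P$ by the low-level closed formula $m(P)$ and each high-level functional fluent $f$ by the low-level closed counting term $m(f)$, while leaving the surrounding $\LIA$ structure intact; hence every low-level fluent occurring in $m(\phi)$ lies inside some sub-expression $m(P)$ or $m(f)$, and these are the only places where $M_1$ at $s_1$ and $M_2$ at $s_2$ can disagree.

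First I would record a parallel claim for terms: for every high-level situation-suppressed term $e$ and every assignment $v$, the integer value of $m(e)[s]$ under $M_1, v[s/s_1]$ equals that under $M_2, v[s/s_2]$. This follows by induction on $e$ --- integer constants are interpreted identically, a numeric variable receives the same value because the same $v$ is used on both sides, the case $e = f$ is precisely condition~(2) of the definition, and the $+$ and $-$ cases use the induction hypothesis together with the fact that both low-level models (being models in $\LangSCExt$) interpret the integer sort and its arithmetic in the standard way. With this in hand, the induction on $\phi$ is routine: the case of a high-level predicate fluent is condition~(1); each arithmetic atom follows from the term claim and the standard interpretation of $=$, $<$ and congruence modulo $c$; $\true$, $\false$ and the Boolean connectives are immediate from the induction hypothesis.

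The only step requiring care is quantification. For $\phi = \exists x. \psi$ we have $M_1, v[s/s_1] \models \exists x. m(\psi)[s]$ iff $M_1, v[s/s_1, x/d] \models m(\psi)[s]$ for some integer $d$; because the two conditions defining ``same abstract state'' are asserted for \emph{every} variable assignment, the induction hypothesis remains available at the modified assignment $v[x/d]$, yielding the equivalence with $M_2, v[s/s_2, x/d] \models m(\psi)[s]$ and hence with $M_2, v[s/s_2] \models \exists x. m(\psi)[s]$. I therefore do not anticipate a real obstacle: the whole content of the lemma is that agreement on the atomic images $m(P)$ and $m(f)$ propagates through the $\LIA$ structure, and the only subtlety --- the quantifier step --- is precisely what forces the two defining conditions to be phrased for all assignments rather than for a single one.
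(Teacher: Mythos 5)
Your proposal is correct and follows exactly the route the paper takes: the paper's proof of this lemma is simply ``By induction on the structure of $\phi$,'' and your write-up is a faithful, more detailed elaboration of that induction, with the two clauses of the ``same abstract state'' definition serving as the atomic cases (the auxiliary claim for terms handling the functional-fluent images $m(f)$). Nothing in your argument diverges from or adds to the paper's intended proof beyond spelling out the routine cases.
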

\begin{proof}
	By induction on the structure of $\phi$.
\end{proof}

{\noindent \bf Proof of Proposition \ref{prop:execPreRefMap}}
\begin{proof}
	Let $m$ be a flat, consistent, syntax-irrelevant and propositional $\exists$-definable refinement mapping.
	Let $M_1$ and $M_2$ be two models of $\lowBAT$.
	Let $s_1 \in \Delta_S^{M_1}$ and $s_2 \in \Delta_S^{M_2}$ be two low-level $m$-reachable situations in the same abstract state.
	Let $\act$ be a high-level action where $m(\act) \doteq  \pi \vec{x}. \psi(\vec{x})?; A(\vec{x}, \vec{c})$. 

	We here only prove that the only-if direction.
	The if direction can be similarly proved.
	Suppose that $M_1, v[s/s_1] \models \exists s'. \Do(m(\act), s, s')$.
	It follows that $M_1, v[s/s_1] \models (\exists \vec{x}. \psi(\vec{x}) \land \Pi_{A}(\vec{x}, \vec{c}))[s]$.
	By the propositional $\exists$-definability of refinement mapping $m$, there is a propositional $\exists$-formula $\phi$ s.t. $\BAT^{l-}_{con} \land \lowBAT_{fma} \models \phi \equiv \exists \vec{x}. \psi(\vec{x}) \land \Pi_A(\vec{x}, \vec{c})$.
	Let $\varphi = m^{-1}(\phi)$ is a high-level situation-suppressed formula.
	
	Since $m$ and $m^{-1}$ in fact are two reverse translations, $m(\varphi) \equiv \phi$.
	Thus, $M_1, v[s/s_1] \models m(\varphi)[s]$.
	By Lemma \ref{lem:absState}, $M_2, v[s/s_2] \models m(\varphi)[s]$.
	It follows that $M_2, v[s/s_2] \models \phi[s]$ and $M_2, v[s/s_2] \models (\exists \vec{x}. \psi(\vec{x}) \land \Pi_A(\vec{x}, \vec{c}))[s]$.
	Hence, $M_2, v[s/s_2] \models \exists s'. \Do(\act, s, s')$.
\end{proof}

\end{document}